 \newtheorem{theorem}{Theorem}
  \newtheorem{lemma}[theorem]{Lemma}
 \newtheorem{definition}[theorem]{Definition}
  \newtheorem*{definition*}{Definition}
  \theoremstyle{remark}
    \def \N {\mathbb{N}}
    \def \R {\mathbb{R}}
\renewcommand{\geq}{\ensuremath{\geqslant}}
\renewcommand{\leq}{\ensuremath{\leqslant}}
\newcommand{\predict}{\hat{o}}
\newcommand{\approxratio}{r}
\def \R {\mathbb{R}}
\def \N {\mathbb{N}}
\newcommand{\minmaxsingle}{\textsc{MinMaxP }}
\newcommand{\minmaxmech}{\textsc{Minimum Bounding Box }}
\newcommand{\minsummech}{\textsc{Coordinatewise Median with Predictions }}
\newcommand{\cmp}{\text{CMP}}
\newcommand{\OPT}{C^e(o, P)}
\newcommand{\fo}{o}
\newcommand{\fa}{f}
\newcommand{\falg}{f}
\newcommand{\fopt}{o}
\newcommand{\famPR}{\mathcal{P}^R_{\text{coa}}}
\newcommand{\famPC}{\mathcal{P}^C_{\text{coa}}}
\newcommand{\famP}{COA}
\newcommand{\coaR}{\mathcal{P}^R_{\text{ca}}}
\newcommand{\coaC}{\mathcal{P}^C_{\text{ca}}}
\newcommand{\coa}{\textsc{CA}}
\newcommand{\ooaR}{\mathcal{P}^R_{\text{oa}}}
\newcommand{\ooaC}{\mathcal{P}^C_{\text{oa}}}
\newcommand{\ooa}{\textsc{OA}}
\newcommand{\fpred}{\hat{o}}
\DeclarePairedDelimiter{\floor}{\lfloor}{\rfloor}
\newcommand{\thres}{\lceil \frac{(1+c)n}{2} \rceil}
\title{Learning-Augmented Mechanism Design:\\  Leveraging Predictions for Facility Location}
\author[a]{Priyank Agrawal\thanks{pa2608@columbia.edu}}
\author[a]{Eric Balkanski\thanks{eb3224@columbia.edu}}
\author[b]{Vasilis Gkatzelis\thanks{gkatz@drexel.edu}}
\author[a]{Tingting Ou\thanks{ to2372@columbia.edu}}
\author[b]{Xizhi Tan\thanks{xizhi@drexel.edu}}
\affil[a]{Columbia University, IEOR}
\affil[b]{Drexel University, Computer Science}
\begin{document}

\date{}
\maketitle

\begin{abstract}
In this work we introduce an alternative model for the design and analysis of strategyproof mechanisms that is motivated by the recent surge of work in ``learning-augmented algorithms''. Aiming to complement the traditional approach in computer science, which analyzes the performance of algorithms based on worst-case instances, this line of work has focused on the design and analysis of algorithms that are enhanced with machine-learned predictions regarding the optimal solution. The algorithms can use the predictions as a guide to inform their decisions, and the goal is to achieve much stronger performance guarantees when these predictions are accurate (consistency), while also maintaining near-optimal worst-case guarantees, even if these predictions are very inaccurate (robustness). So far, these results have been limited to algorithms, but in this work we argue that another fertile ground for this framework is in mechanism design. 

We initiate the design and analysis of strategyproof mechanisms that are augmented with predictions regarding the private information of the participating agents. To exhibit the important benefits of this approach, we revisit the canonical problem of facility location with strategic agents in the two-dimensional Euclidean space. We study both the egalitarian and utilitarian social cost functions, and we propose new strategyproof mechanisms that leverage predictions to guarantee an optimal trade-off between consistency and robustness guarantees. This provides the designer with a menu of mechanism options to choose from, depending on her confidence regarding the prediction accuracy. Furthermore, we also prove parameterized approximation results as a function of the prediction error, showing that our mechanisms perform well even  when the predictions are not fully accurate.
\end{abstract}

\newpage

\section{Introduction}
For more than half a century, the dominant approach for the mathematical analysis of algorithms in computer science has been worst-case analysis. On the positive side, a worst-case guarantee provides a useful signal regarding the robustness of the algorithm. 
However, it is well-known that the worst-case analysis can be unnecessarily pessimistic, often leading to uninformative bounds or impossibility results that may not reflect the real obstacles that arise in practice. 
These crucial shortcomings of worst-case analysis are making it increasingly less relevant, especially in light of the impressive advances in machine learning that give rise to very effective algorithms, most of which do not admit any non-trivial worst-case guarantees.

Motivated by this tension between worst-case analysis and machine learning algorithms, a surge of recent work is aiming for the best of both worlds by designing robust algorithms that are guided by machine-learned predictions. The goal of this exciting new literature on ``algorithms with predictions'' is to combine the \emph{robustness} of worst-case guarantees with \emph{consistency} guarantees, which prove stronger bounds on the performance of an algorithm whenever the prediction that it is provided with is accurate. 

A lot of this work has focused on dynamic settings, where the input arrives over time and the algorithm needs to make irrevocable decisions before observing the whole input. In contrast to traditional online algorithms, which are assumed to have no information regarding the remaining input, learning-augmented algorithms are provided with a prediction regarding this input. An ideal algorithm is one that performs very well if this prediction is accurate, i.e., it has good consistency, but that also achieves a near-optimal worst-case guarantee, even when the prediction is (arbitrarily) inaccurate, i.e., it has good robustness. A flurry of papers published during the last four years have proposed novel algorithms that achieve non-trivial trade-offs between robustness and consistency (see \citep{mitzenmacher2020algorithms} for a survey of some of the initial results).

In this paper, we argue that another fertile ground for the use of predictions is in \emph{mechanism design}. In contrast to online algorithms, whose information limitations are regarding the future, the main obstacle in mechanism design is the fact that part of the input is private information that only the agents know. To overcome this obstacle, a mechanism can ask the agents to report this information but, since they are strategic, they can misreport it if this leads to an outcome that they prefer. The field of mechanism design has proposed solutions to this problem, but their worst-case guarantees are often underwhelming from a practical perspective. However, if these mechanisms are provided with some predictions regarding (part of) the missing information, this could allow the designer to reach more efficient outcomes despite the incentives of the participants. 

In this paper, we propose a model for designing and evaluating strategyproof mechanisms that are enhanced with predictions, which has the potential to transform the mechanism design literature. At the core of this research agenda lies the following fundamental question:

\begin{center}
     \textit{Can learning-augmented mechanisms achieve good robustness and consistency trade-offs?}
\end{center}

Given a mechanism equipped with a prediction, we can parameterize the worst-case performance guarantee of the mechanism, using the error $\eta$ of the prediction. When the prediction is accurate, i.e., $\eta=0$, then the resulting guarantee is called the \textbf{consistency} of the mechanism. The worst case guarantee irrespective of the error, i.e., the worst-case over all values of $\eta$, is called the \textbf{robustness} of the mechanism.
An ideal mechanism would yield guarantees that gracefully transition from optimal performance when the prediction is correct (perfect consistency) to the best-known worst-case performance as the error increases (perfect robustness), thus capturing the best of both worlds. However, this is impossible in many settings: to achieve perfect consistency a mechanism needs to ``trust'' the prediction, in the sense that it always outputs a solution that is optimal according to the prediction. Yet, if the prediction is incorrect, this solution might be arbitrarily bad, causing unbounded robustness. Our goal is to evaluate how close to this ideal mechanism we can get, i.e., to achieve the best possible trade-off between robustness and consistency.

To exhibit the important benefits of adapting this framework to mechanism design and to gain some insights regarding how predictions could be used by strategyproof mechanisms, we focus on the canonical problem of \emph{facility location}. Apart from the fact that this problem has been the focus of a very long line of literature (e.g., see \citep{PT13,alon2010strategyproof,escoffier2011strategy,feldman2013strategyproof,fotakis2014power,fotakis2016strategyproof,serafino2016heterogeneous,walsh2020strategy} and the recent survey by \cite{chan2021facilitylocation}), it has also been previously used as a natural application domain for exhibiting the potential benefits of new mechanism design models~\citep{PT13,PWZ18}. 

In an instance of the facility location problem in $\R^2$, we are given a set of $n$ agents, with each agent $i$ having a preferred location $p_i \in \R^2$, and we need to choose at which location $f\in \R^2$ to build a facility that will be serving the agents. Once the location of the facility has been determined, each agent suffers a cost that is equal to the Euclidean distance between her preferred location and $f$, and our goal is to choose $f$ that minimizes the social cost. 
In this paper we consider both the minimization of the egalitarian social cost (i.e., the maximum cost over all agents) and the utilitarian social cost (i.e., the average cost over all agents), and the main obstacle is that the preferred location $p_i$ of each agent $i$ is private information to the agent and they can choose to misreport it if this could reduce their cost (e.g., by affecting the facility location choice in their favor). To ensure that the agents will not lie, this research has restricted its attention to strategyproof mechanisms, limiting the extent to which the social cost functions can be optimized. 

\subsection{Our results}

Using the facility location problem as a case study, we exhibit the benefits of adapting the learning-augmented framework in mechanism design. In the facility location problem, the information that the designer is missing is the preferred location of each agent, so our goal is to design practical strategyproof mechanisms that are provided with predictions regarding this information. Rather than assuming that the prediction provides the mechanism with a detailed estimate regarding \emph{all} of the private information, i.e., the preferred location of each agent, we instead consider a less demanding prediction that provides an aggregate signal regarding this information. Specifically, we assume that the mechanisms are provided with only a single point $\hat{o}$, corresponding to a prediction regarding the optimal location for the facility. Note that  this point could readily be computed using the predicted location of each agent, so this prediction requires less information and is easier to estimate. Our results focus on mechanisms that are deterministic and anonymous (they do not discriminate among agents based on their identity), which is a well-studied class of mechanisms in the context of facility location.

\vspace{5pt}
\textbf{Egalitarian social cost.} We first focus on the problem of minimizing the egalitarian social cost and, as a warm-up, on the single-dimensional version of the problem, where all the points lie in $\R$. For this version of the problem, there exists a deterministic and strategyproof mechanism that achieves a $2$-approximation, and this is the best possible approximation in this class of mechanisms. Our result for this case is a deterministic strategyproof mechanism, augmented with a prediction regarding the optimal location for the facility, that achieves the best of both worlds. It returns the optimal solution whenever the prediction is correct  ($1$-consistency), but without sacrificing its worst-case guarantee: it always guarantees a $2$-approximation irrespectively of the prediction quality ($2$-robustness).

We then move on to the two-dimensional version of the problem, for which prior work has produced an optimal deterministic strategyproof mechanism achieving a $2$ approximation \citep{alon2010strategyproof,GH21}. Once again, we are able to achieve perfect consistency, but this time this comes at a small cost in terms of the worst-case guarantee, as we achieve a robustness of $1+\sqrt{2}$. A natural question at this point is whether this loss in robustness was required for us to get the strong consistency guarantee. Our next result shows that this is indeed the case: in fact, to achieve any consistency guarantee better than the trivial $2$ bound, any deterministic anonymous and strategyproof mechanism would have to guarantee robustness no better than $1+\sqrt{2}$. Therefore, our proposed mechanism provides an optimal trade-off between robustness and consistency. Finally, we also provide a more general approximation guarantee for our mechanism as a function of the prediction error, proving that it maintains improved performance guarantees even where the prediction is not fully accurate.

\vspace{5pt}
\textbf{Utilitarian social cost.} We then study the problem of minimizing the utilitarian social cost. The single-dimensional case of this problem can be solved optimally using a deterministic, anonymous, and strategyproof mechanism, so we proceed directly to two-dimensions.
In this case, there is a deterministic, anonymous, and strategyproof mechanism that achieves a $\sqrt{2}$-approximation, which is optimal for this class of mechanisms. We provide a family of mechanisms, parameterized by a ``confidence value'' $c\in [0, 1)$ that the designer can choose depending on how much they trust the prediction. If the designer is confident that the prediction is of high quality, then they can choose a higher value of $c$, which provides stronger consistency guarantees, at the cost of deteriorating robustness guarantees. Specifically, we prove that our deterministic and anonymous mechanism is $\sqrt{2c^2+2}/(1+c)$-consistent and $\sqrt{2c^2+2}/(1-c)$-robust (See Figure~\ref{fig:minsumperformance} for a plot of the trade-off provided by this mechanism). This result exhibits one of the important advantages of the learning-augmented framework, which is to provide the user with more control over the trade-off between worst-case guarantees and more optimistic guarantees when good predictions are available. In fact, we prove that our mechanisms are optimal: no deterministic, anonymous, and strategyproof mechanism can achieve a better trade-off between robustness and consistency guarantees, so our mechanisms exactly capture the Pareto frontier for this problem. Finally, we once again extend our approximation results as a function of the prediction error, verifying that the mechanism achieves improved worst-case guarantees even if the prediction is not fully accurate.

\subsection{Related work}
The learning-augmented mechanism design framework, proposed in this paper, is part of a long literature on alternative performance measures aiming to avoid the limitations of worst-case analysis. A detailed overview of such measures can be found in the ``Beyond the Worst-case Analysis of Algorithms" book edited by  \citet{roughgarden2021beyond}.

\vspace{5pt}
\noindent
\textbf{Learning-augmented algorithms.}
Specifically, this framework extends the recent work on ``learning-augmented algorithms'' (or ``algorithms with predictions''), which focuses on algorithm design and aims to overcome worst-case bounds by assuming that the algorithm is provided with predictions regarding the instance at hand (see  \citep{mitzenmacher2020algorithms} for a survey of the early work in this area). \citet{lykouris2018competitive} introduced   consistency and robustness, which are the two primary metrics used to analyze the performance of algorithms in the learning-augmented design framework. There is a long list of classic algorithmic problems that have been studied in that framework, including online paging \citep{lykouris2018competitive},  scheduling \citep{KPZ18}, and secretary problems \citep{dutting2021secretaries,AGKK20}, optimization problems with covering \citep{BMS20} and knapsack constraints \citep{im2021online}, as well as Nash social welfare maximization \citep{banerjee2020online} and several graph \citep{azar2022online} problems. We note that this line of work has also studied facility location problems \citep{FGGP21,jiang2021online}. However, the crucial difference is that these papers are restricted to non-strategic settings, and the predictions are used to overcome information limitations regarding the future, rather than limitations regarding privately held information.
\cite{medina2017revenue} use bid predictions in auctions to learn reserve prices and yield revenue guarantees as a function of the prediction error, but provide no bounded robustness guarantees.

\vspace{5pt}
\noindent
\textbf{Strategic facility location.}
The facility location problem in the presence of strategic agents has been extensively studied and serves as a canonical mechanism design problem. For example, it was used as the case study that initiated the literature on approximate mechanism design without money \citep{PT13}. For single facility location in one dimension, i.e., on the line, the  mechanism that places the facility at the median over all the preferred points reported by the agents is strategyproof, optimal for the utilitarian social cost objective, and  achieves a $2$-approximation for the egalitarian social cost objective, which is the best approximation achievable by any   deterministic and strategyproof mechanism \citep{PT13}. 
In the two-dimensional Euclidean space, a generalization of this mechanism, the ``coordinatewise median'' mechanism (defined in Section~\ref{sec:prelims}), achieves a $\sqrt{2}$-approximation for the utilitarian objective~\citep{meir2019strategyproof}, and a $2$-approximation for the egalitarian objective~\citep{GH21}. These approximation bounds are both optimal among deterministic and strategyproof mechanisms. 
Additional settings that have been studied include general metric spaces~\citep{alon2010strategyproof} and $d$-dimensional Euclidean spaces \citep{meir2019strategyproof, walsh2020strategy, el2021strategyproofness, GH21}, circles \citep{alon2010strategyproof, meir2019strategyproof}, and trees \citep{alon2010strategyproof, feldman2013strategyproof}. 
Finally, some fundamental results on strategyproof facility location focus on characterizing the space of startegyproof mechanisms. For the single-dimensional case, the characterization of \cite{M80} implies that all deterministic strategyproof mechanisms correspond to the family of ``general median mechanisms'' (defined in Section~\ref{sec:prelims}). For the two-dimensional case, an analogous characterization was subsequently provided by \cite{PSS93}. A more detailed discussion regarding prior work on this problem is provided in the recent survey by \cite{chan2021facilitylocation}.

%%%%%%%%%%%%%%%%%%%%%%%%%%%

\section{Preliminaries}
\label{sec:prelims}
In the single facility location problem in the two-dimensional Euclidean space, the goal is to choose a location $f\in \R^2$ for a facility, aiming to serve a group of $n$ agents. Each agent $i$ has a preferred location $p_i\in \R^2$ and, once the facility location is chosen, that agent suffers a cost $d(f, p_i)$, corresponding to the Euclidean distance between her preferred location and the chosen location. Given a set of preferred locations $P = (p_1, \ldots, p_n)$ for the agents, the two standard social cost functions that prior worked has aimed to minimize are the egalitarian social cost $C^e(f, P) = \max_{p \in P} d(f, p)$ (i.e., the maximum cost over all agents) and the utilitarian social cost $C^u(f, P) = \sum_{p \in P} d(f, p) / n$ (i.e., the average cost over all agents). Given some social cost function, we denote the optimal facility location by $o(P) = (x_o(P), y_o(P))$, or $o$ when $P$ is clear from the context.

In the strategic version of the facility location problem, the preferred location $p_i$ of each agent $i$ is private information. Therefore, to minimize the social cost a mechanism needs to ask the agents to report their preferred locations, $P\in \R^{2n}$, and then use this information to determine the facility location $f(P)$. However, the goal of each agent is to minimize their own cost, so they can choose to misreport their preferred location if that can reduce their cost. A mechanism $f: \R^{2n} \rightarrow \R^2$ is \emph{strategyproof} if truthful reporting is a dominant strategy for every agent, i.e., for all instances $P \in \R^{2n}$, every agent $i \in [n]$, and every deviation $p_i' \in \R^2$, we have that $d(p_i, f(P)) \leq d(p_i, f(P_{-i}, p_i'))$.

A strategyproof mechanism that plays a central role in the strategic facility location problem is the \emph{Coordinatewise Median}  (CM) mechanism. This mechanism  takes as input the locations $P = \{(x_i, y_i)\}_{i \in [n]}$ of the $n$ agents and determines the facility location by considering each of the two coordinates separately. The $x$-coordinate of the facility is chosen to be the median of $\{x_i\}_{i \in [n]}$, i.e., the median of the $x$-coordinates of the agents' locations, and its $y$-coordinate is the median of $\{y_i\}_{i \in [n]}$ (if $n$ is even, we assume the smaller of the two medians is returned).
The more general class of \emph{Generalized Coordinatewise Median}  (\textsc{GCM}) mechanisms take as input the locations $P$ of the $n$ agents, as well as a multiset $P'$ of points that are constant and independent of the locations reported by the agents, and outputs $\textsc{CM}(P \cup P')$. In other words, a GCM mechanism is the coordinatewise median mechanism over the locations of the agents and the additional constant points $P'$ chosen by the designer (often called \emph{phantom} points). Apart from deterministic and strategyproof, any \textsc{GCM} mechanism is also \emph{anonymous}: its outcome does not depend on the identity of the agents, i.e., it is invariant under a permutation of the agents.

In the learning-augmented mechanism design framework proposed in this paper, before requesting the set of preferred locations $P$ from the agents, the designer is provided with a prediction $\predict$ regarding the optimal facility location $o(P)$. The designer can use this information to choose the rules of the mechanism but, as in the standard strategic facility location problem, the final mechanism, denoted $f(P, \hat{o})$, needs to be strategyproof. In essence, if there are multiple strategyproof mechanisms the designer can choose from, the prediction can guide their choice, aiming to achieve improved guarantees if the prediction is accurate (consistency), but retaining some worst-case guarantees (robustness).\footnote{An alternative interpretation is that there is a \emph{single} publicly known mechanism that takes as input the prediction and the agents' reports, and the agents know what the prediction is prior to reporting their preferred locations.}
Consistency and robustness are the standard measures in algorithms with predictions \citep{lykouris2018competitive}. Given some social cost function $C$, a mechanism is \emph{$\alpha$-consistent} if it achieves an $\alpha$-approximation ratio when the prediction is correct ($\hat{o}=o(P)$), i.e.,
\[\max_{P} \left\{\frac{C(f(P, o(P)), P)}{ C(o(P), P)}\right\} \leq \alpha.\] 
A mechanism is \emph{$\beta$-robust} if it achieves a $\beta$-approximation ratio even when the predictions can be arbitrarily wrong, i.e., if \[\max_{P, \hat{o}} \left\{\frac{ C(f(P, \hat{o}), P) }{ C(o(P), P)}\right\} \leq \beta.\]
Note that any known strategyproof mechanism that guarantees a $\gamma$-approximation without predictions directly implies bounds on the achievable robustness and consistency. The designer could just disregard the prediction and use this mechanism to achieve $\gamma$-robustness. However, this mechanism would also be no better than $\gamma$-consistent, since it ignores the prediction. The main challenge is to achieve improved consistency guarantees, without sacrificing too much in terms of robustness.

For an even more refined understanding of the performance of a learning-augmented mechanism, one can also prove worst-case approximation ratios as a function of the prediction error $\eta \geq 0$. In facility location, we let the error $\eta(\hat{o}, P) = d(\hat{o}, o(P))/C(o(P),P)$ be the distance between the predicted optimal location $\hat{o}$ and the true optimal location $o(P)$, normalized by the optimal social cost. Given a bound $\eta$ on the prediction error, a mechanism achieves a $\gamma(\eta)$-approximation if 
\[\max_{P, \hat{o}:~ \eta(\hat{o}, P) \leq \eta}  \left\{\frac{C(f(P, \hat{o}), P)}{ C(o(P), P)}\right\} \leq \gamma(\eta).\]
Note that for $\eta=0$ this bound corresponds to the consistency guarantee and for $\eta\to \infty$ it captures the robustness guarantee. If this bound does not increase too fast as a function of $\eta$, then the mechanism may achieve improved guarantees even if the prediction is not fully accurate.

%%%%%%%%%%%%%%%%%%%%%%%%%%%%

\section{Minimizing the egalitarian social cost}

We start by focusing on the egalitarian social cost function, for which no deterministic and strategyproof mechanism can achieve better than a $2$-approximation, even for the one-dimensional case~\citep{PT13}. As a warm-up, we first provide a deterministic, strategyproof, and anonymous mechanism that is $2$-robust, thus matching the best possible worst-case approximation guarantee, but also $1$-consistent, thus combining the best of both worlds. Then, in Section~\ref{sec:MBB} we extend this mechanism to the two-dimensional case and we prove that it is $1$-consistent and $(1+\sqrt{2})$-robust. 
%This mechanism is thus optimal if the prediction is correct while simultaneously achieving a  $1+\sqrt{2}$ approximation even if the prediction is arbitrarily wrong. 
In Section~\ref{sec:egalitarianlb}, we complement this result by showing that our mechanism is Pareto optimal: we prove that  $1+\sqrt{2}\approx 2.41$ is the best robustness achievable by any deterministic, strategyproof, and anonymous mechanism that achieves any consistency better than $2$ (note that a consistency of $2$ can be trivially achieved by disregarding the predictions and running the coordinatewise median mechanism). Our last result, in Section~\ref{sec:errorEgalit}, goes beyond the robustness and consistency guarantees to provide a more refined bound as a function of the prediction error. Specifically, the approximation achieved by our mechanism degrades linearly from $1$ to $1+\sqrt{2}$ as a function of the prediction error.

\subsection{Warm-up: facility location on the line}

As a warm-up,  we first consider the single-dimensional case of the problem, where $p_i \in \mathbb{R}$ for every agent $i$. 
For this special case, a simple deterministic mechanism that returns the median of the points in $P$ is strategyproof, as well as a $2$-approximation of the egalitarian social welfare, which is the best possible approximation among deterministic and strategyproof mechanisms~\citep{PT13}. Our first result in the learning-augmented framework shows that this worst-case guarantee can be combined with perfect consistency.

Given a prediction $\hat{o}$ regarding the optimal facility location, we propose the \minmaxsingle mechanism, formally defined as Mechanism~\ref{mec:oned}. This mechanism uses the prediction as the default facility location choice, unless the prediction lies ``on the left'' of all the points in $P$ or ``on the right'' of all the points in $P$. In the former case, the facility is placed at the leftmost point in $P$ instead, and in the latter it is placed at the rightmost point in $P$.

\vspace{.2cm}

\begin{algorithm}[H]
\textbf{Input:} points $(p_1, \ldots, p_n) \in \R^n$, prediction $\fpred \in \R$
\uIf{$\fpred \in [\min_{i} p_i, \max_{i} p_i]$}{
 \Return{$\fpred$}
}
\uElseIf{$\fpred < \min_{i} p_i$}{
\Return{$\min_{i} p_i$}}
\Else{
\Return{$\max_{i} p_i$}}
\caption{\minmaxsingle mechanism for egalitarian social cost in one dimension.}
\label{mec:oned}
\end{algorithm}

\vspace{.2cm}

We show that \minmaxsingle  is a deterministic, strategyproof, and anonymous mechanism that is $1$-consistent and $2$-robust.  This mechanism thus achieves the best of both worlds: when the prediction is correct,  it yields an optimal outcome, and when the prediction is incorrect, the approximation factor never exceeds $2$, which is the best-possible worst-case approximation. In essence, the prediction provides a “focal point” that the mechanism can use, allowing it to achieve the optimal consistency without compromising strategyproofness. 

\begin{theorem}\label{thm:1dminmax}
The \minmaxsingle mechanism is deterministic, strategyproof, and anonymous, and it is $1$-consistent and $2$-robust for facility location on the line and the egalitarian objective.
\end{theorem}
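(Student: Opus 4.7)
The plan is to verify each of the five claimed properties in turn: determinism and anonymity are immediate, $1$-consistency and $2$-robustness reduce to a short case analysis on where $\hat{o}$ falls relative to $[\min_i p_i, \max_i p_i]$, and the main work is strategyproofness. The most efficient way to handle strategyproofness is \emph{not} a direct incentive argument on the three cases of the mechanism (which gets fiddly because an agent can in principle manipulate which case the mechanism falls into), but rather to recognize \minmaxsingle as a \emph{generalized median mechanism} in the sense of Moulin, for which strategyproofness is already known.

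Concretely, I would first show that the output of \minmaxsingle equals the median of the multiset obtained by taking the $n$ agent reports together with $n-1$ phantom copies of the prediction $\hat{o}$. The check is a short case split on the rank of $\hat{o}$ among the sorted reports $p_{(1)}\leq\dots\leq p_{(n)}$: if $\hat{o}\leq p_{(1)}$, the $n-1$ phantoms sit at the bottom and the $n$-th order statistic of the combined multiset is $p_{(1)}=\min_i p_i$; symmetrically, if $\hat{o}\geq p_{(n)}$, it is $p_{(n)}=\max_i p_i$; and if $\hat{o}\in[p_{(1)},p_{(n)}]$, the $n-1$ copies of $\hat{o}$ collectively straddle position $n$, so the median is $\hat{o}$. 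In each case this matches the three branches of Mechanism~\ref{mec:oned}. Since the phantom locations are fixed constants (independent of the reports, as $\hat{o}$ is given before agents report), Moulin's characterization then yields strategyproofness, and also anonymity and determinism are visible from the formula.

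For $1$-consistency, I would observe that for the egalitarian objective on the line, the optimum $o(P)$ is the midpoint of $\min_i p_i$ and $\max_i p_i$ and therefore always lies in $[\min_i p_i,\max_i p_i]$. Hence when $\hat{o}=o(P)$, the mechanism enters its first branch and returns $\hat{o}=o(P)$, giving cost equal to the optimum. For $2$-robustness, let $L=\min_i p_i$, $R=\max_i p_i$, so $\mathrm{OPT}=(R-L)/2$. If the output is $\hat{o}\in[L,R]$, its egalitarian cost is $\max(\hat{o}-L,R-\hat{o})\leq R-L=2\,\mathrm{OPT}$; if the output is $L$ (when $\hat{o}<L$) or $R$ (when $\hat{o}>R$), the cost is exactly $R-L=2\,\mathrm{OPT}$. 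Thus robustness is at most $2$ in every case.

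The only step that is not entirely mechanical is the identification with a generalized median, and that is the step I would write out carefully; the rest amounts to a one-line argument per property.
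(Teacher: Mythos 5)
Your proposal is correct and takes essentially the same approach as the paper: the consistency and robustness arguments are identical, and your identification of \minmaxsingle with the median of the $n$ reports plus $n-1$ phantom copies of $\hat{o}$ is precisely the alternative strategyproofness argument the paper itself gives (it leads with a direct incentive case analysis but explicitly notes the generalized-median equivalence). Your case split verifying that equivalence is sound, so nothing is missing.
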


\begin{proof}
To show that the mechanism is strategyproof, consider any agent $i$ and, without loss of generality, assume that $p_i\leq \hat{o}$, i.e., that the agent's true preferred location is weakly on the left of the prediction. We consider two cases, depending on whether $p_i$ is weakly greater than all the locations reported by the other agents or not. If it is, this means that if $i$ reported truthfully, the mechanism would place the facility at $p_i$ and $i$ would clearly have no incentive to lie. If, on the other hand, $p_i$ is not weakly greater than all the other reported locations, then the returned location $f$ if $i$ reported the truth would be on the right of $p_i$, i.e., $f>p_i$. However, it is easy to verify that if agent $i$ reported a false point $p'_i< p_i$, this would not affect the outcome, and if she reported a false point $p'_i>p_i$, this could only move $f$ further away from $p_i$. Therefore, it is a dominant strategy for $i$ to report the truth.
An alternative way to verify the fact that this mechanism is strategyproof is by observing that it is actually equivalent to a Generalized Coordinatewise Median (defined in Section~\ref{sec:prelims}) with the set $P'$ of constant points containing $n-1$ copies of the prediction $\hat{o}$. To verify this, note that if $\hat{o}\in [\min_{i} p_i,  \max_i p_i]$, then the median of $P\cup P'$ would be $\hat{o}$, otherwise it would be either the leftmost or the rightmost point of $P$, just like the \minmaxsingle mechanism.

Now, to verify the consistency guarantee, consider any instance where the prediction $\hat{o}$ is accurate. Since the truly optimal location for the egalitarian social welfare is the middle of the leftmost and rightmost point in $P$, then we know that whenever $\hat{o}$ is accurate, it must be that $\fpred \in [\min_{i} p_i, \max_{i} p_i]$. As a result, for any such instance the mechanism will place the facility at the optimal location, $\hat{o}$, leading to a consistency of $1$.

Finally, to verify that this mechanism is $2$-robust, note that the facility location $f$ that the mechanism returns always satisfies $f \in [\min_{i} p_i, \max_{i} p_i]$. As a result, the egalitarian social cost is at most $\max_{i} p_i - \min_i p_i$. On the other hand, the optimal egalitarian social cost is equal to $(\max_{i} p_i - \min_i p_i)/2$, implying the $2$-robustness guarantee.
\end{proof}

\subsection{The minimum bounding box mechanism}\label{sec:MBB}

We now move on to the two-dimensional case, i.e., $p_i \in \R^2$ for every agent $i$, which is the main focus of the paper. 
%The best approximation achievable by any deterministic strategyproof mechanism for the egalitarian objective in two dimensions is $2$, as in the one-dimension special case, and this approximation can be achieved by the Coordinatewise Median (\textsc{CM}) mechanism~\cite{GH21}.
We extend the \minmaxsingle mechanism to this setting by running it separately for each of the two dimensions (see Mechanism~\ref{mec:twod}). An alternative, more geometric, description of this mechanism is that it first computes the minimum axis-parallel bounding box of the set $P$ of agent locations and then places the facility at the location within that box that is closest to the predicted optimal location. We therefore call it the \minmaxmech mechanism.

%If the predicted optimal location lies within that box, then we return this location. Otherwise we project the predicted location to the point.%\xnote{Maybe the algorithm is too detailed now not sure how to simplify it.}
%\vnote{Yes, we can maybe give the 1-dimensional version a name, and just call it here for each of the dimensions}

\vspace{.2cm}

\begin{algorithm}[H]
\textbf{Input:} points $((x_1, y_1), \ldots, (x_n, y_n)) \in \R^{2n}$, prediction $(x_{\fpred}, y_{\fpred})  \in \R^2$

$x_f = \minmaxsingle \hspace{-.13cm} ((x_1, \ldots, x_n) , x_{\fpred} )$\\
$y_f = \minmaxsingle \hspace{-.13cm} ((y_1, \ldots, y_n) , y_{\fpred} )$\\
\Return{$(x_f, y_f)$}
\caption{\minmaxmech mechanism for egalitarian social cost in two dimensions.}
\label{mec:twod}
\end{algorithm}

\vspace{.2cm}

%\vnote{I think we could have a theorem claiming that this is strategyproof, and in the proof itself we could explain why, and also take the opportunity to point out that this can be thought of as an instance of the generalized coordinatewise median mechanism, which would also imply its strategyproofness.}

%\subsection{Performance guarantees}\label{sec:perfEgalit}
We now show that the \minmaxmech mechanism is strategyproof, that it places the facility at the optimal location when the prediction is correct ($1$-consistency), and that it achieves a $1+\sqrt2\approx 2.41$ approximation even when the prediction is arbitrarily incorrect ($1+\sqrt2$-robustness), which is only a slight drop relative to the best achievable approximation, which is $2$.

\begin{theorem}\label{thm:2dminmax}
The \minmaxmech mechanism is deterministic, strategyproof, and anonymous and it is $1$-consistent and $1+\sqrt2$-robust for the egalitarian objective.
\end{theorem}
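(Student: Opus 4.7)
The plan has three parts: strategyproofness and anonymity first, then $1$-consistency, and finally the $(1+\sqrt 2)$-robustness bound, which is the main technical step.

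First I would establish strategyproofness and anonymity by observing that the \minmaxmech mechanism is exactly a two-dimensional Generalized Coordinatewise Median whose phantom multiset $P'$ consists of $n-1$ copies of the prediction $\hat o$, applied to each axis separately. This follows from the equivalence argument already used inside the proof of Theorem~\ref{thm:1dminmax}. Since any \textsc{GCM} mechanism is strategyproof and anonymous, both properties transfer. For a self-contained check of strategyproofness one may also argue directly: the facility's $x$-coordinate depends only on the $x$-reports plus $x_{\hat o}$, and similarly in $y$. Because $d(p_i,f)^2 = (x_i-x_f)^2 + (y_i-y_f)^2$ decomposes additively into two subproblems controlled by disjoint inputs, minimizing each term is sufficient, and each is minimized by truthful reporting by the $1$-dimensional strategyproofness of \minmaxsingle.

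For $1$-consistency, I would use the standard fact that the optimal egalitarian location $o$ lies in the convex hull of $P$, hence inside the axis-aligned bounding box of $P$. A quick self-contained justification: if $x_o > \max_i x_i$, then replacing $x_o$ by $\max_i x_i$ strictly decreases $|x_i - x_o|$ for every $i$, and hence $d(o,p_i)$, contradicting the optimality of $o$; the other three directions are analogous. Therefore when $\hat o = o$, both coordinates of $\hat o$ fall inside the agents' coordinate ranges, each \minmaxsingle call returns the corresponding coordinate of $\hat o$, and the mechanism outputs $\hat o = o$.

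The robustness bound is the main geometric step, where the right constant needs to emerge. Let $R = C^e(o,P)$. Every agent point satisfies $d(p_i,o)\leq R$, which forces $|x_i-x_o|\leq R$ and $|y_i-y_o|\leq R$; hence the bounding box of $P$ is contained in the axis-aligned square $[x_o-R,x_o+R]\times[y_o-R,y_o+R]$. By construction, the mechanism's output $f$ lies in the bounding box of $P$, so $|x_f - x_o|\leq R$ and $|y_f - y_o|\leq R$, yielding $d(f,o)\leq \sqrt{R^2+R^2}=\sqrt 2\, R$. The triangle inequality then gives, for every agent $i$,
\[ d(f,p_i) \leq d(f,o) + d(o,p_i) \leq \sqrt 2\, R + R = (1+\sqrt 2)\, R, \]
and taking the maximum over $i$ proves $C^e(f,P)\leq(1+\sqrt 2)\, C^e(o,P)$. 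The subtlety here, and the part I expect to require the most care, is getting exactly the constant $1+\sqrt 2$: naive bounds via the diameter of the bounding box would only yield $2\sqrt 2$. The sharper constant comes from bounding $d(f,o)$ directly, using that both $f$ and $o$ lie inside the same $2R\times 2R$ axis-aligned square, and then invoking the triangle inequality rather than bounding $d(f,p_i)$ through the diagonal of the bounding box.
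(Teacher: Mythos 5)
Your proposal is correct and follows essentially the same route as the paper's proof: strategyproofness via the coordinatewise decomposition and the equivalence to a \textsc{GCM} with $n-1$ phantom copies of $\hat{o}$, consistency via the optimal point lying in the bounding box, and robustness by containing the bounding box in the $2R\times 2R$ square centered at $o$ to get $d(f,o)\leq\sqrt{2}R$ and then applying the triangle inequality. One small wording caveat: your closing remark says the constant comes from ``both $f$ and $o$ lying inside the same square,'' which alone would only give $d(f,o)\leq 2\sqrt{2}R$; what your computation actually (and correctly) uses is that $o$ is the \emph{center} of that square, so $|x_f-x_o|\leq R$ and $|y_f-y_o|\leq R$.
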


\begin{proof}
There are two ways to verify the strategyproofness of this mechanism. One intuitive way is to observe that the mechanism treats each dimension separately, running the \minmaxsingle mechanism for each one, so the strategyproofness of that mechanism also implies the strategyproofness of \minmaxmech (since agents want the facility to be as close to their coordinate for each dimension). Alternatively, the strategyproofness can also be verified by the fact that the \minmaxmech mechanism is equivalent to a Generalized Coordinatewise Median mechanism if we let $P'$ contain $n-1$ copies of the prediction $\hat{o}$, as we also observed in the proof of Theorem~\ref{thm:1dminmax}. 

To verify that the mechanism has perfect consistency, we first note that the optimal facility location is always in the convex hull of the points in $P$ (in fact, it is the center of the smallest circle containing all points in $P$, and the radius of this circle corresponds to the egalitarian social cost). This point is clearly within the minimum axis-parallel bounding box (which contains the convex hull), so for any instance where the  prediction $\hat{o}$ is correct, this prediction is in the bounding box, and is thus the location returned by the mechanism, verifying its $1$-consistency.

For the robustness, consider any instance with a set of preferred locations $P$, let $o$ be the optimal facility location and $\OPT = \max_{p_i \in P}d(p_i,o)$. We now consider the circle $c$ with $o$ as its center and the optimal distance $\OPT$ as its radius. Consider the axis-parallel square that has $c$ as its inscribed circle and note that this square contains all the points in $P$ since, by definition of the egalitarian social cost, it must be that all the points in $P$ are contained within the circle $c$, contained in the square. As a result, the minimum axis-parallel bounding box of $P$ is contained in this axis-parallel square. Therefore, since $f$, the location returned by the mechanism, is always within this axis-parallel square  (whose center is $o$ and whose edges are all of length $2 \OPT$) we have $d(o,f) \leq \sqrt{2} \cdot \OPT$, because the points of the square furthest away from its center are its vertices. By the triangle inequality we have that the robustness is at most:
 \[\max_{p_i \in P} d(f, p_i)\leq d(f,o) + \max_{p_i \in P}d(o,p_i) \leq (1+\sqrt{2}) \cdot \OPT.\qedhere\]
\end{proof}

\subsection{Optimality of the mechanism}\label{sec:egalitarianlb}

Since the coordinatewise median \textsc{CM} mechanism  achieves a $2$-approximation for the egalitarian social cost over all instances in two dimensions \citep{GH21}, it is $2$-consistent and $2$-robust. The \minmaxmech mechanism achieves $1$-consistency, but that comes at the cost of the robustness guarantee, which weakens from $2$ to $1+\sqrt{2}\approx 2.41$. A natural question is whether there exists any middle-ground between these two results, i.e, whether some mechanism can combine consistency better than $2$ with robustness better than $1+\sqrt{2}$.

Our next result, Theorem~\ref{thm:LBminmax}, answers this question negatively for deterministic, strategyproof, and anonymous mechanisms.  We  show that any deterministic, strategyproof, and anonymous mechanism that guarantees a consistency better than $2$  must have a robustness no better than $1+\sqrt{2}$, proving the optimality of our mechanism among all the mechanisms that provide consistency guarantees better than $2$.
%\xnote{brought the proof back from appendix}

%\begin{restatable}{theorem}{thmLBminmax}
%\label{thm:LBminmax}
%For any $\epsilon > 0$, there is no deterministic, strategyproof and anonymous mechanism that is $(2-\epsilon)$-consistent and  $(1+\sqrt{2}-\epsilon)$-robust for the egalitarian objective.
%\end{restatable}

\begin{theorem}\label{thm:LBminmax}
There is no deterministic, strategyproof, and anonymous mechanism that is $(2-\epsilon)$-consistent and  $(1+\sqrt{2}-\epsilon)$-robust with respect to the egalitarian objective for any $\epsilon > 0$. 
%\enote{\st{some constant $\epsilon > 0$} any $\epsilon > 0$. (it should be ``any" instead of ``some" and I don't think $\epsilon$ needs to be constant}.
\end{theorem}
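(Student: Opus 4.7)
I would prove the theorem by contradiction: suppose $f$ is deterministic, strategyproof, anonymous, $(2-\epsilon)$-consistent, and $(1+\sqrt{2}-\epsilon)$-robust for some $\epsilon > 0$.

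The first step would be to fix a prediction $\hat{o}$ and exploit the fact that the induced mechanism $f(\cdot, \hat{o})$ is itself a strategyproof and anonymous mechanism for Euclidean facility location in $\R^2$. Invoking the characterization in the style of Moulin's theorem in one dimension and \cite{PSS93} in two dimensions, $f(\cdot, \hat{o})$ must then be a generalized coordinatewise median with phantom points $P'(\hat{o})$ that depend only on $\hat{o}$. This reduces the problem to analyzing how those phantom points can be chosen.

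Next, I would exhibit a rich family of matched instances $\{I_1^{(\theta)}\}_\theta$, parameterized by an angle $\theta$, where each instance consists of three agents near $\hat{o}$ whose smallest enclosing circle is centered at $\hat{o}$ (so $o(I_1^{(\theta)}) = \hat{o}$). By applying the $(2-\epsilon)$-consistency guarantee to each such instance and translating the consistency bound through the coordinatewise-median structure, I would deduce that $P'(\hat{o})$ must be concentrated in a small neighborhood of $\hat{o}$; intuitively, any phantom placed far from $\hat{o}$ in some direction $\theta$ could be exposed by the matched instance $I_1^{(\theta)}$, violating $(2-\epsilon)$-consistency for small enough instance scale. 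With the phantoms pinned near $\hat{o}$, the mechanism behaves essentially like \minmaxmech on larger-scale instances. Then I would apply it to the stress instance $I_2$ drawn from the tight example behind the proof of Theorem~\ref{thm:2dminmax}, namely (up to translation and scaling) the three-point configuration $\{(1,0), (0,1), (-1/\sqrt{2}, -1/\sqrt{2})\}$, whose smallest enclosing circle is the unit circle centered at $o(I_2)$ and whose bounding-box corner $(1,1)$ lies at distance exactly $1+\sqrt{2}$ from one of the agents. Placing $\hat{o}$ at that corner forces the coordinatewise median over $I_2 \cup P'(\hat{o})$ to coincide with (or lie arbitrarily close to) the corner, which drives the egalitarian cost on $I_2$ to approach $(1+\sqrt{2})$ times the optimum and contradicts $(1+\sqrt{2}-\epsilon)$-robustness.

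The main obstacle is the second step: showing that the $(2-\epsilon)$-consistency constraint, applied over a rich enough family of matched instances, genuinely pins the phantom set $P'(\hat{o})$ inside a neighborhood of $\hat{o}$ small enough that the subsequent output on $I_2$ is pushed past the $1+\sqrt{2}-\epsilon$ threshold. This requires a careful geometric analysis of how the $(2-\epsilon)$-consistency region shrinks along rotated matched instances, and a tight quantitative link between the allowable displacement of phantoms from $\hat{o}$ and the resulting worst-case cost on $I_2$. A subsidiary technical point is the handling of tie-breaking in the lower median whenever $n + |P'(\hat{o})|$ is even, which can be managed by choosing the coordinates of the agents in $I_1^{(\theta)}$ and $I_2$ to interact cleanly (e.g., to be generic) with the phantom coordinates, and by scaling $I_1^{(\theta)}$ so that its optimal cost is much smaller than that of $I_2$.
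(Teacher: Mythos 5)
Your overall architecture coincides with the paper's: reduce, via unanimity and the characterization of \cite{PSS93}, to generalized coordinatewise median mechanisms with phantom points depending on $\hat{o}$; argue that consistency better than $2$ forces the phantoms onto $\hat{o}$; then break robustness with the three-point instance $\{(0,1),(1,0),(-1/\sqrt{2},-1/\sqrt{2})\}$ and $\hat{o}=(1,1)$, which is exactly the paper's stress instance. However, the step you yourself flag as ``the main obstacle'' is a genuine gap, and the route you sketch for closing it is not the right one. You propose to show only that $(2-\epsilon)$-consistency confines $P'(\hat{o})$ to a small neighborhood of $\hat{o}$, and then to establish a quantitative trade-off between the neighborhood radius and the cost incurred on $I_2$. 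That trade-off is delicate: the displacement tolerated on $I_2$ must shrink with $\epsilon$, while a family of matched instances at any \emph{fixed} scale cannot certify a neighborhood that shrinks to match. The paper's resolution is both simpler and stronger: the phantoms must coincide with $\hat{o}$ \emph{exactly}. If some phantom lies, say, strictly below $\hat{o}$, let $\epsilon'>0$ be the gap between $y_{\hat{o}}$ and the largest $y$-coordinate of such a phantom, and consider the instance with $n-1$ agents at $(x_{\hat{o}},\, y_{\hat{o}}-\epsilon')$ and one agent at $(x_{\hat{o}},\, y_{\hat{o}}+\epsilon')$. The prediction is correct (optimal cost $\epsilon'$), yet at least $n$ of the $2n-1$ points of $P\cup P'$ have $y$-coordinate at most $y_{\hat{o}}-\epsilon'$, so the mechanism's cost is at least $2\epsilon'$ and its consistency is at least $2$ --- no matter how small the displacement. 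The idea you are missing is that the exposing instance can be \emph{scaled to match the offending phantom's displacement exactly}, so approximate concentration plus a quantitative link is never needed.

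Two secondary issues. First, your matched instances have three agents, but the mechanism is defined for $n$ agents and the characterization yields $n-1$ phantoms; the pinning instance must use all $n$ agents (the paper places $n-1$ of them at one cluster precisely so that, together with the offending phantom, they outnumber the rest and drag the median). Second, you need unanimity before invoking the characterization in the form of a GCM with $n-1$ constant points; the paper derives it from bounded robustness. Once that is in place, your tie-breaking concern disappears: with $n$ agents and $n-1$ phantoms the total $2n-1$ is odd and the coordinatewise median is unique.
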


\begin{proof}
First, note that any mechanism $f$ with a bounded robustness needs to be \emph{unanimous}, i.e., given a set of points $P$ where all the points are at the same location ($p_i=p_j$ for all $i,j\in [n]$), the mechanism needs to place the facility at that same location, i.e., $f(P)=p_i$. If not, then its cost would be positive, while the optimal cost is zero, by placing the facility at the same location as all the points. Therefore, we can restrict our attention to mechanisms that are unanimous. Using the characterization of \cite{PSS93}, we know that any deterministic, strategyproof, anonymous, and unanimous mechanism in our setting takes the form of a generalized coordinatewise median (GCM) mechanism with $n-1$ constant points in $P'$. Our proof first shows that in order to achieve a consistency better than $2$, this mechanism needs to use the prediction $\hat{o}$ in place of all these $n-1$ constant points. Then, we show that if it does use the prediction $\hat{o}$ in place of all these $n-1$ constant points, its robustness is at least $1+\sqrt{2}$.

For the first part of the proof, consider any GCM mechanism for which the multiset of constant points $P'$ contains at least one point that is not the same as the prediction point, $\hat{o}$. Without loss of generality, assume that this point lies strictly below $\hat{o}$, i.e., that its $y$-coordinate is strictly smaller than $y_{\hat{o}}$ (if this point is strictly on the left, strictly on the right, or strictly above the prediction point, we can directly adjust the argument below to prove the same result).
%that uses the prediction point $\hat{o}$ for $k\leq n-2$ of the $n-1$ constant points in $P'$, and chooses any other points (not on $\hat{o}$) for the remaining $n-1-k$ constant points. Among the constant points that are not on $\hat{o}$, let $c_\ell$ and $c_r$ be the number of them that are strictly on the left of $\hat{o}$ (i.e., their $x$-coordinate is strictly smaller than $x_{\hat{o}}$) and strictly on the right of $\hat{o}$ (i.e., their $x$-coordinate is strictly greater than $x_{\hat{o}}$), respectively. Similarly, let $c_a$ and $c_b$ be the number of them that are strictly above or below. It is easy to verify that $c_\ell+c_r+c_a+c_b\geq n-1-k$, otherwise one of these $n-1-k$ points in $P'$ would have to be located on $\hat{o}$. Without loss of generality, we assume that the largest among these four quantities is $c_b$, implying that $c_b\geq (n-1-k)/4>0$, and since $c_b$ is an integer, we get that $c_b\geq 1$ (i.e., there exists at least one point in $P'$ that is strictly below $\hat{o}$). \enote{Can we simplify the previous argument and just say that if $k\leq n-2$ then there must be at least one point in $P'$ that is strictly above, below, on the left, or on the right of $\hat{o}$?} 
Let $\bar{y}=\max_{p \in P': y_p < y_{\hat{o}}} y_p$ be the maximum $y$-coordinate among the points in $P'$ that are strictly below the prediction, and $\epsilon = y_{\hat{o}} -\bar{y}$
%Let $\epsilon = y_{\hat{o}} - \max_{p \in P': y_p < y_{\hat{o}}} y_p$ be such that the maximum $y$-coordinate among the points in $P'$ that are strictly below the prediction is $y_{\hat{o}}-\epsilon$ 
(there exists at least one point in $P'$ that is strictly below the prediction, so $\epsilon>0$). Then, consider the instance where the set of actual agent points $P$ has $n-1$ points at location $(x_{\hat{o}}, y_{\hat{o}}-\epsilon)$ and 1 point at location $(x_{\hat{o}}, y_{\hat{o}}+\epsilon)$, i.e., $\epsilon$ below the prediction and $\epsilon$ above it, respectively. For this instance, $\hat{o}$ is the correct prediction, as it achieves the optimal egalitarian social cost of $\epsilon$. However, the median of the points in $P\cup P'$ with respect to the $y$-axis is $y_{\hat{o}}-\epsilon$, since there are at least $n$ points in $P\cup P'$ with $y$-coordinate equal to $y_{\hat{o}}-\epsilon$ ($n-1$ points in $P$ and at least one point in $P'$) out of a total of $2n-1$ points in $P\cup P'$. %\st{at most that much, and only $n-1$ points in $P\cup P'$ with a greater $y$-coordinate}}.
Therefore, the egalitarian social cost of the mechanism would be at least $2\epsilon$, since the $y$ coordinate of the facility location would be $y_{\hat{o}}-\epsilon$, but there is an actual agent point on $(x_{\hat{o}}, y_{\hat{o}}+\epsilon)$. Therefore, any such mechanism would have a consistency no better than 2.

Now, we conclude the proof by showing that the robustness of the GCM mechanism that uses the prediction point $\hat{o}$ for all the $n-1$ constant points in $P'$ is no better than $1+\sqrt{2}$. Assume that the prediction $\hat{o}$ is located at $(1,1)$ and consider an instance with $n=3$ points in $P$ located at $(0,1)$, $(1,0)$, and $(-1/\sqrt{2},-1/\sqrt{2})$. In that case, the optimal facility location would be at $(0,0)$ and all the three points in $P$ would have distance 1 from it. However, the set $P'$ contains $n-1=2$ points at $(1,1)$, so the GCM mechanism would place the facility at $(1,1)$, because three of the five points in $P\cup P'$ have $x$-coordinate $1$ and three of the five points in $P\cup P'$ have $y$-coordinate $1$. The distance of this facility location from $(-1/\sqrt{2},-1/\sqrt{2})$ is $1+\sqrt{2}$, which concludes the proof.
\end{proof}

\subsection{Approximation as a function of the prediction error}\label{sec:errorEgalit}

We now extend the consistency and robustness results for \minmaxmech to obtain a refined approximation ratio as a function of the prediction error $\eta$. This result shows that our mechanism achieves improved approximation guarantees not only when $\eta=0$ (which corresponds to the consistency guarantee), but for any value of $\eta$ less than $\sqrt{2}$.
%This result shows that apart from the consistency and robustness guarantee, which correspond to $\eta=0$ and arbitrary $\eta$, respectively, our mechanism even provides non-trivial guarantees even when $\eta>0$ but less than $\sqrt{2}$. 
Specifically, our bound degrades gracefully from the consistency bound of $1$ to the robustness bound of $1 + \sqrt{2}$ as a function of $\eta$.

\begin{theorem}\label{thm:2dminmaxerror}
The \minmaxmech  mechanism achieves a $\min \{ 1 + \eta, 1 + \sqrt{2} \}$ approximation for the egalitarian objective, where $\eta$ is the prediction error.
\end{theorem}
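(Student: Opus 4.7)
The plan is to combine the already-established $(1+\sqrt{2})$-robustness from Theorem~\ref{thm:2dminmax} with a separate upper bound of $1+\eta$; their minimum then gives the claimed $\min\{1+\eta,\,1+\sqrt{2}\}$. Since the $(1+\sqrt{2})$ piece is free, the work is in proving the $1+\eta$ bound, which I expect to follow from a short geometric contraction argument.

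First, I would recall that the optimal location $o$ coincides with the center of the smallest enclosing circle of $P$, so $o$ lies in the convex hull of $P$ and therefore in the minimum axis-parallel bounding box $B$ of $P$. By construction, the \minmaxmech mechanism outputs the point $f \in B$ closest to the prediction $\hat{o}$, i.e., $f$ is the Euclidean projection $\pi_B(\hat{o})$ of $\hat{o}$ onto the closed convex set $B$.

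The key step is to use the standard fact that projection onto a closed convex set is $1$-Lipschitz. Since $o\in B$ we have $\pi_B(o)=o$, and hence
\[
d(f,o) \;=\; d(\pi_B(\hat{o}),\pi_B(o)) \;\leq\; d(\hat{o},o) \;=\; \eta\cdot \OPT,
\]
where the last equality is just the definition of $\eta$. (For our specific $B$, this contraction can also be verified directly: coordinatewise clipping onto an interval is a $1$-Lipschitz map on $\R$, and Pythagoras combines the two coordinate clippings into a $1$-Lipschitz map on $\R^2$.) Applying the triangle inequality to every agent point $p_i\in P$ then gives
\[
d(f,p_i) \;\leq\; d(f,o) + d(o,p_i) \;\leq\; \eta\cdot \OPT + \OPT \;=\; (1+\eta)\cdot \OPT,
\]
and taking the maximum over $i$ yields $C^e(f,P)\leq (1+\eta)\cdot\OPT$. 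Combining this with the $(1+\sqrt{2})$-robustness bound from Theorem~\ref{thm:2dminmax} completes the proof.

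The main (and essentially only) obstacle is the contraction step; everything else is an invocation of earlier observations (location of $o$ inside $B$, and the prior robustness bound). No delicate case analysis on the position of $\hat{o}$ relative to $B$ is needed, because the contraction argument handles both the ``$\hat{o}\in B$'' case (where $f=\hat{o}$, and the bound is tight) and the ``$\hat{o}\notin B$'' case uniformly.
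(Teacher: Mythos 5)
Your proof is correct and follows the same overall decomposition as the paper's: take the minimum of the $(1+\sqrt{2})$ robustness bound from Theorem~\ref{thm:2dminmax} and a $1+\eta$ bound obtained by controlling $d(f,o)$ by $d(\hat{o},o)=\eta\cdot\OPT$ and then applying the triangle inequality to each agent. The one place where you diverge is the justification of the contraction step. The paper isolates this as Lemma~\ref{lem:shiftcm}, which states that $d(f(P,\hat{o}),f(P,\tilde{o}))\leq d(\hat{o},\tilde{o})$ for any two predictions, and proves it by an explicit coordinatewise case analysis showing $dx_f\leq dx_o$ and $dy_f\leq dy_o$; it then specializes $\tilde{o}=o$ and invokes $1$-consistency to get $f(P,o)=o$. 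You instead observe that the mechanism is exactly the Euclidean projection $\pi_B$ onto the axis-parallel bounding box $B$ (a description the paper itself gives in Section~\ref{sec:MBB}), that $o\in B$ so $\pi_B(o)=o$, and that projection onto a closed convex set is nonexpansive. The two arguments establish the same inequality; yours is shorter and more conceptual by appealing to a standard convexity fact, while the paper's coordinatewise version is self-contained and is reused almost verbatim for the \cmp\ mechanism (Lemma~\ref{lem:shiftcm2}), where the projection interpretation no longer applies. Your parenthetical remark that the contraction can be verified by clipping each coordinate and combining via Pythagoras is, in effect, the paper's proof of Lemma~\ref{lem:shiftcm}.
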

%\begin{proof}
%Given an instance $P$, let $\fpred$ and $o$ be the predicted and actual optimal facility locations. let $\OPT = \max_{p_i \in P}d(0,p_i)$, we have $\eta = \frac{d(\fpred, o)}{\OPT}$. If $\fpred$ is within the minimum bounding box, the output of our mechanism $f$ is just $\fpred$, by triangle inequality we have:
%\[\max_{p_i \in P} d(p_i, f) \leq \max_{p_i \in P} d(p_i, o) + d(o, \fpred) = (1+\eta) \cdot \OPT.\]
%Otherwise if $\fpred$ is outside of the minimum bounding box, by Theorem~\ref{thm:2dminmax} we get that the worst approximation ratio is $\sqrt{2}+1$. Combining the two results we get that the approximation ratio of \minmaxmech, given the prediction error $\eta$, is $\min(1+\eta, 1+ \sqrt{2}).$
%\end{proof}
To obtain a $1 + \eta$ approximation, we aim to bound the distance between the output of the mechanism with the erroneous prediction and the output of the mechanism if it had been given the correct prediction, i.e., the optimal location. We first show a helpful lemma to bound this distance.

\begin{lemma}\label{lem:shiftcm}
Given a set of points $P$ and two predictions $\hat{o}$ and $\tilde{o}$, let $f(P, \hat{o})$ and $f(P, \tilde{o})$ be the respective facility locations chosen by the \minmaxmech mechanism. Then, the distance between these two facility locations is at most the distance between the two predictions, i.e.,
\[d(f(P, \hat{o}), f(P, \tilde{o})) ~\leq~ d(\hat{o}, \tilde{o}).\]
\end{lemma}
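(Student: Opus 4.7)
The plan is to leverage the coordinate-wise structure of the \minmaxmech mechanism and reduce the claim to a one-dimensional non-expansion statement for the \minmaxsingle mechanism. First I would observe that, given the $x$-coordinates $(x_1, \dots, x_n)$ of the reported locations, the operation $z \mapsto \minmaxsingle((x_1, \dots, x_n), z)$ is nothing more than the projection of $z$ onto the closed interval $[\min_i x_i, \max_i x_i]$: the mechanism outputs $z$ itself when $z$ lies in this interval and otherwise outputs the nearer endpoint.

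The key step is then to show that projection onto a (non-empty) closed interval of $\R$ is non-expansive, i.e. $1$-Lipschitz. I would prove this directly by a short case analysis on where each of $x_{\hat{o}}$ and $x_{\tilde{o}}$ falls relative to $[\min_i x_i, \max_i x_i]$ (both inside, both on the same side outside, on opposite sides outside, or one inside and one outside), verifying in each case that $|\minmaxsingle((x_i)_i, x_{\hat{o}}) - \minmaxsingle((x_i)_i, x_{\tilde{o}})| \leq |x_{\hat{o}} - x_{\tilde{o}}|$. Applying the same argument to the $y$-coordinates gives the analogous inequality in the second dimension.

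Finally, I would combine the two coordinate-wise bounds using the Pythagorean identity:
\[
d(f(P, \hat{o}), f(P, \tilde{o}))^2 = \bigl(x_{f(P,\hat{o})} - x_{f(P,\tilde{o})}\bigr)^2 + \bigl(y_{f(P,\hat{o})} - y_{f(P,\tilde{o})}\bigr)^2 \leq (x_{\hat{o}} - x_{\tilde{o}})^2 + (y_{\hat{o}} - y_{\tilde{o}})^2 = d(\hat{o},\tilde{o})^2,
\]
and take square roots. I do not foresee a serious obstacle: the only substantive content is the one-dimensional non-expansion of interval projection, and the rest is the coordinate-wise decomposition baked into the definition of \minmaxmech together with Pythagoras.
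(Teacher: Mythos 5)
Your proposal is correct and follows essentially the same route as the paper: the paper also reduces to the coordinate-wise claim $|x_{\hat{f}}-x_{\tilde{f}}|\leq|x_{\hat{o}}-x_{\tilde{o}}|$ (and likewise in $y$) via a short case analysis on where the two predictions fall relative to $[\min_i p_i,\max_i p_i]$, and then combines the two bounds with the Pythagorean identity. Your framing of the one-dimensional step as non-expansiveness of projection onto a closed interval is just a cleaner name for the same case analysis.
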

\begin{proof}
Let $(x_{\hat{f}}, y_{\hat{f}})=f(P, \hat{o})$ and $(x_{\tilde{f}}, y_{\tilde{f}})=f(P, \tilde{o})$ be facility locations returned by the \minmaxmech mechanism given predictions $\hat{o}$ and $\tilde{o}$, respectively, and let $dx_f=|x_{\hat{f}}-x_{\tilde{f}}|$ and $dy_f=|y_{\hat{f}}-y_{\tilde{f}}|$ be the difference of their $x$ and $y$ coordinates. Similarly, let $dx_o=|x_{\hat{o}}-x_{\tilde{o}}|$ and $dy_o=|y_{\hat{o}}-y_{\tilde{o}}|$ be the corresponding differences for the coordinates of the two predictions. To prove this lemma, we argue that $dx_f\leq dx_o$ and $dy_f \leq dy_o$, implying the desired inequality, since 
\[d(f(P, \hat{o}), f(P, \tilde{o})) ~=~ \sqrt{dx_f^2+dy_f^2}~\leq~\sqrt{dx_o^2+dy_o^2} ~=~ d(\hat{o}, \tilde{o}).\]

We first focus on the $x$-coordinate and, without loss of generality, we assume that $x_{\hat{o}}\leq x_{\tilde{o}}$, i.e., that the first prediction is weakly on the left of the second one. To verify that $dx_f\leq dx_o$, we  proceed with a simple case analysis. If $x_{\hat{o}}\geq \max_i p_i$ or $x_{\tilde{o}}\leq \min_i p_i$, i.e., if the predictions are both on the left of all points in $P$ or both on the right of all points in $P$, then the call to \minmaxsingle mechanism would return the same $x$-coordinate for both cases, i.e., $dx_f=0\leq dx_o$. Otherwise, $x_{\hat{f}}=\max\{\min_i p_i,~ x_{\hat{o}}\}$ and $x_{\tilde{f}}=\min\{\max_i p_i,~ x_{\tilde{o}}\}$. This implies that even in this case $dx_f\leq x_{\tilde{o}}-x_{\hat{o}} = dx_o$. Using the same sequence of arguments for the $y$-coordinate implies that $dy_f\leq dy_o$ and concludes the proof.
\end{proof}

Using this lemma, we are now ready to prove  Theorem~\ref{thm:2dminmaxerror}.

\begin{proof}[Proof of Theorem~\ref{thm:2dminmaxerror}]
Theorem~\ref{thm:2dminmax} already shows that the worst-case approximation of \minmaxmech is at most $(1+\sqrt{2})$, so we just need to prove that it is also at most $1+\eta$.

We first note that the error $\eta$ in the prediction is equal to the normalized distance between the prediction and the actual optimal facility location, i.e., $d(\hat{o},o)/\OPT$, so $d(\hat{o},o)=\eta \cdot \OPT$. Using Lemma~\ref{lem:shiftcm} and substituting $\tilde{o}$ with the actual optimal facility location $o$, i.e., $\tilde{o}=o$, we get 
\begin{equation}\label{ineq:lemmbound}
d(f(P, \hat{o}), f(P, o)) ~\leq~ d(\hat{o}, o) ~=~ \eta \cdot \OPT.
\end{equation}
However, the \minmaxmech mechanism chooses the optimal facility when provided with an accurate prediction (it is $1$-consistent), so $f(P, o)=o$. We can therefore conclude that
\begin{align*}
    C^e(f(P, \fpred), P )  & = \max _{i \in [n]}d(p_i, f(P, \fpred)) \\
    & \leq \max_{i \in [n]} \left( d(p_i, f(P, o)) + d(f(P, o), f(P, \fpred)) \right) \\
    & \leq \max_{i \in [n]} \left( d(p_i, o) + \eta \cdot C^e(o, P)\right) \\
    & \leq C^e(o, P) + \eta \cdot C^e(o, P)\\
    & = (1+\eta) C^e(o, P),
\end{align*}
where the first equation is by definition of the egalitarian social cost, the first inequality uses the triangle inequality, the second inequality uses the fact that $f(P, o)=o$ and Inequality~\eqref{ineq:lemmbound}, and the third inequality uses the definition of the egalitarian social cost.
\end{proof}

%%%%%%%%%%%%%%%%%%%%%%%%%%%%

\section{Minimizing the utilitarian social cost}

In this section, we focus on minimizing the utilitarian social cost function. For the one-dimensional case, returning the median of the preferred points in $P$ is an optimal solution which is also strategyproof. For the two-dimensional case, it is known that the coordinatewise median mechanism guarantees a $\sqrt{2}$-approximation, and no deterministic, anonymous, and strategyproof mechanism can achieve a better guarantee~\citep{meir2019strategyproof}. Our main result in this section is a deterministic, strategyproof, and anonymous mechanism in the learning-augmented framework that uses predictions to achieve an optimal trade-off between robustness and consistency. This mechanism is parameterized by a ``confidence value'' $c\in [0, 1)$ (such that $cn$ is an integer), which is chosen by the designer, depending on how much they trust the prediction. Specifically, we prove that for each choice of $c$, the induced mechanism is $\sqrt{2c^2+2}/{(1+c)}$-consistent and ${\sqrt{2c^2+2}}/{(1-c)}$-robust. If the designer has no confidence in the prediction, setting $c=0$ retrieves the optimal robustness guarantee of $\sqrt{2}$, with a consistency that is also $\sqrt{2}$. For higher values of $c$, the consistency improves beyond $\sqrt{2}$, gradually approaching $1$, at the cost of increased robustness bounds (see Figure~\ref{fig:minsumperformance}). 
In Section~\ref{sec:OptUtil} we show that this trade-off between robustness and consistency provided by our mechanism is, in fact, optimal over all deterministic, strategyproof, and anonymous mechanisms.
Finally, in Section~\ref{sec:errorUtil} we once again provide a more refined bound regarding the approximation that our mechanism achieves as a function of the prediction error.

\subsection{The coordinatewise median with predictions mechanism}

Our \minsummech (\textsc{CMP}) mechanism takes as input the multiset $P$ of the preferred locations reported by the agents, a prediction $\hat{o}$, and a parameter value $c \in [0, 1)$ which captures the designer's confidence in the prediction (such that $cn$ is an integer). The mechanism creates a multiset $P'$ containing $cn$ copies of $\hat{o}$ and outputs $\textsc{CM}(P\cup P')$, i.e., the facility location chosen by the generalized coordinatewise median mechanism whose multiset of constant points $P'$ contains $cn$ copies of the prediction. This set $P'$ provides an interesting way for the designer to introduce a ``bias'' toward the prediction, which increases as a function of the parameter $c$. Specifically, a larger value of $c$ adds more points in $P'$, which can move the median with respect to each coordinate toward the prediction. We use $\falg(P, \fpred, c)$ to denote the facility chosen by \cmp \ with a confidence parameter value of $c$ over preferred points $P$ and prediction $\fpred$. 
Note that for the special case when the confidence parameter is set to $c = (n-1)/n$, i.e., when $P'$ contains exactly $n-1$ copies of the prediction, then \textsc{CMP} reduces to the \minmaxmech mechanism from the previous section.

To prove the robustness and consistency guarantees achieved by this mechanism, we first show that we can, without loss of generality, focus on the class of instances that have the following structure: the outcome of the mechanism is at $(0, 0)$, the optimal outcome is at $(0, 1)$, and every point of $P$ is located either at $(0, 1)$, at $(-x, 0)$, or at $(x, 0)$, for some $x\geq 0$. 

\begin{figure}
    \centering
    \includegraphics[width = 0.48 \textwidth]{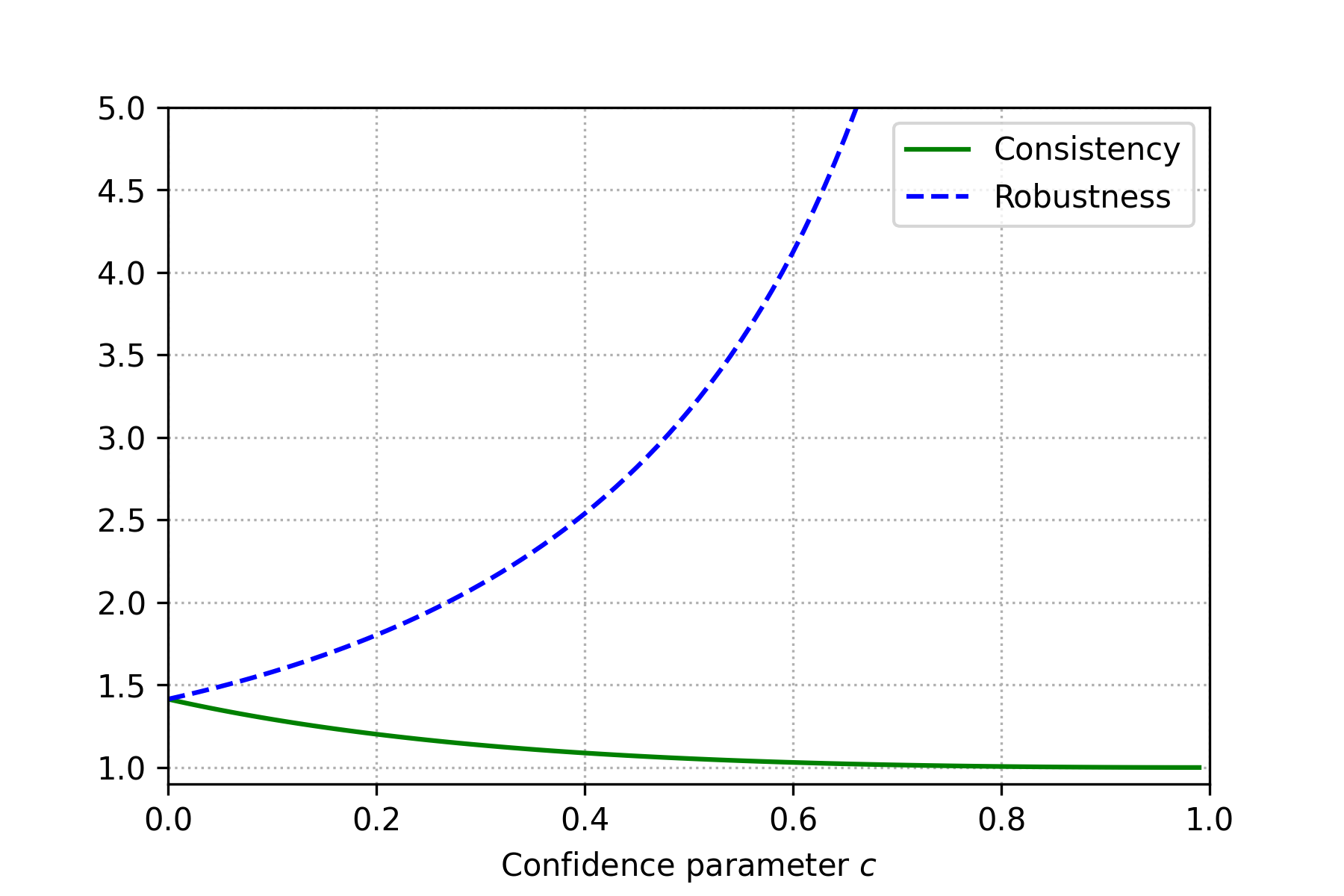}
    \includegraphics[width = 0.48 \textwidth]{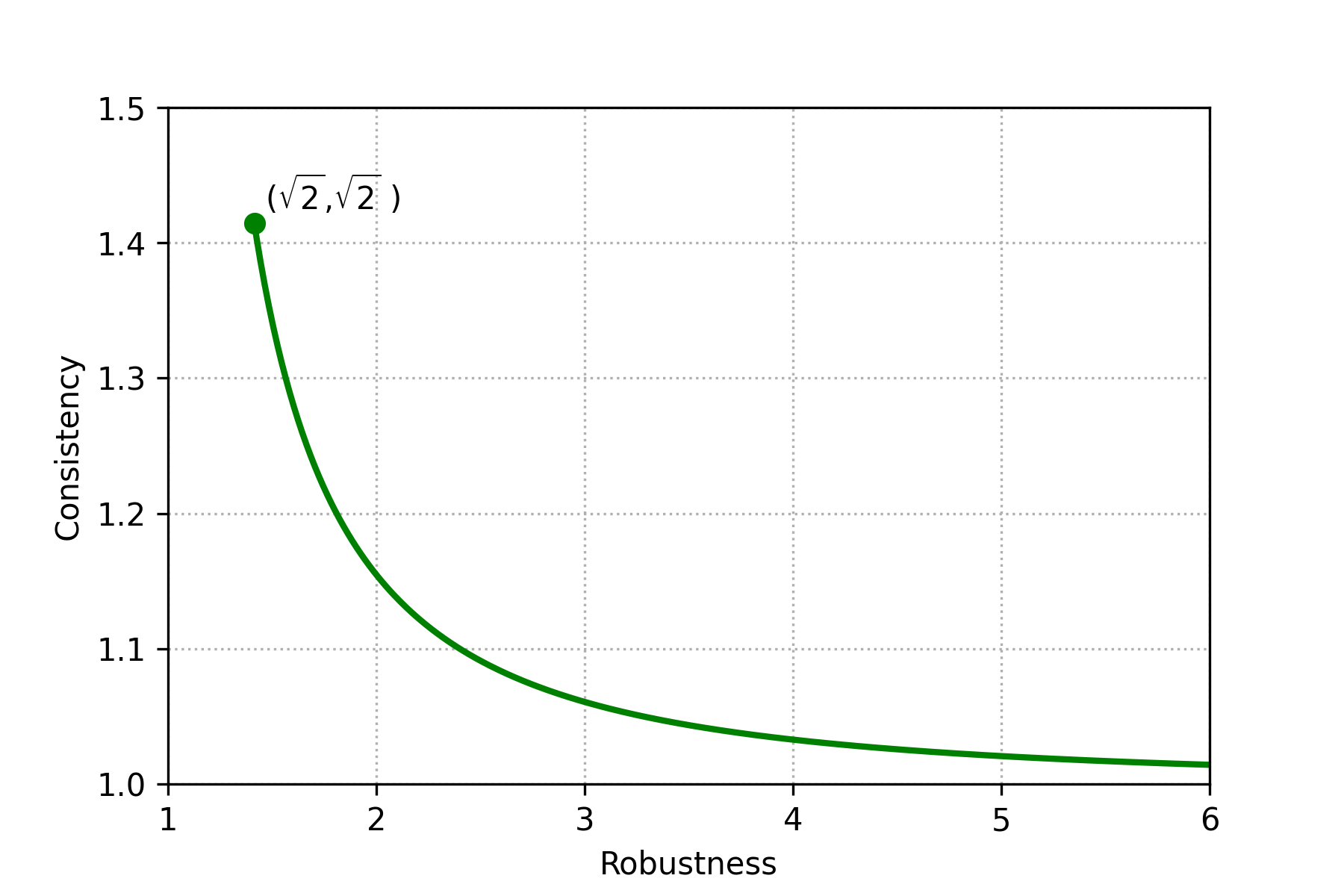}
    \caption{On the left, the consistency and robustness achieved by the \cmp \ mechanism as a function of the confidence parameter $c$. On the right, the optimal trade-off between robustness and consistency, which  is matched by \cmp. Both figures are for the utilitarian social cost objective.}
    \label{fig:minsumperformance}
\end{figure}

\begin{definition}[Clusters-and-OPT-on-Axes Instances]\label{def:cooa}
Given a confidence value $c \in [0, 1)$, consider the class of all instances with predictions $\hat{o}$ and preferred points $P$ (for any number of agents, $n$), such that $\fa(P, \fpred, c) = (0,0)$, $\fopt(P) = (0,1)$, and $p \in \{(0,1), (x,0), (-x,0)\}$ for all $p \in P$ and some $x \in \R_{\geq 0}$. Let $\famPC(c)$ be the subset of these instances where $\fpred = \fopt(P)$ and $\famPR(c)$ be the subset of these instances where $\fpred = (0, 0)$. We refer to these classes of instances as Clusters-and-Opt-on-Axes (\famP) for  consistency and robustness, respectively.
\end{definition}

Our next result is an important technical lemma showing that, if the \cmp\  mechanism with confidence parameter $c$ is $\alpha$-consistent and $\beta$-robust with respect to the classes $\famPC(c)$ and  $\famPR(c)$, respectively, then it is $\alpha$-consistent and $\beta$-robust over all instances. In other words, for any value of $c$, there always exists a worst-case instance within these classes. The structure of our proof resembles an argument used by \cite{GH21} to analyze the worst-case approximation ratio of the standard coordinate-wise median mechanism as a function of $n$ for instances where $n$ is odd (for instances where $n$ is even, a tight bound of $\sqrt{2}$ was already known). However, our argument requires several new ideas to address the fact that the \cmp\ mechanism also depends on the prediction, and to provide bounds not only for robustness, but also for consistency. The resulting argument comprises multiple steps, so we defer the complete proof to Section~\ref{sec:technicallemma}. 
We use $\approxratio(P, \fpred , c)$ to denote the approximation ratio achieved by \cmp \ with parameter $c$ given a multiset of preferred points $P$ and a prediction $\fpred$.

\begin{lemma}\label{lem:mainworstcase}
For any $c \in [0,1)$, the \cmp\ mechanism  with confidence $c$ is $\alpha$-consistent and $\beta$-robust, where $\alpha = \max_{P \in \famPC(c)} \approxratio(P, \fpred = \fopt(P), c)$ and $\beta = \max_{P \in \famPR(c)} \approxratio(P, \fpred = (0,0), c)$.
\end{lemma}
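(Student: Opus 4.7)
The plan is a reduction argument: given any instance $(P, \hat{o})$ with confidence $c$, I would produce an instance in $\famPC(c) \cup \famPR(c)$ whose approximation ratio is at least as large, thereby showing the worst-case ratio over all inputs is attained within these restricted classes. The reduction proceeds in three stages: normalization, projection of preferred points onto the coordinate axes, and a final step that pins down $\hat{o}$.

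For normalization I would exploit the equivariance of \cmp\ and the Euclidean cost under transformations that preserve the coordinate axes. Translating all of $P$ and $\hat{o}$ by a common vector shifts both the coordinatewise median and $\fopt(P)$ by the same vector, so I may translate to put $\falg(P, \fpred, c) = (0,0)$. Reflecting across either coordinate axis also commutes with the coordinatewise median, so I may further place $\fopt(P)$ in the closed first quadrant. Finally, uniform scaling preserves the approximation ratio, so I may rescale to $\|\fopt(P)\| = 1$.

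The heart of the argument is a projection step that relocates each $p_i = (x_i, y_i)$ onto either the x-axis or the y-axis while keeping the coordinatewise median at $(0,0)$. The key observation is that replacing $y_i$ by $0$ cannot push the y-median away from $0$ provided $0$ is on the same side of the old y-median as $y_i$, and analogously for the x-axis; crucially, these sign-count conditions already hold at the original instance, since the median is $0$ in each coordinate. Through a direct distance comparison I would then show that for each agent at least one of the two projections weakly increases the ratio $C^u(\falg, P)/C^u(\fopt, P)$, possibly after re-optimizing $\fopt$. Iterating this step concentrates every agent on the axes, after which a symmetrization and averaging argument forces $\fopt(P)$ onto the y-axis, forces $\|\fopt(P)\| = 1$ to be attained at $(0,1)$, and clusters the off-axis agents symmetrically at $(\pm x, 0)$, producing an instance of the form in Definition~\ref{def:cooa}.

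Finally, I would pin down $\hat{o}$ in the reduced instance. In the consistency case, $\hat{o} = \fopt(P) = (0,1)$ by the definition of consistency, so the reduction yields an instance in $\famPC(c)$. In the robustness case I would argue that the adversary loses nothing by placing $\hat{o}$ at the mechanism output $(0,0)$: moving the $cn$ phantom copies from the old $\hat{o}$ onto $(0,0)$ gives the adversary maximum flexibility in choosing $P$, and any normalized instance $(P, \hat{o})$ can be matched in approximation ratio by one with $\hat{o}=(0,0)$ having the same output and the same $\fopt(P)$. The main obstacle I foresee is the projection step, because the $cn$ phantom copies of $\hat{o}$ participate in the coordinatewise medians: tracking how the sign counts evolve under projection, and confirming that $(0,0)$ remains the median throughout, is where the predictionless analysis of \cite{GH21} must be genuinely extended, separately in the consistency reduction (phantoms at $\fopt$) and the robustness reduction (phantoms at $\falg$).
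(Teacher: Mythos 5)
Your overall architecture (normalize so the output is at the origin and the optimum is in the first quadrant, move points onto the axes, then symmetrize and cluster) matches the paper's high-level plan, and your normalization and your treatment of $\hat{o}$ in the robustness case (setting $\hat{o}=(0,0)$ does not move the coordinatewise median once the output is already at the origin) are both correct and are exactly what the paper does. However, the heart of your argument --- the projection step --- contains a genuine gap. You claim that for each agent at least one of the two axis-projections $(x_i,y_i)\mapsto(x_i,0)$ or $(x_i,y_i)\mapsto(0,y_i)$ weakly increases the ratio, possibly after re-optimizing $\fopt$. This is false for a point in the open first quadrant near the optimum: both projections strictly decrease that agent's distance to the mechanism output $(0,0)$ (the numerator drops) while strictly increasing its distance to the optimum whenever $0<x_i<2x_o$ and $0<y_i<2y_o$ (the denominator, evaluated at the old optimum, rises; re-optimizing only complicates the comparison, it does not rescue it). In the extreme case of an agent located at $\fopt(P)$ itself, projection turns a zero contribution to the optimal cost into a positive one while shrinking the mechanism's cost, so the ratio can strictly decrease under either projection.

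The paper's fix is to move interior points \emph{toward the optimum} rather than toward the axes (Lemma~\ref{lem:towards}): moving a point a distance $\delta$ toward $\fopt(P)$ lowers the optimal cost by exactly $\delta/n$ and the mechanism's cost by at most $\delta/n$, so the ratio weakly increases whenever it exceeds $1$, and the coordinatewise median is unaffected since such a move never crosses an axis. The price is that interior points accumulate \emph{at the optimal location}, which is why the intermediate family $\coa$ in Definition~\ref{def:coa} allows a cluster at $\fopt(P)$. Dissolving that cluster is then the genuinely hard part of the proof: Lemmas~\ref{lem:doublerotation} and~\ref{lem:ytogm} perform a \emph{paired} move --- one point rotates from $\fopt(P)$ to $(0,\sqrt{x_o^2+y_o^2})$ on the $+y$ axis while a partner point jumps from $(-x,0)$ to $(x+2x_o,0)$ --- chosen so that the median stays at the origin and the optimum does not move, with a dedicated inequality showing the ratio does not drop (and Lemma~\ref{lem:sizeleft} guaranteeing enough partners on the $-x$ axis exist). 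Your proposal has no counterpart to this step, and without it the reduction to $\famPC(c)$ and $\famPR(c)$ does not go through.
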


Note that, the $\famPC(c)$ and $\famPR(c)$ classes contain instances with an arbitrary number of agents, yet our robustness and consistency bounds are independent of $n$. We henceforth assume, without loss of generality, that $n$ (the total number of agents) and $cn$ (the number of points in $P'$) are both multiples of $4$ to avoid integrality issues. Indeed, given any parameter value $c$ and any instance where one of these quantities is not a multiple of $4$, we can produce another instance that satisfies both of these conditions and has the same approximation ratio. Specifically, we can achieve this by making four copies of each point in $P$ and $P'$; this would not affect the optimal outcome, nor would it affect the outcome of the mechanism, so the approximation ratio would be the same.

Next, we show that there exists an instance in $\famPC(c)$ such that the \cmp \ mechanism obtains a ${\sqrt{2c^2+2}}/{(1+c)}$ approximation when the prediction is correct and an instance in $\famPR(c)$  where it obtains a ${\sqrt{2c^2+2}}/{(1-c)}$ approximation for some incorrect prediction.

\begin{lemma}\label{lem:lowerbound}
For CMP with confidence $c \in [0,1)$, there exists an instance $P \in \famPC(c)$ such that $r(P, \hat{o} = o(P),c) = \frac{\sqrt{2c^2+2}}{1+c}$ and an instance $Q \in \famPR(c)$ such that $\approxratio(Q, \fpred = (0,0), c) = \frac{\sqrt{2c^2+2}}{1-c}$.
\end{lemma}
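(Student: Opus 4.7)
My plan is to construct both instances explicitly and then verify that they satisfy the three required properties: the \cmp\ mechanism outputs $(0,0)$, the true utilitarian optimum is $(0,1)$, and the resulting approximation ratio matches the stated bound. In both cases every point of $P$ must lie in $\{(0,1),(x,0),(-x,0)\}$; let $a$ denote the number of points at $(0,1)$ and $b,d$ the numbers at $(x,0)$ and $(-x,0)$, with $a+b+d=n$. I will impose the symmetry $b=d$ so that the $x$-median of $P\cup P'$ is automatically $0$, and search over the remaining parameters $\alpha=a/n$ and $x\geq 0$.

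For the consistency instance, $P'$ consists of $cn$ copies of $\hat{o}=o(P)=(0,1)$. The $y$-coordinates of $P\cup P'$ are $b+d=n(1-\alpha)$ zeros followed by $a+cn$ ones, so the (smaller) median equals $0$ iff $\alpha\leq (1-c)/2$. The ratio of the mechanism's cost at $(0,0)$ to the optimal cost at $(0,1)$ simplifies to
\[
    \frac{\alpha/(1-\alpha)+x}{\sqrt{x^2+1}},
\]
which is strictly increasing in $\alpha$. Setting $\alpha=(1-c)/2$ makes $\alpha/(1-\alpha)=(1-c)/(1+c)$, and a one-variable calculus step then pins the optimizer to $x=(1+c)/(1-c)$, producing the peak value $\sqrt{1+((1-c)/(1+c))^2}=\sqrt{2c^2+2}/(1+c)$.

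For the robustness instance, $P'$ consists of $cn$ copies of $\hat{o}=(0,0)$, which now contribute to the $y$-zeros; this relaxes the median constraint to $\alpha\leq (1+c)/2$. Taking $\alpha=(1+c)/2$ (so $\alpha/(1-\alpha)=(1+c)/(1-c)$) and repeating the calculus gives the optimizer $x=(1-c)/(1+c)$ and peak value $\sqrt{2c^2+2}/(1-c)$. Under the divisibility convention already adopted in the paper, I will choose $n$ large enough that the resulting $a$, $b$, $d$ are integers.

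The main obstacle I anticipate is certifying that $(0,1)$ really is the utilitarian optimum for each instance, not merely the outcome the mechanism happens to ignore. Because the sum of Euclidean distances is convex, this reduces to showing that $0$ lies in the subdifferential of the objective at $(0,1)$. The $x$-component vanishes by the imposed symmetry $b=d$, while the $y$-component condition simplifies to $a\geq (b+d)/\sqrt{x^2+1}$. Substituting the chosen parameters, this becomes the inequalities $\sqrt{2+2c^2}\geq 1+c$ and $\sqrt{2+2c^2}\geq 1-c$, which further simplify to $(1-c)^2\geq 0$ and $(1+c)^2\geq 0$ respectively, both trivially true. With these verifications in place, the two constructed instances exhibit exactly the consistency and robustness ratios claimed in the lemma.
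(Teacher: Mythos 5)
Your proposal is correct and arrives at exactly the same two instances as the paper (clusters of size $\frac{1\pm c}{4}n$ at $(\pm x,0)$ with $x=\frac{1\pm c}{1\mp c}$ and the remaining $\frac{1\mp c}{2}n$ agents at $(0,1)$), so the approach is essentially identical, with your optimization over $\alpha$ and $x$ merely deriving the parameters the paper writes down directly. The one substantive addition is your subdifferential check that $(0,1)$ really is the utilitarian optimum --- a step the paper asserts without justification --- and your reduction of that check to $(1\mp c)^2\geq 0$ is correct.
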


\begin{proof}
For the first statement (the consistency bound), consider a multiset of points $P$ that is partitioned into three sets, 
$L$, $R$, and $U$,  
such that $p_i = \left(-\frac{1+c}{1-c},~0\right)$ for $i \in L$ with $|L| = \frac{1+c}{4}n$, $p_i = \left(\frac{1+c}{1-c},~ 0\right)$ for $i \in R$ with $|R| = \frac{1+c}{4}n$, and $p_i = \left(0,1\right)$ for $i \in U$ with $|U| = \frac{1-c}{2}n$.
The optimal location is at $(0,1)$, i.e., $o(P) = (0,1)$, and the optimal cost is $C^u(o(P), P) = \frac{1+c}{2} n \sqrt{1+\left(\frac{1+c}{1-c}\right)^2}$.
Since $f(P, \hat{o}=o(P), c) = (0,0)$, we also have $C^u( f(P,\hat{o}=o(P),c), P) = \frac{1+c}{2} n \cdot \frac{1+c}{1-c} + \frac{1-c}{2} n \cdot 1$. 
Therefore, the consistency, $r(P, \hat{o} = o(P),c)$, of this instance $P$ is:
\[r(P, \hat{o} = o(P), c) = \frac{\frac{1+c}{2} \cdot \frac{1+c}{1-c} + \frac{1-c}{2}}{\frac{1+c}{2} \sqrt{1+\left(\frac{1+c}{1-c}\right)^2}} = \frac{\sqrt{2c^2+2}}{1+c}.\]

For the second statement (the robustness bound), consider the following multiset of points $Q$. Let $L$, $R$, and $U$ be subsets of agents such that $p_i = \left(-\frac{1-c}{1+c},~0\right)$ for $i \in L$ with $|L| = \frac{1-c}{4}n$, $p_i = \left(\frac{1-c}{1+c},~0\right)$ for $i \in R$ with $|R| = \frac{1-c}{4}n$, and $p_i = \left(0,1\right)$ for $i \in U$ with $|U| = \frac{1+c}{2}n$. Note that instances $P$ and $Q$ are very similar, except for the locations of the clusters on the $x$-axis and the number of points on each cluster. Given again $o(Q) = (0,1)$ and $f(Q, \hat{o} = (0,0), c) = (0,0)$, we have $C^u(o(Q), Q) = \frac{1-c}{2} n  \sqrt{1+\left(\frac{1-c}{1+c}\right)^2}$ and $C^u(f(Q, \hat{o}=(0,0), c), Q) = \frac{1-c}{2} n  \cdot \frac{1-c}{1+c} + \frac{1+c}{2}n$, leading to a robustness of
\[r(Q, \hat{o} = (0,0),c) = \frac{\frac{1-c}{2} \cdot \frac{1-c}{1+c} + \frac{1+c}{2}}{\frac{1-c}{2} \sqrt{1+\left(\frac{1-c}{1+c}\right)^2}} = \frac{\sqrt{2c^2+2}}{1-c}. \qedhere\]
\end{proof}

We can now combine Lemma~\ref{lem:mainworstcase} and Lemma~\ref{lem:lowerbound} to obtain the consistency and robustness bounds of the \cmp\ mechanism with respect to the utilitarian objective.

\begin{theorem}\label{thm:minsumconsistencyrobusntess}
The CMP mechanism with parameter $c \in [0,1)$ is $\frac{\sqrt{2c^2+2}}{1+c}$-consistent and $\frac{\sqrt{2c^2+2}}{1-c}$-robust for the utilitarian objective.
\end{theorem}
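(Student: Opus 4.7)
The plan is to combine Lemma~\ref{lem:mainworstcase} and Lemma~\ref{lem:lowerbound} with a direct worst-case analysis over the restricted classes $\famPC(c)$ and $\famPR(c)$. By Lemma~\ref{lem:mainworstcase}, the consistency and robustness of \cmp\ equal the maxima of $\approxratio(P, \fpred, c)$ on $\famPC(c)$ and $\famPR(c)$, respectively. Since Lemma~\ref{lem:lowerbound} already furnishes instances in these classes achieving $\sqrt{2c^2+2}/(1+c)$ and $\sqrt{2c^2+2}/(1-c)$, the task reduces to proving matching upper bounds on the two suprema.

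Consider any $P$ in $\famPC(c)$ or $\famPR(c)$. Every preferred location lies in $\{(0,1), (-x,0), (x,0)\}$ for some $x \geq 0$, so I would parametrize $P$ by the counts $u = |\{p_i = (0,1)\}|$ and $t = n - u$ together with the coordinate $x$. A short computation of the two social costs gives the clean form
\[\approxratio(P, \fpred, c) \;=\; \frac{u + tx}{t\sqrt{1+x^2}} \;=\; \frac{u}{t\sqrt{1+x^2}} + \frac{x}{\sqrt{1+x^2}}.\]
Differentiating in $x$, the unique interior critical point is $x^{\star} = t/u$, at which the ratio collapses to $\sqrt{1 + (u/t)^2}$; checking the endpoints $x = 0$ and $x \to \infty$ confirms this is the global maximum in $x$. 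Hence the problem reduces to bounding $u/t$ subject to the median conditions that force $\falg(P, \fpred, c) = (0,0)$.

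The remaining step is to derive these bounds on $u/t$ separately for the two classes. The $x$-median condition is trivially satisfied whenever the left and right clusters both have size at most $(1+c)n/2$, so only the $y$-median condition is binding. For $\famPC(c)$, $\fpred = (0,1)$, so $P'$ contributes $cn$ points with $y$-coordinate $1$; requiring the $y$-median of $P \cup P'$ to equal $0$ forces $u + cn \leq (1+c)n/2$, i.e., $u \leq (1-c)n/2$, giving $u/t \leq (1-c)/(1+c)$ and the ratio bound $\sqrt{2c^2+2}/(1+c)$. For $\famPR(c)$, $\fpred = (0,0)$, so now $P'$ contributes $cn$ points at $y$-coordinate $0$; the same median arithmetic yields $u \leq (1+c)n/2$, flipping the bound to $u/t \leq (1+c)/(1-c)$ and the ratio $\sqrt{2c^2+2}/(1-c)$. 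Together with Lemma~\ref{lem:mainworstcase} and the matching lower bounds from Lemma~\ref{lem:lowerbound}, this establishes both approximation guarantees.

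The main obstacle is conceptual rather than computational: one must recognize that the asymmetric $(1\pm c)$ dependence between consistency and robustness arises entirely from the direction in which the phantom set $P'$ shifts the $y$-median (toward $(0,1)$ when $\fpred = \fopt$ and toward $(0,0)$ when $\fpred$ is adversarial), so that a single parametric optimization on the $(t, x)$-plane handles both cases simultaneously. The rest is routine, though some care is needed to verify the endpoint behavior of the ratio in $x$ and to confirm that the median inequalities hold with equality at the worst case under the standing assumption that $n$ and $cn$ are multiples of $4$.
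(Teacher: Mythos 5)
Your proposal is correct and follows essentially the same route as the paper: reduce to the worst-case classes $\famPC(c)$ and $\famPR(c)$ via Lemma~\ref{lem:mainworstcase}, match the lower-bound instances of Lemma~\ref{lem:lowerbound}, and maximize the ratio over the cluster parametrization subject to the $y$-median constraint forced by the $cn$ phantom points. The only cosmetic difference is the order of optimization --- you optimize over $x$ first, collapsing the ratio to the monotone expression $\sqrt{1+(u/t)^2}$ before bounding $u/t$, whereas the paper first argues that the number of agents at $(0,1)$ should be maximal and then optimizes over $x$ --- and both orders yield the same bounds.
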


\begin{proof}
We first argue the consistency guarantee. From Lemma~\ref{lem:mainworstcase} we know that for any confidence value $c\in [0,1)$ and given any instance, we can always find an instance with weakly worse consistency in $\famPC(c)$, i.e., a multiset $P$ such that $o(P) = (0,1)$, $f(P, \hat{o} = o(P),c) = (0,0)$ and there exist $x \in \R_{\geq 0}$ such that $p \in \{(0,1), (x,0), (-x,0)\}$ for all $p \in P$. Note that for any value of $x$, the consistency is maximized when the number of agents on $(0,1)$ is maximized. To see this, note that each agent on $(0,1)$ suffers no cost according to the optimal solution but a cost of $1$ according to the mechanism output, whereas each agent on $(x, 0)$ or $(-x,0)$ suffers a cost of $\sqrt{x^2+1} > x$ according to the optimal solution and a cost of $x$ according to the mechanism output. Since $f(P, \hat{o} = o(P),c) = (0,0)$ and there are $cn$ predicted points on $(0,1)$, the number of agents on $(0,1)$ in the worst case should be $\frac{1-c}{2}n$.
We can then write the consistency as follows:
\[\alpha = \frac{C^u(f(P, \hat{o} = o(P),c), P)}{C^u(o(P),P)} = \frac{\frac{1+c}{2}n \cdot x + \frac{1-c}{2}n}{\frac{1+c}{2}n \cdot \sqrt{1+x^2}} = \frac{1-c + (1+c)x}{(1+c)\sqrt{1+x^2}}.\]
Taking the first derivative with respect to $x$ we get: 
\[\frac{d \alpha}{dx} = \frac{1+c-(1-c)x}{\left(1+c\right)\left(1+x^2\right)\sqrt{1+x^2}}.\]
Solving $\frac{d \alpha}{dx} = 0$ we get that $x =\frac{1+c}{1-c}$. Notice that the denominator of $\frac{d \alpha}{dx}$ is always positive and for any $x< \frac{1+c}{1-c}$, the numerator is positive and for any $x >\frac{1+c}{1-c}$, the numerator is negative, we therefore have that $\alpha$ is maximized at $x = \frac{1+c}{1-c}$. Since the agents on $(x,0)$ and $(-x,0)$ are equidistant from both $o(P)$ and $f(P, \hat{o} = o(P),c)$, the instance is identical to the lower bound instance $P$ in Lemma~\ref{lem:lowerbound}. Therefore we have
\[\alpha = r(P, \hat{o} = o(P), c) = \frac{\sqrt{2c^2+2}}{1+c}.\]

The proof for the robustness guarantee is similar. From Lemma~\ref{lem:mainworstcase} we know that for any confidence value $c\in [0,1)$ and given any instance, we can always find an instance with weakly worse robustness in $\famPR(c)$, i.e., a multiset of points $Q$ such that $o(Q) = (0,1)$, $f(Q, \hat{o} = (0,0),c) = (0,0)$ and there exist $x \in \R_{\geq 0}$ such that $q \in \{(0,1), (x,0), (-x,0)\}$ for all $q \in Q$. Note that,by the exact same reasoning as for consistency, for any value of $x$, the approximation ratio is maximized when the number of agents on $(0,1)$ is maximized. Since again $f(Q, \hat{o} = (0,0),c) = (0,0)$ and there are $cn$ predicted points on $(0,0)$, the number of agents on $(0,1)$ in the worst case should be $\frac{1+c}{2}n$.
We can then write the robustness as follows:
\[\beta = \frac{C^u(f(Q, \hat{o} = (0,0),c), Q)}{C^u(o(Q), Q)} = \frac{\frac{1-c}{2}n \cdot x + \frac{1+c}{2}n}{\frac{1-c}{2}n \cdot \sqrt{1+x^2}} = \frac{1+c + (1-c)x}{(1-c)\sqrt{1+x^2}}.\]
Taking derivative with respect to $x$ and setting it to $0$ we get %\vnote{Why are we setting it to zero already? Same as above}:
\[\frac{d \beta}{dx} = \frac{1-c-(1+c)x}{\left(1-c\right)\left(1+x^2\right)\sqrt{1+x^2}}. \]
Solving $\frac{d \beta}{dx} = 0$ we get that $x =\frac{1-c}{1+c}$. Notice that the denominator of $\frac{d \beta}{dx}$ is always positive and for any $x< \frac{1-c}{1+c}$, the numerator is positive and for any $x >\frac{1-c}{1+c}$, the numerator is negative, we therefore have that $\beta$ is maximized at $x = \frac{1-c}{1+c}$. Since the agents on $(x,0)$ and $(-x,0)$ are equidistant from both $o(Q)$ and $f(Q, \hat{o} = (0,0),c)$, the instance is identical to the lower bound instance $Q$ in Lemma~\ref{lem:lowerbound}. Therefore we have
\[\beta = r(Q, \hat{o} = (0,0),c) = \frac{\sqrt{2c^2+2}}{1-c}.\qedhere\]
\end{proof}

\subsection{Optimality of the mechanism}\label{sec:OptUtil}

The CMP mechanism allows us to achieve consistency better than $\sqrt{2}$, trading it off against robustness. Our next result shows that the trade-off achieved by \cmp \ is optimal.

\begin{theorem}
For any deterministic, strategyproof, and anonymous mechanism that guarantees a consistency of $\frac{\sqrt{2c^2+2}}{1+c}$ for some constant $c\in (0, 1)$, its robustness is no better than $\frac{\sqrt{2c^2+2}}{1-c}$  for  the utilitarian objective.
\end{theorem}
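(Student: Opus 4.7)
The plan is to invoke the \cite{PSS93} characterization (as used in the proof of Theorem~\ref{thm:LBminmax}) to reduce any deterministic, strategyproof, anonymous, and unanimous mechanism to a generalized coordinatewise median with $n-1$ phantom points $Q(\hat{o})$ that may depend on the prediction $\hat{o}$. Unanimity again follows from boundedness of the approximation, so this reduction is without loss, and the rest of the argument will operate purely on the coordinatewise medians induced by $Q(\hat{o})$.

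I would first fix $\hat{o} = (0,0)$ without loss of generality and consider a \emph{consistency instance} whose optimum equals $\hat{o}$, obtained by translating the consistency instance of Lemma~\ref{lem:lowerbound} so that its optimum sits at the prediction (i.e., clusters of $\frac{1+c}{4}n$ agents at $(\pm A, 1)$ with $A = \frac{1+c}{1-c}$ and a minority cluster of $\frac{1-c}{2}n$ agents at $(0,0)$). The assumed consistency bound $\alpha = \frac{\sqrt{2c^2+2}}{1+c}$ together with a direct computation of the utilitarian cost along the $y$-axis will constrain the mechanism's output $y$-coordinate on this instance to a bounded interval around $0$. Combining this with a mirrored consistency instance obtained by reflecting the clusters across the $x$-axis (which shares the same prediction and the same optimum $\hat{o}$) yields a two-sided constraint on the $y$-median, and a counting argument on the coordinatewise median of $n$ agents plus $n-1$ phantoms will then force at least $cn$ phantoms of $Q(\hat{o})$ to share the $y$-coordinate of $\hat{o}$. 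The symmetric argument, using a pair of instances obtained by reflecting across the $y$-axis, gives the analogous bound for the $x$-coordinate.

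I would then feed this phantom structure into the robustness instance of Lemma~\ref{lem:lowerbound} with the same prediction $\hat{o} = (0,0)$: $\frac{1-c}{4}n$ agents at $(\pm B, 0)$ with $B = \frac{1-c}{1+c}$ and $\frac{1+c}{2}n$ agents at $(0,1)$, whose optimum is $(0,1)$. With at least $\frac{1+c}{2}n$ points ($\frac{1-c}{2}n$ agents on the $x$-axis together with $\geq cn$ phantoms at $\hat{o}$) having $y$-coordinate at most $0$, the coordinatewise median forces the output's $y$-coordinate to be at most $0$, while symmetry of the phantom and agent configuration forces the $x$-coordinate to be $0$. A direct cost computation, identical to the one in the proof of Lemma~\ref{lem:lowerbound}, then yields a ratio of at least $\frac{\sqrt{2c^2+2}}{1-c}$, with the minimum attained at $y=0$ and any $y<0$ only increasing the ratio. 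The hard part will be the structural step: unlike the egalitarian lower bound in Theorem~\ref{thm:LBminmax}, which forces \emph{all} $n-1$ phantoms to equal $\hat{o}$ from a single instance, the utilitarian consistency bound leaves slack around $\hat{o}$, so a single instance does not suffice and the precise $cn$ count requires the combination of mirrored consistency instances together with a careful analysis of how much deviation from the optimum the bound $\alpha$ permits.
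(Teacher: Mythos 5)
Your high-level plan (reduce to a generalized coordinatewise median with $n-1$ phantom points via the \cite{PSS93} characterization, use consistency instances to pin down the phantom structure, then feed a robustness instance) matches the paper's, but both the structural condition you extract and the way you use it break down. The decisive error is the final count: a GCM from the characterization has $n-1$ phantoms, so the coordinatewise median of $P\cup P'$ is the $n$-th order statistic of $2n-1$ values, and to pin the output's $y$-coordinate at or below $0$ on the robustness instance of Lemma~\ref{lem:lowerbound} you need $n$ points with $y\leq 0$. Your count of $\frac{1-c}{2}n$ agents plus $cn$ phantoms at $\hat o$ gives only $\frac{1+c}{2}n < n$ such points; the remaining $(1-c)n-1$ phantoms are unconstrained by your structural step and could all sit at $y\geq 1$, in which case the median lands at $y=1$, the mechanism outputs the optimum $(0,1)$, and the instance certifies nothing. (You have implicitly carried over the threshold $\lceil\frac{(1+c)n}{2}\rceil$, which is correct for CMP's $cn$ phantoms, not for the $n-1$ phantoms of a general GCM.) The structural claim itself is also not forced: a GCM whose $n-1$ phantoms all lie strictly above $\hat o$ passes both of your mirrored fixed-scale consistency instances, yet has zero phantoms sharing the $y$-coordinate of $\hat o$.

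What the paper actually proves is a condition on a \emph{difference}: letting $q_a$ and $q_b$ be the numbers of phantoms weakly above and strictly below $\hat o$, consistency forces $q_a-q_b\geq cn$. Crucially, its consistency instance is scaled by $\epsilon$, the gap between $y_{\hat o}$ and the highest phantom strictly below it, so that all $q_b$ low phantoms count together with the low agent cluster and the median computation is exact; a fixed-scale instance like yours cannot classify phantoms lying strictly between the two agent levels. The paper's robustness instance is then oriented to exploit exactly this one-sided bias: the true optimum and the agent majority are placed \emph{below} the prediction, so the roughly $n$ points weakly above (the $q_a$ phantoms plus the minority agents at level $y_{\hat o}$) pin the median at $y_{\hat o}$, bounded away from the optimum. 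Your robustness instance points the other way (optimum above the prediction, majority at the optimum) and would need roughly $\frac{(1+c)n}{2}$ phantoms weakly below $\hat o$; that bound does follow from the mirrored difference condition, but not from ``$cn$ phantoms at $\hat o$'' alone. To repair the argument you would need to (i) use $\epsilon$-scaled consistency instances, (ii) extract the difference condition(s) rather than a count of phantoms exactly at $\hat o$, and (iii) redo the final count against the threshold $n$ out of $2n-1$, either by flipping the robustness instance as the paper does or by invoking both mirrored difference conditions.
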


\begin{proof}

We first note that any mechanism $f$ with a bounded robustness needs to be \emph{unanimous}, i.e., given a set of points $P$ where all the points are at the same location ($p_i=p_j$ for all $i,j\in [n]$), the mechanism needs to place the facility at that same location, i.e., $f(P)=p_i$. If not, then its cost would be positive, while the optimal cost is zero, by placing the facility at the same location as all the points. Therefore, we can restrict our attention to mechanisms that are unanimous. Using the characterization of \cite{PSS93}, we know that any deterministic, strategyproof, anonymous, and unanimous mechanism in our setting takes the form of a generalized coordinatewise median  (GCM) mechanism with a set $P'$ of $n-1$ constant points. The rest of our proof first shows that in order to achieve a consistency of $\frac{\sqrt{2c^2+2}}{1+c}$ for some constant $c\in (0, 1)$, the set of $n-1$ points $P'$ used by the GCM mechanism would need to
satisfy the following condition: the number of points in $P'$ that are weakly above $\hat{o}$ (i.e., their $y$-coordinate is at least $y_{\hat{o}}$) need to be at least $cn$ more than the number of points in $P'$ that are strictly below it (i.e., their $y$-coordinate is less than $y_{\hat{o}}$).
Then, we show that if $P'$ satisfies this condition, then the robustness is no better than $\frac{\sqrt{2c^2+2}}{1-c}$.

Consider any GCM mechanism that uses a set  $P'$ of $n-1$ points and let $q_a$ be the number of these points that are weakly above $\hat{o}$ (i.e., their $y$-coordinate is at least $y_{\hat{o}}$) and $q_b$ be the number of points that are strictly below $\hat{o}$ (i.e., their $y$-coordinate is less than $y_{\hat{o}}$). Assume that $q_a - q_b = kn$ where $k<c$ and let $\epsilon>0$ be a constant such that the maximum $y$-coordinate among the $q_b$ points that are strictly below the prediction is $y_{\hat{o}}-\epsilon$. Then, consider the instance where the set of actual agent points $P$ has $(1-k)n/2$ points at $\hat{o}$, and the remaining $(1+k)n/2$ points are divided equally between points $(x_{\hat{o}}-\frac{(1-k)\epsilon}{1+k},~ y_{\hat{o}}-\epsilon)$ and $(x_{\hat{o}}+\frac{(1-k)\epsilon}{1+k},~ y_{\hat{o}}-\epsilon)$\footnote{We assume $kn$ is a multiple of $4$ to avoid integrality issues. If $kn$ is not multiple of $4$, we can modify the instance such that there are $\floor{(1-k)n/2}$ agents at $\hat{o}$ and the remaining agents are divided between the given two points such that each point has at least one agent. It is easy to verify that the optimal facility location  would be at $\hat{o}$ and the mechanism output would have a $y$-coordinate at most $y_{\hat{o}} - \epsilon$. A similar argument also holds for robustness.}. Using the same steps that we used in the proof of Lemma~\ref{lem:lowerbound}, we can verify that the optimal facility location in this case would be at $\hat{o}$ (so the prediction is correct). However, the location where the mechanism places this facility has a $y$-coordinate at most $y_{\hat{o}}-\epsilon$. This is because the number of constant and actual agent points (i.e., points in $P\cup P'$) whose $y$-coordinate is at most  $y_{\hat{o}}-\epsilon$ are $q_b+(1+k)n/2$, while the remaining points are $q_a+(1-k)n/2$. Using the fact that $q_a - q_b = kn$, the median with respect to the $y$-coordinate is at most $y_{\hat{o}}-\epsilon$. This leads to a consistency of $\frac{\sqrt{2k^2+2}}{1+k}$ which is worse than $\frac{\sqrt{2c^2+2}}{1+c}$ (since $k<c$ and the latter is an decreasing function of $c$ on $[0,1)$). 
Therefore, to achieve a consistency of $\frac{\sqrt{2c^2+2}}{1+c}$, the mechanism needs to have $q_a - q_b\geq cn$. 

We now consider any GCM mechanism with $q_a - q_b=kn$ where $k>c$ and show that its robustness is going to be worse than $\frac{\sqrt{2c^2+2}}{1-c}$. To verify this, consider the instance whose set of actual agent points $P$ contains $(1+k)/2$ points on $(x_{\hat{o}},~ y_{\hat{o}}-1)$ and the remaining $(1-k)/2$ points divided equally between $(x_{\hat{o}}-\frac{1-k}{1+k},~ y_{\hat{o}})$ and $(x_{\hat{o}}+\frac{1-k}{1+k},~ y_{\hat{o}})$.  Using the same steps that we used in the proof of Lemma~\ref{lem:lowerbound}, we can verify that the optimal facility location in this case would be at $(x_{\hat{o}},~ y_{\hat{o}}-1)$ (so the prediction is incorrect), but the outcome of the mechanism will have a $y$ coordinate of at least $y_{\hat{o}}$, leading to a robustness of $\frac{\sqrt{2k^2+2}}{1-k}$, which is worse than $\frac{\sqrt{2c^2+2}}{1-c}$ (since $k>c$ and the latter is an increasing function of $c$).

Therefore, the only way to achieve the two desired guarantees is to have $q_a - q_b=cn$ which (running through the same argument and replacing $k$ with $c$) gives you consistency no better than $\frac{\sqrt{2c^2+2}}{1+c}$ and robustness no better than $\frac{\sqrt{2c^2+2}}{1-c}$.
\end{proof}

\subsection{Approximation as a function of the prediction error}\label{sec:errorUtil}

We extend the consistency and robustness results for \cmp \ to obtain an approximation ratio as a function of the prediction error $\eta$. This approximation gracefully degrades from the consistency bound $\frac{\sqrt{2c^2 + 2}}{1+c}$ when $\eta = 0$ to the robustness bound $\frac{\sqrt{2c^2 + 2}}{1-c}$ as a function of $\eta$.

\begin{lemma}\label{lem:shiftcm2}
Given a set of points $P$ and two predictions $\hat{o}$ and $\tilde{o}$, let $f(P, \hat{o})$ and $f(P, \tilde{o})$ be the respective facility locations chosen by the CMP mechanism. Then, the distance between these two facility locations is at most the distance between the two predictions, i.e.,
\[d(f(P, \hat{o}), f(P, \tilde{o})) ~\leq~ d(\hat{o}, \tilde{o}).\]
\end{lemma}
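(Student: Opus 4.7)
The plan is to mirror the proof of Lemma~\ref{lem:shiftcm}, arguing coordinate-by-coordinate, but replacing the case analysis specific to the Minimum Bounding Box rule with a general Lipschitz property of the median. As in that proof, write $(x_{\hat{f}}, y_{\hat{f}}) = f(P, \hat{o})$ and $(x_{\tilde{f}}, y_{\tilde{f}}) = f(P, \tilde{o})$, let $dx_f = |x_{\hat{f}} - x_{\tilde{f}}|$, $dy_f = |y_{\hat{f}} - y_{\tilde{f}}|$, $dx_o = |x_{\hat{o}} - x_{\tilde{o}}|$, and $dy_o = |y_{\hat{o}} - y_{\tilde{o}}|$. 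It suffices to prove $dx_f \le dx_o$ and $dy_f \le dy_o$, since then $d(f(P,\hat{o}), f(P,\tilde{o})) = \sqrt{dx_f^2 + dy_f^2} \le \sqrt{dx_o^2 + dy_o^2} = d(\hat{o}, \tilde{o})$.

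Focus on the $x$-coordinate (the argument for $y$ is identical). By definition of \cmp, $x_{\hat{f}}$ is the median of the multiset $A = \{x_1, \ldots, x_n\} \cup \{x_{\hat{o}}\}^{cn}$ and $x_{\tilde{f}}$ is the median of $B = \{x_1, \ldots, x_n\} \cup \{x_{\tilde{o}}\}^{cn}$. Without loss of generality assume $x_{\hat{o}} \le x_{\tilde{o}}$ and set $\delta = x_{\tilde{o}} - x_{\hat{o}} = dx_o \ge 0$. Then $B$ is obtained from $A$ by shifting exactly the $cn$ ``prediction copies'' to the right by $\delta$ while leaving the $n$ agent points untouched.

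The key step, which I expect to be the only non-routine part, is the following Lipschitz property of the median: if a multiset $B$ is obtained from $A$ by shifting some subset of points to the right by at most $\delta$ (and the other points unchanged), then $\mathrm{med}(A) \le \mathrm{med}(B) \le \mathrm{med}(A) + \delta$. I would prove this by a rank-comparison argument. Index both multisets in non-decreasing order. Because every element of $B$ is at least the corresponding element of $A$ (point-wise comparison after sorting, since shifting a subset upward can only increase each order statistic), the $k$-th order statistic of $B$ is at least that of $A$, giving $\mathrm{med}(B) \ge \mathrm{med}(A)$. For the other direction, compare $B$ to the multiset $A + \delta$ in which \emph{every} point is shifted right by $\delta$; each point of $B$ is at most the corresponding point of $A + \delta$, so the $k$-th order statistic of $B$ is at most that of $A + \delta$, which equals $\mathrm{med}(A) + \delta$. (The same argument handles the tie-breaking convention of taking the smaller of two medians when $|A|$ is even, since both multisets have the same size.)

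Applying this Lipschitz property to $A$ and $B$ with shift $\delta = dx_o$ yields $|x_{\hat{f}} - x_{\tilde{f}}| \le dx_o$. The identical argument on the $y$-axis gives $|y_{\hat{f}} - y_{\tilde{f}}| \le dy_o$, and combining the two coordinate bounds as above completes the proof. No new machinery beyond the median Lipschitz fact is needed, and in particular the specific value of the confidence parameter $c$ plays no role — only the fact that the same number $cn$ of prediction copies appears in both invocations matters.
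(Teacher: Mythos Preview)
Your proof is correct and follows essentially the same coordinate-wise approach as the paper: reduce to showing $dx_f\le dx_o$ and $dy_f\le dy_o$, then use a rank/order-statistic argument to show the median moves by at most the shift applied to the prediction copies. Your version is a bit more explicit (you state the Lipschitz property of the median as a standalone fact and handle both directions and the even-size tie-break), but the underlying idea is identical to the paper's.
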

\begin{proof}
Let $(x_{\hat{f}}, y_{\hat{f}}) = f(P,\hat{o})$ and $(x_{\tilde{f}}, y_{\tilde{f}}) = f(P,\tilde{o})$ be facility locations returned by the CMP mechanism given predictions $\hat{o}$ and $\tilde{o}$, respectively,and let $dx_f=|x_{\hat{f}}-x_{\tilde{f}}|$ and $dy_f=|y_{\hat{f}}-y_{\tilde{f}}|$ be the difference of their $x$ and $y$ coordinates. Similarly, let $dx_o=|x_{\hat{o}}-x_{\tilde{o}}|$ and $dy_o=|y_{\hat{o}}-y_{\tilde{o}}|$ be the corresponding differences for the coordinates of the two predictions. To prove this lemma, we argue that $dx_f\leq dx_o$ and $dy_f \leq dy_o$, implying the desired inequality, since 
\[d(f(P, \hat{o}), f(P, \tilde{o})) ~=~ \sqrt{dx_f^2+dy_f^2}~\leq~\sqrt{dx_o^2+dy_o^2} ~=~ d(\hat{o}, \tilde{o}).\]
We first focus on the $x$-coordinate and, without loss of generality, we assume that $x_{\hat{o}}\leq x_{\tilde{o}}$ and $y_{\hat{o}}\leq y_{\tilde{o}}$, i.e., that the first prediction is weakly on the left and bottom of the second one. When the prediction is $\hat{o}$, we have at least half points with $x$-coordinate smaller than or equal to $x_{\hat{f}}$. As we move the prediction to $\tilde{o}$, we move some points by at most $d_{x_o}$, we have at least $k+1$ points with $x$-coordinate smaller than or equal to $x_{\hat{f}} + d_{x_o}$. Thus, we have $d_{x_f} = |d_{\tilde{f}} - d_{\hat{f}}| \leq d_{x_o}$. Using the same sequence of argument for the $y$-coordinate implies that $d_{y_f} \leq d_{y_o}$ and concludes the proof.
\end{proof}

\begin{theorem}\label{lem:ratiopred-minsum}
The
CMP mechanism with parameter $c \in [0, 1)$ achieves a $\min \left\{ \frac{\sqrt{2c^2 + 2}}{1+c} + \eta, \frac{\sqrt{2c^2 + 2}}{1-c} \right\}$-approximation, where $\eta$ is the prediction error, for the utilitarian objective. 
\end{theorem}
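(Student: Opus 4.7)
The plan is to mimic the argument used for Theorem~\ref{thm:2dminmaxerror}, adapting it to the utilitarian cost. The robustness bound $\frac{\sqrt{2c^2+2}}{1-c}$ is already established unconditionally by Theorem~\ref{thm:minsumconsistencyrobusntess}, so the second term in the $\min$ needs no further work; the task is to show the error-dependent bound $\frac{\sqrt{2c^2+2}}{1+c} + \eta$. This is the natural ``consistency plus shift'' bound, and the two ingredients needed are already in place: the consistency bound from Theorem~\ref{thm:minsumconsistencyrobusntess} and the Lipschitz-style stability bound from Lemma~\ref{lem:shiftcm2}.

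First, I would unpack the definition of $\eta$: since $\eta = d(\hat{o},o(P))/C^u(o(P),P)$, we have $d(\hat{o},o(P)) = \eta \cdot C^u(o(P),P)$. Applying Lemma~\ref{lem:shiftcm2} with $\tilde{o} = o(P)$ gives
\[
d\bigl(f(P,\hat{o},c),\; f(P,o(P),c)\bigr) \;\leq\; d(\hat{o},o(P)) \;=\; \eta \cdot C^u(o(P),P).
\]
Next, I would use the triangle inequality on each agent's individual distance and then average, noting that $C^u(\cdot,P)$ is just the average of these distances:
\[
C^u(f(P,\hat{o},c),P) \;=\; \frac{1}{n}\sum_{p \in P} d(p,f(P,\hat{o},c)) \;\leq\; \frac{1}{n}\sum_{p \in P}\Bigl( d(p,f(P,o(P),c)) + d(f(P,o(P),c),f(P,\hat{o},c)) \Bigr).
\]
The first term on the right is $C^u(f(P,o(P),c),P)$, which is at most $\frac{\sqrt{2c^2+2}}{1+c}\cdot C^u(o(P),P)$ by the consistency statement of Theorem~\ref{thm:minsumconsistencyrobusntess} (applied at $\hat{o}=o(P)$). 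The second term, averaged over agents, is just $d(f(P,o(P),c),f(P,\hat{o},c))$, which is at most $\eta \cdot C^u(o(P),P)$ by the stability bound above. Combining these yields
\[
C^u(f(P,\hat{o},c),P) \;\leq\; \left(\frac{\sqrt{2c^2+2}}{1+c} + \eta\right) C^u(o(P),P),
\]
as desired.

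Finally, taking the minimum with the unconditional robustness bound from Theorem~\ref{thm:minsumconsistencyrobusntess} completes the argument. I do not anticipate any serious obstacle: the only subtlety is that, unlike the egalitarian case where we had $f(P,o)=o$ exactly, here the consistency bound is strictly greater than $1$, so we keep $C^u(f(P,o(P),c),P)$ as a term in the triangle-inequality chain and control it via the consistency guarantee rather than collapsing it to zero. Once this observation is made, the rest is a direct reuse of the structure of the egalitarian proof.
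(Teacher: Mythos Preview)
Your proposal is correct and matches the paper's own proof essentially step for step: both invoke the robustness half of Theorem~\ref{thm:minsumconsistencyrobusntess} for the second term in the $\min$, then use Lemma~\ref{lem:shiftcm2} with $\tilde{o}=o(P)$ together with the triangle inequality and the consistency half of Theorem~\ref{thm:minsumconsistencyrobusntess} to obtain the $\frac{\sqrt{2c^2+2}}{1+c}+\eta$ bound. Your remark about not collapsing $C^u(f(P,o(P),c),P)$ to zero (unlike the egalitarian case) is exactly the one adjustment the paper makes relative to the proof of Theorem~\ref{thm:2dminmaxerror}.
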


\begin{proof}

Theorem~\ref{thm:minsumconsistencyrobusntess} already shows that the worst-case approximation of the CMP mechanism is at most $\frac{\sqrt{2c^2 + 2}}{1-c}$, so we just need to prove that it is also at most $\frac{\sqrt{2c^2 + 2}}{1+c} + \eta$.

We first note that the error $\eta$ in the prediction is equal to the normalized distance between the prediction and the actual optimal facility location, i.e., $d(\hat{o},o)/C^u(o,P)$, so $d(\hat{o},o)=\eta \cdot C^u(o,P)$. Using Lemma~\ref{lem:shiftcm2} and substituting $\tilde{o}$ with the actual optimal facility location $o$, i.e., $\tilde{o}=o$, we get 
\begin{equation}\label{ineq:lemmbound2}
d(f(P, \hat{o},c), f(P, o,c)) ~\leq~ d(\hat{o}, o) ~=~ \eta \cdot C^u(o,P).
\end{equation}
By Theorem~\ref{thm:minsumconsistencyrobusntess}, we also know that the CMP mechanism is $\frac{\sqrt{2c^2 + 2}}{1+c}$ consistent, i.e., $C^u(f(P,o,c),P) \leq \frac{\sqrt{2c^2 + 2}}{1+c} C^u(o,P) $. We can therefore conclude that
\begin{align*}
    C^u(f(P, \fpred,c), P )  & = \frac{1}{n}\sum _{i \in [n]}d(p_i, f(P, \fpred,c)) \\
    & \leq \frac{1}{n}\sum_{i \in [n]} \left( d(p_i, f(P, o,c)) + d(f(P, o,c), f(P, \fpred,c)) \right)\\
     & \leq C^u(f(P,o,c),P) + \eta \cdot C^u(o,P)\\
     & \leq \left(\frac{\sqrt{2c^2 + 2}}{1+c} + \eta \right) \cdot C^u(o,P)
\end{align*}
where the first equation is by definition of the utilitarian social cost, the first inequality uses the triangle inequality, the second inequality uses Inequality~\eqref{ineq:lemmbound2} and the definition of the utilitarian social cost, and the last inequality uses the consistency guarantee of the mechanism, i.e., that $C^u(f(P,o,c),P) \leq \frac{\sqrt{2c^2 + 2}}{1+c} C^u(o,P)$.
\end{proof}

\subsection{Proof of Lemma~\ref{lem:mainworstcase}}\label{sec:technicallemma}

In this section we prove Lemma~\ref{lem:mainworstcase}, which shows that for any confidence parameter $c$ there exists a worst-case set of points $P$ for the performance of \cmp \ within the family of Clusters-and-Opt-on-Axes (\famP) instances, defined in Definition~\ref{def:cooa}. At a high level, we argue that for any multiset of points $P$, there exists a multiset of points $Q$ in \famP \ such that the $\cmp$ mechanism achieves an approximation ratio on $Q$ that is no better than the approximation it achieves on $P$. We construct $Q$ via a series of transformations that starts at an arbitrary $P$ and moves points in a manner that weakly increases the approximation ratio (some of the lemma proofs are deferred to Appendix~\ref{sec:apptechnicallemma}).

This high level approach is similar to the one used by \cite{GH21} \ to obtain a $\sqrt{2} \frac{\sqrt{n^2+1}}{n+1}$ approximation for the coordinate-wise median mechanism in $\R^2$ and the special case where $n$ is odd; the analysis of several of our lemmas is similar to the analysis of this previous result (e.g., Lemmas~\ref{lem:firstytox} and~\ref{lem:doublerotation}), but a crucial difference in our analysis is the impact of the prediction on the mechanism. In particular, as we move points to transform an instance into another instance, this can end up moving the optimal location as well as the outcome of the CMP mechanism in non-trivial ways. To address this issue we introduce multiple new ideas (e.g. Lemmas~\ref{lem:equalsizes}, \ref{lem:ooa},~\ref{lem:geotoaxis}, and \ref{lem:ytogm}).

\begin{figure}[h]
    \centering
    \includegraphics[width= \textwidth]{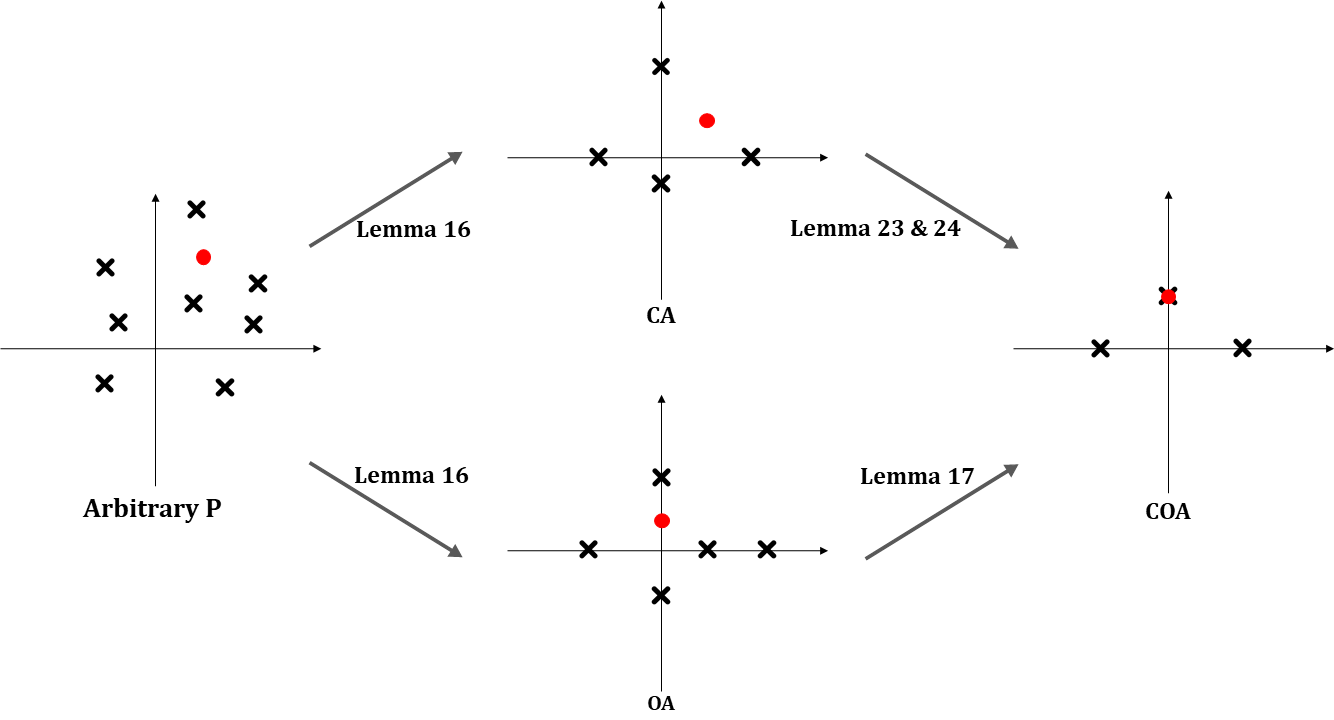}
    \caption{Overview of instance transformations used to prove Lemma~\ref{lem:mainworstcase}.}
    \label{fig:instancetrans}
\end{figure}

We now provide an overview of the series of transformations (see Figure~\ref{fig:instancetrans} for an illustration). In Section~\ref{sec:deffams}, we define the family of \coa \ instances, where the points are all located at four clusters, one on each half-axis, and then the family of \ooa \ instances, where the points and the optimal location are all located on one of the axes. In Section~\ref{sec:coaooa}, we show that an arbitrary instance $P$ can be transformed into either an instance in \coa \ or an instance in  \ooa \ (without improving the approximation ratio). We then show in Section~\ref{sec:ooa} that an instance in \ooa  \ can be transformed to an instance in \famP. The main difficulty is then to transform an instance in \coa \ to an instance in \famP, which we do in Section~\ref{sec:coa}. Finally we combine all these steps to prove Lemma~\ref{lem:mainworstcase} in Section~\ref{sec:mainworstcase}.

Throughout this section, we consider instances that consist of a multiset of points $P$ and a prediction $\fpred$ such that the output of $\cmp$ is at the origin and the optimal location lies weakly in the top right quadrant. To verify that this is without loss of generality, note that given any instance, if we move all the points and the prediction in the same direction and by the same distance, we get an instance where both the CMP mechanism and the optimal facility location have also moved along this same direction and by the same distance. Therefore, the approximation factor is invariant to such changes. As a result, given any instance, we can always generate a new instance such that the output of $\cmp$ is at the origin, without affecting the approximation factor. Similarly, given any instance, the points can be reflected across the horizontal and/or the vertical axes to generate a new ``flipped'' instance such that the output of  \cmp \  lies weakly in the top right quadrant without affecting the approximation factor (e.g., if it lies in the bottom left quadrant originally, we can first reflect across the horizontal axis and then the vertical one). We also assume that the prediction is such that $\fpred = (0,0)$ for the robustness analysis. To verify that this is without loss of generality as well, first note that we have already restricted our attention to instances such that the output of the mechanism is at the origin, and then observe that changing the prediction to also be at the origin does not change the output of the \cmp\ mechanism (if the coordinatewise median of $P\cup P'$ is the origin when $\fpred \neq (0,0)$, it will remain the coordinatewise median if we let $\fpred = (0,0)$). Therefore, this does not affect the outcome of the mechanism and, since it also does not affect the optimal facility location, the robustness   remains the same. 

\subsubsection{The \coa\ and \ooa \  families} 
\label{sec:deffams}
We define the \coa\ and \ooa \  families of points $P$.
Let $A_{+x} = \{(x, 0) : x \geq 0\}$ and $A^{>}_{+x} = \{(x, 0) : x > 0\}$ be the set of all points on the positive and strictly-positive $x$-axis. We also define $A_{-x}$, $A^{<}_{-x}$, $A_{+y}$, $A^{>}_{+y}$, $A_{-y}$, $A^{<}_{-y}$ similarly. We define  \coa \ to be the family of instances of points $P$ that satisfy multiple useful properties, the most important of which are that the points are all located at four clusters, one on each half-axis and that the optimal location is not on an axis. 
We denote these families for the consistency and robustness analysis by $\coaC(c)$ and $\coaR(c)$ respectively. Note that these two families are different since, for the consistency analysis, the prediction is at $\fpred = \fo$, for the robustness analysis, the prediction is at $\fpred = (0,0)$.

\begin{definition}
\label{def:coa} 
Consider, for some confidence $c$, and prediction $\fpred$, the family of multisets of points $P$ s.t.
\begin{enumerate}
    \item Output at origin: $\falg(P, \fpred, c) = (0, 0)$,
    \item Opt in top-right quadrant: $y_o(P) \geq x_o(P) > 0$,
    \item No move towards opt: for all $p_i \in P$ and $\epsilon \in (0, 1]$, $\falg((P_{-i}, p_i + \epsilon(o(P) - p_i)), \hat{o}, c) \neq \falg(P, \hat{o}, c)$, 
    \item there exist $x_1, x_2, y_1, y_2 \geq 0$ such that:
    \begin{enumerate}[label=(\alph*)]
        \item Clusters on axes: for all $p \in P$,  $ p \in \{(-x_1, 0), (x_2, 0), (0, -y_1), (0, y_2), o(P)\}$,
        \item Less points in left: $ |\{p \in P : p \in A^<_{-x}\}| <   |\{p \in P : p \in A^>_{+x} \cup \{ \fopt(P) \} \}|$,
        \item Less points in bottom: $ |\{p \in P : p \in A^<_{-y}\}| <  |\{p \in P : p \in A^>_{+y} \cup \{ \fopt(P) \} \}|$, 
        \item $x$-clusters equidistant from opt: if $(-x_1, 0), (x_2, 0)  \in P$, then $x_o + x_1 = x_2 - x_o$, and
        \item $y$-clusters equidistant from opt: if $(-y_1, 0), (y_2, 0)  \in P$, then $y_o + y_1 = y_2 - y_o$.
    \end{enumerate}
\end{enumerate}
Let $\coaC(c)$ and $\coaR(c)$ be this family when $\fpred = o(P)$ and $\fpred = (0,0)$ respectively. These families are called the Clusters-on-Axes (\coa) families for  consistency and robustness.
\end{definition}

We define  \ooa \ to be the family of multisets of points $P$ such that all the points are on one of the two axes (not necessarily in clusters) and the optimal location is on one of the axes (without loss of generality, the $+y$ half axis). The main difference between the \coa \ and \ooa \ families is the location of the optimal location, either on an axis or not.

\begin{definition}
\label{def:ooa} 
Consider, for some confidence $c$, and prediction $\fpred$, the family of  multisets of points $P$ such that (1) Output at origin: $\falg(P, \fpred, c) = (0, 0)$, (2) Opt on $+y$ axis: $x_o(P) = 0$, $y_o(P) > 0$, and (3) Points on axes: for all $p \in P$, $p \in A_{x} \cup A_{y}$. Let $\ooaC(c)$ and $\ooaR(c)$ be this family when $\fpred = o(P)$ and $\fpred = (0,0)$ respectively. These families are called the Optimal-on-Axes (\ooa) families for  consistency and robustness.
\end{definition}

\subsubsection{The worst-case instance is in  \coa \ or \ooa} 
\label{sec:coaooa}

The main lemma in this section shows that an arbitrary instance of a multiset $P$ can be transformed into either an instance in \coa \ or an instance in \ooa \ without improving the approximation ratio of \cmp \ on that instance (Lemma~\ref{lem:coaooa}). 

We first show that if two points are at different locations on the same half-axis and the optimal location is not on an axis, then  there is an instance $Q$ with a strictly worse approximation. This lemma  is used to obtain the clusters on axes property 4.a. for the \coa \ family.

\begin{restatable}{rLem}{lemclusteringpair}
\label{lem:clusteringpair} 
For any points $P$ and confidence $c \in [0,1)$ s.t. $\fa(P, \fpred(P), c) = (0, 0)$, if there are two non-overlapping points $p_i, p_j \in P$, $p_i \neq p_j$  that are on the same half axis, i.e., $A_{+x}, A_{-x}, A_{+y},$ or $A_{-y}$,   and $x_\fo(P), y_\fo(P) > 0$, then there exists points $Q$ such that $r(Q, \fpred(Q), c) > r(P, \fpred(P), c)$ with  predictions  $\fpred(P) = \fo(P)$ and $\fpred(Q) = \fo(Q)$. This inequality also holds with predictions $\fpred(P) = \fpred(Q) = (0, 0)$.
\end{restatable}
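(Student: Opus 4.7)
The plan is to construct $Q$ by perturbing $p_i$ and $p_j$ slightly toward one another along their shared half-axis, and then use strict convexity of the distance from an axis point to the optimum (which lies strictly off the axis) to conclude a strict increase in the approximation ratio.

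By symmetry among the four half-axes, I assume $p_i, p_j \in A_{+x}$, writing $p_i = (a, 0)$ and $p_j = (b, 0)$ with $0 \leq a < b$ (the cases $A_{-x}, A_{+y}, A_{-y}$ follow from analogous perturbations, using $x_\fo > 0$ for the last two). I define $p_i' := (a + \epsilon, 0)$ and $p_j' := (b - \epsilon, 0)$ for a small $\epsilon \in (0, (b-a)/2)$, and set $Q := (P \setminus \{p_i, p_j\}) \cup \{p_i', p_j'\}$. I would then proceed in three steps.

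First, I would show that $\falg(Q, \fpred(Q), c) = (0, 0)$. The $y$-coordinates of $p_i, p_j$ are unchanged, so the $y$-coordinate multiset of $Q$ equals that of $P$; combined with the fact that the prediction's $y$-coordinate is either $0$ (robustness) or close to $y_\fo(P) > 0$ by continuity of the geometric median in the input points (consistency), the $y$-median of $Q \cup Q'$ stays at $0$. For the $x$-median, the only modified $x$-coordinates are $a \to a + \epsilon$ and $b \to b - \epsilon$ (and, in the consistency case, the $cn$ prediction copies, whose $x$-coordinate moves continuously near $x_\fo(P) > 0$); for small $\epsilon$ and in the generic case $a > 0$, none of these values cross $0$, so the $x$-median remains $0$. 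Second, the mechanism cost at $(0,0)$ is preserved since $(a + \epsilon) + (b - \epsilon) = a + b$ while every other contribution is unchanged. Third, the optimal cost strictly decreases: the function $g(x) := d((x, 0), \fo(P)) = \sqrt{(x - x_\fo)^2 + y_\fo^2}$ is strictly convex in $x$ because $y_\fo > 0$, so $g(a + \epsilon) + g(b - \epsilon) < g(a) + g(b)$, yielding $C^u(\fo(P), Q) < C^u(\fo(P), P)$ and thus $C^u(\fo(Q), Q) \leq C^u(\fo(P), Q) < C^u(\fo(P), P)$. Combining these three facts gives $r(Q, \fpred(Q), c) > r(P, \fpred(P), c)$ for both the consistency and the robustness prediction schemes.

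The main obstacle is the boundary case $a = 0$, when $p_i$ coincides with the origin: the perturbation then moves $p_i$'s $x$-coordinate from $0$ to $\epsilon > 0$, which can shift the $x$-median if that coordinate was the unique zero pinning the median at $0$. In the robustness case with $c > 0$, the $cn \geq 4$ copies of $(0,0)$ in $Q'$ provide enough cushion of $x$-values at $0$ to keep the median pinned. For the remaining configurations (consistency with $a = 0$, or robustness with $c = 0$), I would replace the perturbation with a variant that preserves the sign counts of $x$-coordinates, for example moving $p_i$ slightly into $A_{-x}$ (so that one $x$-coordinate at $0$ becomes negative, keeping the count of nonpositive $x$-values unchanged) together with a compensating outward shift of $p_j$, then verifying directly from the first-order expansion of $g$ and the assumption $y_\fo > 0$ that the mechanism cost grows strictly faster than the optimal cost, so that the ratio still strictly increases.
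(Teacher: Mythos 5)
Your proposal is correct and follows essentially the same route as the paper's proof: move $p_i$ and $p_j$ toward each other along their shared half-axis, observe that the mechanism's output and its social cost are unchanged, and use strict convexity of the distance to the off-axis optimum (the paper's Lemma~\ref{lem:convexity}, invoked via non-collinearity of $p_i$, $p_j$, and $o(P)$) to get a strict decrease in the optimal cost. Your extra treatment of the boundary case where one point sits at the origin is actually more careful than the paper, which simply assumes both $x$-coordinates are strictly positive.
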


The next lemma  shows that if the points are on the axes and the optimal location, if there are at least as many points with $x$-coordinate that is negative than points with $x$-coordinate that is positive, then there is an instance $Q$ with a strictly worse approximation. The same property holds for the $y$-coordinate.

\begin{restatable}{rLem}{lemequalsizes}
\label{lem:equalsizes}
For any points $P$ and confidence $c \in [0,1)$ such that $\fa(P, \fpred(P), c) = (0, 0)$, $y_o(P) \geq x_o(P) > 0$, and $p \in A_x \cup A_y \cup \{\fopt(P)\}$ for all $p \in P$,  if either $|\{p \in P: p \in A^<_{-x}\}| \geq |\{p \in P: p \in A^>_{+x} \cup \{o(P)\}\}|$ or $|\{p \in P: p \in A^<_{-y}\}| \geq |\{p \in P: p \in A^>_{+y} \cup \{o(P)\}\}|$, then there exists points $Q$ such that $r(Q, \fpred(Q), c) > r(P, \fpred(P), c)$ with  predictions  $\fpred(P) = \fo(P)$ and $\fpred(Q) = \fo(Q)$. This inequality also holds with predictions $\fpred(P) = \fpred(Q) = (0, 0)$.
\end{restatable}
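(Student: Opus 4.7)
The plan is to prove Lemma~\ref{lem:equalsizes} by an explicit geometric transformation $P \mapsto Q$ that strictly increases the approximation ratio achieved by \cmp. I describe the case where the first condition (on the $x$-axis) holds; the $y$-axis case is fully symmetric.

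The hypothesis is not vacuous: since all points of $P$ lie on the axes or at $o(P)$ and $x_o(P) > 0$, some mass must sit on the strictly positive $x$-half-plane to pull the geometric median off the $y$-axis, so $|\{p \in P : p \in A^<_{-x}\}| \geq 1$ and I can select a point $(-x_1, 0) \in P$ with $x_1 > 0$. I construct $Q$ from $P$ by removing this point and inserting a copy of $o(P)$ in its place. The key structural observation making this work is $o(Q) = o(P)$: by the subgradient optimality condition for the geometric median, adding a point at the current minimizer $o(P)$ introduces a subdifferential slot (the closed unit disk) that exactly absorbs the unit-vector contribution lost by removing $(-x_1, 0)$. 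Consequently $\fpred(Q) = \fpred(P)$ in the consistency case, and the phantom set $P'$ is identical for $P$ and $Q$ in both the consistency and robustness cases.

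Next I verify that $\falg(Q, \fpred(Q), c) = (0,0)$. The swap replaces a single $-x_1$ by $x_o(P) > 0$ in the multiset of $x$-coordinates of $P \cup P'$, and a single $0$ by $y_o(P) > 0$ in the multiset of $y$-coordinates. The hypothesis that the negative-$x$ side carries at least as many points as the positive-$x$ side (counting $o(P)$) provides slack so that losing one negative and gaining one positive does not disturb the $x$-median at $0$; the $y$-median stays at $0$ because many points on the $x$-axis (and the phantoms in $P'$ in the robustness case) contribute zeros to the $y$-coordinate list.

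The ratio comparison is then a short algebraic computation. Writing $A = \sum_{p \in P} \|p\|$ and $B = \sum_{p \in P} \|p - o(P)\|$, the transformation yields $\sum_{p \in Q} \|p\| = A - x_1 + \|o(P)\|$ and $\sum_{p \in Q} \|p - o(P)\| = B - \sqrt{(x_1+x_o(P))^2 + y_o(P)^2}$. Since $o(Q) = o(P)$, cross-multiplying $r(Q, \fpred(Q), c) > r(P, \fpred(P), c)$ reduces (after cancellation) to
\[A\,\sqrt{(x_1+x_o(P))^2 + y_o(P)^2} \;>\; (x_1 - \|o(P)\|)\,B,\]
which is immediate when $x_1 \leq \|o(P)\|$ (right side nonpositive) and otherwise follows from $A \geq B$ together with $\sqrt{(x_1+x_o(P))^2 + y_o(P)^2} \geq x_1 + x_o(P) > x_1 > x_1 - \|o(P)\|$.

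The main obstacle is the slack analysis that keeps the coordinate-wise median at the origin after the swap. The $y$-coordinate shift from $0$ to $y_o(P) > 0$ is especially delicate because this lemma does not assume the analogous ``less points in bottom'' condition on the $y$-axis; the argument must instead leverage the structural restriction (all points on the axes or at $o$) to conclude that enough zeros remain in the $y$-coordinate list of $P \cup P'$. In degenerate boundary cases where the median is on the verge of moving, one can split the transferred point into two halves placed symmetrically so as to preserve both medians exactly, or handle the slight output shift through a perturbation argument which still yields a strictly worse ratio via the same inequality.
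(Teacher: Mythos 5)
There is a genuine gap in your argument: the claim that the mechanism's output stays at $(0,0)$ after the swap. Recall that \cmp\ outputs the coordinatewise median of $P \cup P'$, so what matters is whether, after replacing $(-x_1,0)$ by a copy of $\fopt(P)$, the counts of points strictly left/right of the $y$-axis and strictly below/above the $x$-axis still place both medians at $0$. Your swap decreases the number of points with negative $x$-coordinate by one and increases the number with positive $x$-coordinate by one, and (worse) decreases the number with $y$-coordinate $0$ by one while increasing the number with $y_o(P)>0$ by one. The hypothesis $|\{p \in P: p \in A^<_{-x}\}| \geq |\{p \in P: p \in A^>_{+x} \cup \{o(P)\}\}|$ is a statement about the agent points only; it does not give any slack in the median condition for $P \cup P'$, and for the $y$-coordinate the lemma assumes nothing at all. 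If the number of points of $P\cup P'$ with $y \leq 0$ equals the median threshold exactly, your swap pushes the $y$-median strictly above $0$, the output moves, and your ratio computation (which evaluates the mechanism's cost at the origin) no longer applies --- indeed the new output is closer to $\fopt$, so the ratio could decrease. You flag this as ``delicate'' and propose splitting the moved point in half or a perturbation argument, but neither is worked out: splitting a point changes the multiset structure and the optimal location, and a perturbation cannot repair an argument whose conclusion is evaluated at the wrong facility location. (Your subgradient argument for $o(Q)=o(P)$ and the final cross-multiplication are fine, but they are conditional on the unproven output-preservation step.)

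For contrast, the paper avoids this entirely by using a different transformation: it translates \emph{every} point not on the $y$-axis (including the cluster at $\fopt(P)$) to the left by a small $\epsilon$. This leaves all $y$-coordinates untouched and does not change the sign of any $x$-coordinate, so $\falg(Q,\fpred(Q),c)=(0,0)$ is immediate. The optimal cost then strictly decreases (compare against the shifted candidate $q^* = \fopt(P)-(\epsilon,0)$: the translated points keep their distances to $q^*$, while the $y$-axis points get strictly closer), and the mechanism's cost weakly increases precisely because the hypothesis guarantees at least as many points move away from the origin as toward it. If you want to salvage your swap-based construction, you would need to add an explicit counting argument for both coordinate medians of $P\cup P'$, and the $y$-coordinate case appears to be a real obstruction rather than a technicality.
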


We combine Lemma~\ref{lem:clusteringpair} and Lemma~\ref{lem:equalsizes} to obtain that the instances for which \cmp \ obtains the worst consistency and robustness guarantees are  in the \coa \ and \ooa \ families. 

\begin{restatable}{rLem}{lemcoaooa}
\label{lem:coaooa}
For any $c \in [0,1)$,   let $\alpha = \max_{P \in \ooaC(c) \cup \coaC(c) } \approxratio(P, \fopt(P), c)$ and $\beta = \max_{P \in \ooaR(c) \cup \coaR(c)} \approxratio(P, (0,0), c)$. \cmp \ with confidence $c$ is $\alpha$-consistent and $\beta$-robust.
\end{restatable}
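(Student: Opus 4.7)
The plan is to take an arbitrary instance $(P,\fpred)$ and exhibit a sequence of transformations producing an instance in $\ooaC(c)\cup\coaC(c)$ (for consistency) or $\ooaR(c)\cup\coaR(c)$ (for robustness) along which the approximation ratio of \cmp\ weakly increases. If we can do this, the stated maxima are immediate upper bounds on the true worst cases, and the lemma follows. First I would use the translation- and reflection-invariance of the approximation ratio noted just before Section~\ref{sec:deffams} to reduce to instances with $\falg(P,\fpred,c)=(0,0)$ and $y_o(P)\geq x_o(P)\geq 0$; for robustness I would additionally replace $\fpred$ by $(0,0)$, which leaves the mechanism output unchanged since the coordinatewise median of $P\cup P'$ is already at the origin.

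Next I would drive all points onto the coordinate axes. Because \cmp\ is coordinatewise, moving a point $p_i=(x_i,y_i)$ to one of the axial projections $(x_i,0)$ or $(0,y_i)$ leaves the medians of the $x$- and $y$-coordinates (and hence the mechanism output) unchanged; what varies is $\fopt$ and $C^u(\fopt,P)$. Using a case analysis on the location of $p_i$ relative to the opt and to the origin, one shows that at least one of the two projections weakly increases the ratio while preserving the prediction-dependent condition ($\fpred=\fopt(P)$ for consistency, $\fpred=(0,0)$ for robustness). After this step there are two cases. If the resulting $\fopt$ lies on an axis, we are already in \ooa\ (after a further reflection to place opt on the $+y$ half-axis if necessary). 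Otherwise $y_o(P)\geq x_o(P)>0$ strictly. In that case I would apply Lemma~\ref{lem:clusteringpair} repeatedly to collapse the points on each half-axis into a single cluster, giving property~4.a of Definition~\ref{def:coa}; then Lemma~\ref{lem:equalsizes} delivers properties~4.b and~4.c. The equidistance conditions~4.d--4.e are obtained by sliding the two $x$-clusters (respectively $y$-clusters) along their axis until they are equidistant from $x_o(P)$ (respectively $y_o(P)$); this slide can be performed without moving the coordinatewise median off the origin and is easily seen to weakly increase the ratio. Finally, property~3 (no move toward opt) is extracted by a last local-tightening step: push each cluster toward $\fopt(P)$ as far as possible without changing $\falg$; the approximation ratio weakly increases because numerator distances grow and denominator distances shrink.

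The main obstacle is the projection step onto the axes. When an interior point is moved to an axis, both $\fopt(P)$ and $C^u(\fopt(P),P)$ shift in a non-trivial way, while for consistency the prediction must continue to coincide with the new opt, and for robustness the imbalance structure that pins the mechanism at the origin must be preserved. One therefore cannot simply project each point independently; the projection direction has to be chosen as a function of which side of opt the point sits on, and the argument has to bound the change in the optimum via triangle-inequality-type estimates to guarantee the numerator-over-denominator ratio does not drop. All subsequent steps (clustering via Lemma~\ref{lem:clusteringpair}, rebalancing via Lemma~\ref{lem:equalsizes}, equidistance, and no-move tightening) are local geometric normalizations that only become meaningful once this projection has been carried out, which is why it is the technical heart of the reduction and the step that most naturally is deferred to the appendix.
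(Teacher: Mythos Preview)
Your overall architecture---normalize, push points onto the axes, then clean up via Lemmas~\ref{lem:clusteringpair} and~\ref{lem:equalsizes} and a final equidistance/tightening pass---mirrors the paper's, but the step you single out as ``the technical heart'' is handled very differently in the paper, and your version has a real gap.

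The paper does \emph{not} project points orthogonally onto the axes. Instead it moves any point $p_i\notin A_x\cup A_y\cup\{\fopt(P)\}$ a small amount \emph{toward} $\fopt(P)$. This has two crucial features your projection lacks. First, by Lemma~\ref{lem:optunchanged} the optimal location does not move, so for consistency the prediction $\fpred=\fopt(P)$ stays correct without any update; and since a small move toward opt keeps $p_i$ in the same open quadrant, the coordinatewise median (with the \emph{unchanged} prediction) stays at the origin. Second, Lemma~\ref{lem:towards} then gives a strict increase in the ratio: the optimal cost drops by exactly $d(p_i,p_i')$, the mechanism cost changes by at most that amount, and since the ratio exceeds $1$ this forces a strict increase. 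Thus any instance with an off-axis, off-opt point is strictly dominated, and the worst case already lies in $A_x\cup A_y\cup\{\fopt(P)\}$.

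Your orthogonal projection $p_i\mapsto(x_i,0)$ or $(0,y_i)$ does keep the mechanism at the origin \emph{if the prediction is held fixed}, but it always \emph{decreases} the mechanism's cost at that point (from $\sqrt{x_i^2+y_i^2}$ to $|x_i|$ or $|y_i|$), while the change in optimal cost can go either way and $\fopt$ itself can move. It is not true in general that at least one of the two projections weakly increases the ratio; one can arrange points where both projections drop the numerator more than the denominator. Worse, for consistency you must then update $\fpred$ to the new $\fopt$, and once the $cn$ phantom points move, the coordinatewise median need no longer be at the origin---so the very invariant you rely on to say ``mechanism output unchanged'' is lost. You flag this as the main obstacle but do not resolve it; the paper sidesteps it entirely by moving toward opt.

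A smaller point: in your final tightening step you write ``numerator distances grow and denominator distances shrink.'' The numerator can in fact shrink (a cluster on the far side of the origin moves closer to the origin when pushed toward opt); the reason the ratio still increases is the comparison of decrements in Lemma~\ref{lem:towards}, not monotonicity of the numerator.
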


\subsubsection{The worst-case instance  in   \ooa \ is also in \famP} 
\label{sec:ooa}

We show that the worst-case instance in \ooa \ is no worse than the worst-case instance in \famP \ for the consistency and robustness  of \cmp. 

\begin{restatable}{rLem}{lemooa}
\label{lem:ooa}
For any $c \in [0,1)$ and $P \in \ooaC(c)$, there exists $Q$ such that either $ \approxratio(Q, \fopt(Q), c) > \approxratio(P, \fopt(P), c)$ or $Q \in \famPC(c) $ and $\approxratio(Q, \fopt(Q), c) \geq \approxratio(P, \fopt(P), c)$. Similarly, for any $P \in \ooaR(c)$, there exists $Q$ such that either $ \approxratio(Q, (0,0), c) > \approxratio(P, (0,0, c)$ or $Q \in \famPR(c) $ and  $ \approxratio(Q, (0,0), c) \geq \approxratio(P, (0,0), c)$.
\end{restatable}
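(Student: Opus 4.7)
The plan is to show that any $P \in \ooaC(c)$ can be transformed into an instance $Q \in \famPC(c)$ with weakly larger approximation ratio (the robustness case being handled analogously, with $\hat{o} = o(P)$ replaced by $\hat{o} = (0,0)$ throughout). The transformation performs a sequence of point-moves, each of which either strictly increases $\approxratio(\cdot, \cdot, c)$ — placing us in the first disjunct of the lemma — or weakly increases it while pushing the configuration toward the $\famP$ structure. Throughout, rescaling all points and $\hat{o}$ about the origin preserves $\falg(P, \hat{o}, c) = (0,0)$, scales $o(P)$ proportionally, and leaves the approximation ratio unchanged, so I may assume $o(P) = (0, 1)$.

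The first phase collapses the $A_y$ points onto $(0,1)$ one at a time. For a $y$-axis point $p = (0, y)$, the distance to $\falg = (0,0)$ is $|y|$ and the distance to $o = (0,1)$ is $|y - 1|$. When $y_i > 1$, sliding $p_i$ down to $(0, 1)$ decreases both the numerator $A = n \cdot C^u(\falg, P)$ and the denominator $B = n \cdot C^u(o, P)$ by the same amount $\delta = y_i - 1$; since $A/B > 1$, the identity $(A - \delta)/(B - \delta) \geq A/B$ shows the ratio weakly increases. When $y_i \in (0,1)$, the numerator grows while the denominator shrinks, so the ratio strictly increases. When $y_i \leq 0$, I split the slide into two sub-slides, first to $(0, 0)$ (equal-decrease case) then from $(0, 0)$ to $(0, 1)$ (numerator-up case). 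The mechanism output stays at the origin throughout because the moving point has $x$-coordinate $0$ (preserving the $x$-median of $P \cup P'$) and remains within a single half-plane of $y = 0$ during each sub-slide (preserving the $y$-median). The optimum remains on the $+y$-axis because the $x$-component of the gradient at a $y$-axis point depends only on the unchanged $A_x$ points; any drift in its $y$-coordinate is reabsorbed by another rescaling, which does not affect the ratio.

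The second phase collapses the $A_x$ points onto two symmetric clusters at $(\pm x^*, 0)$. A local perturbation argument, viewing the ratio as a function of a single $x$-coordinate with all other points fixed, shows that consolidating $A^>_{+x}$ to a single cluster at $(x^*, 0)$ and $A^<_{-x}$ to $(-x^*, 0)$ weakly improves the ratio, with $|L| = |R|$ forced by the requirement that the geometric median remain at $(0, 1)$ (otherwise the $x$-gradient condition $\sum x_i/\sqrt{x_i^2+1} = 0$ breaks and the optimum drifts off-axis). The resulting instance lies in $\famPC(c)$. The main obstacle throughout is handling the interaction between the coordinatewise median, which can jump discontinuously when a point crosses a median threshold, and the geometric median, which drifts continuously under point-moves: the former are dispatched by a case analysis showing each jump yields a weakly higher ratio (so we land in the first disjunct of the lemma), and the latter by periodic rescaling to restore $o = (0, 1)$.
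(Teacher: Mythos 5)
Your overall strategy---symmetrize the $x$-axis points into two clusters and push the $y$-axis points toward the optimum, escaping to the first disjunct whenever a move strictly increases the ratio---matches the paper's plan, but the two steps that carry the real weight are asserted rather than proved, and one is wrong as stated. First, the $x$-axis consolidation. You propose to collapse $A^>_{+x}$ to a cluster at $(x^*,0)$ and $A^<_{-x}$ to $(-x^*,0)$ via ``a local perturbation argument,'' and claim $|L|=|R|$ is ``forced by the requirement that the geometric median remain at $(0,1)$.'' But the instance is given, not chosen: nothing forces the original $P$ to have equally many points on each side of the $y$-axis (the optimum can sit at $x=0$ with unequal counts at unequal distances), so keeping each point on its own side cannot in general produce two equal, symmetric clusters. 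The paper's construction instead \emph{redistributes} all of $P\cap A_x$ into two clusters of size $|P\cap A_x|/2$ each, placed at $\pm d_x$ where $d_x$ is the \emph{average} distance of these points from the origin. That specific choice is what makes the step work: the mechanism's cost is exactly preserved (the facility is at the origin and the average $|x_i|$ is unchanged), the facility and the optimum stay put by symmetry, and convexity shows the optimal cost weakly decreases, so the ratio weakly increases. A generic single-coordinate perturbation changes numerator and denominator in ways that do not obviously favor the ratio; without the averaging trick your ``weakly improves'' claim has no support.

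Second, your first phase slides $y$-axis points with $y_i\le 0$ up through $y=0$ to $(0,1)$. During the sub-slide from $(0,0)$ to $(0,1)$ the point leaves the set of points with nonpositive $y$-coordinate, which is precisely when the $y$-median of $P\cup P'$ can jump; your remark that the point ``remains within a single half-plane'' does not cover crossing the boundary, and your fallback (``a case analysis showing each jump yields a weakly higher ratio'') is not carried out and is not obviously true---once the facility jumps, the $\pm\delta$ bookkeeping compares costs against a different facility location. The paper avoids this entirely: if any $y$-axis point is not already at $y_o(Q)$, it moves that point toward $y_o(Q)$ by an arbitrarily small $\epsilon$ and observes that the ratio strictly increases, which lands in the first disjunct with no need to track the resulting instance further. (Relatedly, Lemma~\ref{lem:optunchanged} already guarantees the optimum does not move when a point slides along the ray toward it, so the ``periodic rescaling to reabsorb drift'' you invoke is unnecessary.) As written, the proposal has the right shape but is missing the two ideas---averaging for the $x$-clusters and the $\epsilon$-perturbation escape for the $y$-points---that make the proof go through.
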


\begin{proof}
Let $c \in [0,1)$ and consider an instance $P \in \ooaC(c)$ with $n$ points, so $y_o(P) > 0$ and $p \in A_x \cup A_y$ for all $p \in P$. Let $d_x=\sum_{(x_i, 0) \in P \cap A_x} |x_i|/|P \cap A_x|$ be the average distance of the points on the $A_x$ axis from the origin. Consider the  instance $Q = (q_1, \ldots, q_n)$ where the points  $p_i \in P \cap A_x$ are replaced by two clusters, one at $(-d_x,0)$ and one at $(d_x,0)$, each containing $|P \cap A_x|/2$ points $q_i$. For the remaining points $p_j \in P \cap A_y \setminus \{(0,0)\}$, we maintain their positions and  set $q_j = p_j$. 

Since the points are perfectly symmetric with respect to the $y$ axis, we have $x_o(Q) = 0 = x_o(P)$. Since the $y$-coordinate of the points are identical in $P$ and $Q$ and since $x_o(Q) = x_o(P)$, we also have $y_o(Q) = y_o(P)$. Thus, $o(Q) = o(P)$. Let $f(P, \fo(P), c) = (x_f(P, \fo(P), c), y_f(P, \fo(P), c))$, and $f(Q, \fo(Q), c) = (x_f(Q, \fo(Q), c), y_f(Q, \fo(Q), c))$.
Since $x_o(Q) = 0$ and $|\{(x_i, y_i) \in Q: x_i < 0\}| = |\{(x_i, y_i) \in Q: x_i > 0\}|$, $x_f(Q, \fo(Q), c) = 0$, and this also holds with $\fpred(Q) = (0,0)$. Since the $y$-coordinate of the points are identical in $P$ and $Q$ and since $y_o(Q) = y_o(P)$, $y_f(Q, \fo(Q), c) = y_f(P, \fo(P), c)$. Thus, $f(Q, \fo(Q), c) = f(P, \fo(P), c) = (0,0)$, and this also holds with $\fpred(Q) = (0,0)$.

In addition, the social cost of the mechanism does not change, because the average distance of the points on the $x$ axis from the origin remained the same, $d_x$. On the other hand, using the convexity of the distance measure,  the optimal social cost weakly improves, so $ \approxratio(Q, \fo(Q), c) \geq \approxratio(P, \fo(P), c)$, and this is also the case with $\fpred(Q) = (0,0)$. If $y = y_o(Q)$ for all $(0, y) \in Q$, then by scaling $Q$ to $Q'$ so that  $y_o(Q') = 1$, we get $Q' \in \famPC(c)$ such that $ \approxratio(Q', \fo(Q'), c) \geq \approxratio(P, \fo(P), c)$. 

Since $x_o(Q) = 0$, if there exists $(0, y) \in Q$  with $y \neq y_o(Q)$, this point can be moved towards $y_o(Q)$ by an arbitrary small  $\epsilon$ and this would strictly worsens the approximation factor (so there exists $Q'$ such that $ \approxratio(Q', \fopt(Q'), c) > \approxratio(P, \fopt(P), c))$ because it either improves both the social cost of the mechanism and the optimal social cost by $\epsilon$ or it improves the optimal social cost by $\epsilon$ and worsens the social cost of the mechanism by $\epsilon$. 

Thus, we have shown that there exists $Q'$ such that $ \approxratio(Q', \fopt(Q'), c) > \approxratio(P, \fopt(P), c)$ or $Q' \in \famPC(c) $ and  $ \approxratio(Q', \fopt(Q'), c) \geq \approxratio(P, \fopt(P), c)$. The analysis for $P \in \ooaR(c)$ follows identically.
\end{proof}

\subsubsection{The worst-case instance  in   \coa \ is also in \famP} 
\label{sec:coa}

In this section, we show that the worst-case instance in \coa \ is no worse than the worst-case instance in \famP \ for the consistency (Lemma~\ref{lem:coaconsistency}) and robustness (Lemma~\ref{lem:coarobustness}) guarantees of \cmp. 

The next lemma shows that if we have an instance $P$ in the \coa \ family, then we can construct another instance $Q$ in the \coa \ family without points on the $-y$ half axis while weakly increasing the approximation ratio, for both the consistency and robustness guarantees.

\begin{restatable}{rLem}{lemfirstytox}
\label{lem:firstytox}
For any confidence $c \in [0,1)$, and $P \in \coaC(c)$, there exists points $Q$ such that either  $r(Q, \fopt(Q), c) > r(P, \fopt(P), c)$ or  $Q \in \coaC(c)$,  $r(Q, \fopt(Q), c) \geq r(P, \fopt(P), c)$, and $q  \in A_x \cup A_{+y} \cup \{o(Q)\}$ for all $q \in Q$.
Similarly, for any confidence $c \in [0,1)$, and $P \in \coaR(c)$, there exists points $Q$ such that either  $r(Q, (0,0), c) > r(P, (0,0), c)$ or  $Q \in \coaR( c)$,  $r(Q, (0,0), c) \geq r(P, (0,0), c)$, and $q  \in A_x \cup A_{+y} \cup \{o(Q)\}$ for all $q \in Q$.
\end{restatable}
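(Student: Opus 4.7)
The plan is to transform $P$ into $Q$ by emptying the $-y$ cluster, arguing that the transformation either strictly improves the approximation ratio (landing in the first disjunct) or keeps us in $\coaC(c)$ (resp.\ $\coaR(c)$) with a weakly larger ratio (the second disjunct). If $P$ already has no points on the strictly-negative $y$ half-axis, set $Q = P$. Otherwise, let $k = |\{p \in P : p \in A^<_{-y}\}| > 0$. The observation that unlocks everything is condition (c) of Definition~\ref{def:coa}, which gives the strict inequality $k < |\{p \in P : p \in A^>_{+y} \cup \{\fopt(P)\}\}|$; this provides exactly the slack needed to relocate the $-y$ cluster without violating the ``$\falg = (0,0)$'' requirement.

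The key move is a reflection-plus-compensation. First, reflect the $k$ points at $(0,-y_1)$ across the $x$-axis to $(0, y_1)$. This preserves the multiset of $x$-coordinates of $P \cup P'$, so the $x$-median is unchanged at $0$, but it pushes the $y$-median upward. Second, to restore the $y$-median to $0$, move $k$ points from the $+y$ cluster at $(0, y_2)$ down to the origin; if the $+y$ cluster does not contain $k$ points, draw the deficit from the cluster at $\fopt(P)$, a subcase treated separately below. By the slack from condition (c), such a compensation exists, and by construction $\falg(Q, \fpred(Q), c) = (0,0)$. I would then verify $\approxratio(Q, \fopt(Q), c) \geq \approxratio(P, \fopt(P), c)$ by a direct computation of the utilitarian objective, comparing before-and-after contributions of the moved points to the optimal and mechanism costs using the equidistance condition $y_2 - y_o(P) = y_o(P) + y_1$ from condition (e) of Definition~\ref{def:coa}. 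If $Q$ satisfies every \coa\ condition, we are in the second disjunct; if not (for instance, $\fopt(Q)$ drifts onto the $y$-axis, placing $Q$ in $\ooa$, or two non-coincident points appear on one half-axis, or condition~3 of Definition~\ref{def:coa} fails), then Lemma~\ref{lem:clusteringpair}, Lemma~\ref{lem:equalsizes}, or Lemma~\ref{lem:ooa} applies to produce an instance with strictly larger approximation ratio, placing us in the first disjunct.

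The robustness case $P \in \coaR(c)$ proceeds identically with $\fpred = (0,0)$ throughout; it is actually simpler because the $cn$ phantom points do not track the moving optimum. The main obstacle will be the subcase where the $+y$ cluster alone is too small to absorb the full compensation, forcing us to pull compensating points from the cluster at $\fopt(P)$ whose nonzero $x$-coordinate threatens the $x$-median constraint. I expect to handle this via a ``double-rotation'' in the spirit of Lemma~\ref{lem:doublerotation}: each point drawn from the $\fopt(P)$-cluster is paired with a symmetric adjustment on one of the $x$-axis clusters so that the $x$-median is preserved while the $y$-median is repaired. Verifying that this paired four-cluster perturbation still weakly increases the approximation ratio --- now perturbing four clusters simultaneously and possibly shifting $\fopt$ --- is the most technical step and is where the equidistance conditions (d) and (e) of Definition~\ref{def:coa} are essential.
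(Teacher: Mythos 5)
There is a genuine gap, and it sits exactly in the step you defer to ``a direct computation'': the claim that your reflect-and-compensate move weakly increases the approximation ratio. Your transformation changes \emph{both} the mechanism's cost and the optimal cost, and it changes them in the wrong relative proportions. Each compensated point moves from $(0,y_2)$ to the origin, so its distance to $\falg = (0,0)$ drops by exactly $y_2 = 2y_o + y_1$ (using condition (e)), while its distance to $\fopt(P)$ drops only from $\sqrt{x_o^2+(y_o+y_1)^2}$ to $\sqrt{x_o^2+y_o^2}$, i.e.\ by at most $y_1$; the reflected point contributes nothing to the numerator (its distance to the origin is unchanged) and at most $2y_1$ to the denominator. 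So per reflected/compensated pair the numerator of $\approxratio$ decreases by $2y_o+y_1$ while the denominator decreases by at most $3y_1$. When $y_1$ is small relative to $y_o$ and the ratio exceeds $1$, this is a strict \emph{decrease} of the approximation ratio, so neither disjunct of the lemma is obtained. The subsequent appeals to Lemmas~\ref{lem:clusteringpair}, \ref{lem:equalsizes}, and \ref{lem:ooa} do not rescue this: they produce an instance strictly worse than your $Q$, which only helps if $\approxratio(Q,\cdot,c)\geq\approxratio(P,\cdot,c)$ already holds --- the very inequality in question. Two further (secondary) issues: your $Q$ simultaneously contains points at $(0,y_1)$, $(0,y_2)$ and $(0,0)$, so it always violates the clustering property 4(a) and you can never land in the second disjunct, which undermines the iteration; and you do not handle the case $\fopt(Q)\neq\fopt(P)$, which in your setting is not automatic because the mechanism cost has also moved.

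The paper's proof avoids all of this with a different elementary move: it takes one point of the $-y$ cluster and \emph{rotates} it about the origin onto the $-x$ half-axis, sending $(0,-y_1)$ to $(-y_1,0)$. Because the origin is the mechanism's output, this preserves every point's distance to $\falg$ exactly, so the mechanism's social cost is unchanged and only the optimal cost can move; the hypothesis $y_o(P)\geq x_o(P)>0$ gives $d((0,-y_1),\fopt(P))\geq d((-y_1,0),\fopt(P))$ (the difference of squares is $2y_1(y_o-x_o)\geq 0$), so the optimal cost weakly decreases and the ratio weakly increases with no delicate computation. The median bookkeeping is handled by conditions (b) and (c) of Definition~\ref{def:coa} exactly as you anticipated, if the optimum moves the ratio strictly increases and we are done, and if $y_1\neq x_1$ the two distinct points on $A_{-x}$ trigger Lemma~\ref{lem:clusteringpair} (legitimately, since the intermediate instance is already weakly worse than $P$); otherwise the point merges into the existing $-x$ cluster and the instance stays in \coa, so the iteration terminates. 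If you want to salvage your write-up, replace the reflection-plus-compensation with this cost-preserving rotation.
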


Using the above lemma, given an instance in the \coa \ family, we can remove all the points that are located on the $-y$ half axis. The following lemma shows that for the consistency guarantee, if an instance in the \coa \ family does not contain any points on the $-y$ half axis, then the number of points on the $-x$ axis is larger than or equal to the number of points at the optimal location.

\begin{restatable}{rLem}{lemsizeleft}
\label{lem:sizeleft}
For any $c \in [0,1)$, consider an instance $P \in \coaC(c)$ such that  $p \in A_x \cup A_{+y}\cup \{o(P)\}$ for all $p \in P$. Then, $|\{p \in P : p \in A_{-x}\}| \geq |\{p \in P : p = o(P)\}|$.
\end{restatable}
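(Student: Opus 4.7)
The plan is to leverage the fact that \cmp\ returns $(0,0)$ as the coordinate-wise median of $P \cup P'$, where $P'$ consists of $cn$ copies of the prediction $\fpred = o(P)$. The output being at the origin constrains how many points in $P \cup P'$ can have strictly positive $x$- or $y$-coordinates, and the structural hypothesis then translates these constraints into a count of points in $A_{-x}$.

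First, I would partition $P$ into four disjoint subsets based on location:
\[
P_0 = \{p \in P : p \in A_{-x}\},\ P_1 = \{p \in P : p \in A^>_{+x}\},\ P_2 = \{p \in P : p \in A^>_{+y}\},\ P_3 = \{p \in P : p = o(P)\}.
\]
Because $o(P)$ lies strictly in the top-right quadrant by property~2 of Definition~\ref{def:coa} (so $x_o(P), y_o(P) > 0$), it is off both axes and hence disjoint from the three axis-based sets; combined with the hypothesis $p \in A_x \cup A_{+y} \cup \{o(P)\}$, which rules out any points in $A^<_{-y}$, this shows that these four sets cover $P$, giving $|P_0| + |P_1| + |P_2| + |P_3| = n$.

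Next I would invoke the two coordinate-wise median conditions. Setting $n' = (1+c)n$, which is even by the multiple-of-$4$ assumption on $n$ and $cn$, \cmp\ returns the $n'/2$-th smallest value in each coordinate. That the $x$-median equals $0$ therefore forces at most $n'/2$ of the $n'$ points in $P \cup P'$ to have strictly positive $x$-coordinate. Since all $cn$ phantom points and every point of $P_1 \cup P_3$ have strictly positive $x$-coordinates, we obtain
\[
|P_1| + |P_3| + cn \leq \tfrac{(1+c)n}{2}, \quad\text{equivalently}\quad |P_1| + |P_3| \leq \tfrac{(1-c)n}{2}.
\]
The symmetric argument on the $y$-coordinate (the phantoms and the points of $P_2 \cup P_3$ all have strictly positive $y$-values) yields $|P_2| + |P_3| \leq (1-c)n/2$.

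Summing these two bounds gives $|P_1| + |P_2| + 2|P_3| \leq (1-c)n$, and substituting into $|P_0| = n - |P_1| - |P_2| - |P_3|$ yields $|P_0| \geq cn + |P_3| \geq |P_3|$, which is exactly the inequality claimed. There is no serious obstacle in this argument; the only subtlety worth flagging is that every point of $P_3$, together with every phantom point, contributes simultaneously to both the $x>0$ count and the $y>0$ count, which is precisely what allows the two median inequalities to reinforce each other when summed.
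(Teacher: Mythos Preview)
Your argument is correct. You use only property~(1) of Definition~\ref{def:coa} (the coordinatewise median of $P\cup P'$ equals the origin) together with the location hypothesis, deriving the two inequalities $|P_1|+|P_3|\le(1-c)n/2$ and $|P_2|+|P_3|\le(1-c)n/2$ and summing them to obtain $|P_0|\ge cn+|P_3|\ge |P_3|$.

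The paper takes a slightly different route: it invokes property~(3) of Definition~\ref{def:coa} (no point can be moved toward $o(P)$ without shifting the mechanism's output) to conclude that the counts of points on the $x$-axis and on $A_{-x}\cup A_{+y}$ are each \emph{exactly} $(1+c)n/2$, and then computes $|\{p=o(P)\}|$ explicitly as $k-cn$ where $k=|\{p\in A_{-x}\}|$. Your approach avoids property~(3) entirely and shows that the bare median-at-origin constraint already suffices for the inequality; the paper's approach, by using the additional tightness property, actually recovers the equality $|P_0|=cn+|P_3|$. For the purposes of this lemma only the inequality is needed, so your more elementary argument is enough, while the paper's version extracts slightly more structural information at the cost of one extra hypothesis.
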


Consider again an instance $P$ in the \coa \ family without any points on the $-y$ half axis for the consistency guarantee. Lemma~\ref{lem:doublerotation} shows that we can convert $P$ to an instance $Q$ in the \coa \ family with weakly worse approximation ratio and points on either the $x$-axis or the $+y$ half axis.

\begin{restatable}{rLem}{lemdoublerotation}
\label{lem:doublerotation}
For any confidence $c \in [0,1)$ and $n$ points $P \in \coaC(c)$ such that  $p \in A_x \cup A_{+y}\cup \{o(P)\}$ for all $p \in P$, there exists $n$ points $Q$ such that either $r(Q, \fopt(Q), c) > r(P, \fopt(P), c)$ or  $Q \in \coaC(c)$, $r(Q, \fopt(Q), c) \geq r(P, \fopt(P), c)$ and $q \in A_x \cup A_{+y}$ for all $q \in Q$. 
\end{restatable}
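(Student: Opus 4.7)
The plan is to transform $P$ into an instance $Q \in \coaC(c)$ in which no point lies at $o(P)$, while ensuring that the approximation ratio of \cmp\ does not decrease. Let $k := |\{p \in P : p = o(P)\}|$. If $k = 0$, take $Q = P$ and we are done. Otherwise, by Lemma~\ref{lem:sizeleft}, we have at least $k$ points on $A_{-x}$, so we can pair each point at $o(P) = (x_o, y_o)$ with a distinct point at $(-x_1, 0)$, and eliminate the $o(P)$-clusters one pair at a time.

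For each such pair $(p, p')$, I would perform a \emph{double rotation}: a one-parameter family of simultaneous perturbations, indexed by $\theta \in [0, \theta^*]$, in which $p$ moves continuously from $o(P)$ along a path terminating on the $+y$-axis, and $p'$ moves continuously from $(-x_1, 0)$ along a compensating path terminating on the $+x$-axis. The paths are designed so that three invariants hold throughout the deformation: (i)~the coordinatewise median of $P_\theta \cup P'$ remains at $(0, 0)$, so the mechanism output is unchanged; (ii)~the optimum $o(P_\theta)$ remains at $(x_o, y_o)$, so the prediction $\fpred = o(P_\theta) = o(P)$ is still correct; and (iii)~at the endpoint, the \coa\ structure of Definition~\ref{def:coa} can be restored by appealing to Lemmas~\ref{lem:clusteringpair} and~\ref{lem:equalsizes} on any scattered points. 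Invariant~(i) is ensured because the rotation is arranged so that the mass distribution pinning the medians to $0$ (accounting for the $cn$ copies of $\fpred$ in $P'$) does not cross the coordinate axes; invariant~(ii) comes from a first-order optimality analysis in which the change in subgradient contribution from $p$ moving off $o(P)$ is exactly cancelled by the contribution from $p'$ sliding to the opposite side of the origin.

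Under these invariants, the approximation ratio reduces to
\[
r(P_\theta, o(P), c) \;=\; \frac{\sum_i \|p_i(\theta)\|}{\sum_i \|p_i(\theta) - o(P)\|},
\]
and only the two moving points contribute $\theta$-dependent terms. A direct differentiation, together with the assumption $y_o \geq x_o > 0$ built into the definition of $\coaC(c)$, shows that this derivative is nonnegative on $[0,\theta^*]$, so $r(P_{\theta^*}, o(P), c) \geq r(P, o(P), c)$. Iterating over the $k$ pairs yields a candidate instance in which every point lies on $A_x \cup A_{+y}$; a final pass through Lemmas~\ref{lem:clusteringpair} (to re-cluster collinear scattered points) and~\ref{lem:equalsizes} (to enforce the strict cluster-size inequalities) produces the desired $Q \in \coaC(c)$ with $r(Q, o(Q), c) \geq r(P, o(P), c)$. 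If any of these intermediate steps strictly improves the ratio, we fall into the first disjunct of the conclusion instead.

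The main obstacle is the simultaneous maintenance of invariants (i) and (ii) under a smooth deformation: moving a point at $o(P)$ in isolation immediately perturbs the first-order optimality condition, since the subdifferential of the sum-of-distances function at the optimum has a nontrivial contribution from each summand coincident with $o(P)$, and it can also shift the coordinatewise median of $P\cup P'$ if the motion crosses a median threshold set by the phantom points. The coordinated motion of the paired point on $A_{-x}$ is designed precisely so that these first-order effects cancel, which is why the transformation must be a \emph{double} rotation rather than a single-point move, and why pairing with $A_{-x}$ (rather than with the $+y$ cluster) is forced by Lemma~\ref{lem:sizeleft}.
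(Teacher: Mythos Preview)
Your high-level strategy of pairing each point at $o(P)$ with a point on $A_{-x}$ and relocating both is the same as the paper's, but the mechanism by which you control the approximation ratio has a genuine gap.

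The problematic step is invariant~(ii): you assert that a continuous ``double rotation'' can keep $o(P_\theta) = (x_o, y_o)$ throughout, because the paired motion of the $A_{-x}$ point ``exactly cancels'' the first-order effect of moving a point off $o(P)$. This cancellation is not established and is generally impossible. When a point leaves $o(P)$, the gradient of the sum-of-distances function at $o(P)$ acquires a \emph{unit-length} contribution in the direction of the departing point, no matter how small the displacement. By contrast, sliding the point at $(-x_1,0)$ changes its gradient contribution only by an $O(\theta)$ rotation of a fixed unit vector. These two effects are of different orders, so for small $\theta$ the sum of unit vectors at $o(P)$ cannot remain zero, and the optimum shifts. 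With invariant~(ii) gone, your ratio formula and the ``derivative is nonnegative'' claim (which is itself asserted without any computation) no longer apply.

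The paper sidesteps this entirely: it does \emph{not} try to fix the optimum. It makes a single discrete jump, sending the point at $o(P)$ to $(0,\sqrt{x_o^2+y_o^2})$ (same distance from the origin, so the mechanism's cost for that point is unchanged) and the paired point at $(-x,0)$ to $(x+2x_o,0)$ (same distance from $o(P)$, so the cost of $o(P)$ for that point is unchanged). The inequality $r(Q)\ge r(P)$ is then obtained algebraically from two facts you never invoke: (a) the \cmp\ ratio with a correct prediction is at most $\sqrt{2}$ (inherited from coordinatewise median), and (b) $d\bigl((0,\sqrt{x_o^2+y_o^2}),\,o(P)\bigr)\le \sqrt{2}\,x_o$ since $y_o\ge x_o$. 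These combine to give $r(P)\le C^u(f,Q)/C^u(o(P),Q)\le C^u(f,Q)/C^u(o(Q),Q)=r(Q)$, with strict inequality precisely when $o(Q)\ne o(P)$. Thus the dichotomy in the lemma's conclusion arises naturally: if the optimum moves, you land in the strict-improvement branch; if not, you stay in $\coaC(c)$ with one fewer point at $o$ and iterate.
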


The next lemma shows that for an instance $P$ in the \coa \ family with points only on the $x$-axis and the $+y$ half axis, there exists another instance in the \famP \ family with weakly worse approximation ratio than that of $P$ for the consistency guarantee.

\begin{restatable}{rLem}{lemgeotoaxis}
\label{lem:geotoaxis}
For any confidence $c \in [0,1)$ and $n$ points $P \in \coaC(c)$ such that $p \in A_x \cup A_{+y}$ for all $p \in P$, there exists either $Q \in \famPC(c)$ such that $\approxratio(Q, \fopt(Q), c) \geq \approxratio(P, \fopt(P), c)$ or $Q'$ such that $\approxratio(Q', \fopt(Q'), c) > \approxratio(P, \fopt(P), c)$.
\end{restatable}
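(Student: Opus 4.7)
The plan is to split into two cases based on a cluster-count inequality: in one case I build $Q \in \famPC(c)$ directly; in the other I perturb $P$ to strictly improve the ratio. Normalize so $y_o(P) = 1$ and denote the three clusters of $P$ by $a$ points at $(x_2,0)$, $b$ at $(-x_1,0)$, and $u$ at $(0,y_2)$. Membership $P \in \coaC(c)$ gives the equidistance $x_2 = x_1 + 2x_o$, the strict inequality $a > b$, and the CMP-at-origin constraints, in particular $a+b \geq u+cn$. Setting the $x$- and $y$-gradients of $\sum_i d(p_i, o)$ to zero at $o = (x_o, 1)$ and using $x_2 - x_o = x_1 + x_o$, the two first-order optimality conditions read, with $D_1 = \sqrt{(x_1+x_o)^2+1}$ and $D_2 = \sqrt{x_o^2 + (y_2-1)^2}$:
\[
\frac{a+b}{D_1} = \frac{u(y_2-1)}{D_2}, \qquad \frac{(a-b)(x_1+x_o)}{D_1} = \frac{u\,x_o}{D_2}.
\]
Introducing $\alpha = (a+b)/u$ and $\beta = (a-b)/u$, dividing and squaring these identities pins down $(x_1+x_o)^2 = (\alpha^2-1)/(1-\beta^2)$.

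For the direct construction, let $Q$ consist of $(a+b)/2$ points at each of $(\pm(x_1+x_o), 0)$ and $u$ points at $(0,1)$, with $\hat{o}(Q) = o(Q) = (0,1)$. I would verify $Q \in \famPC(c)$: its CMP output is the origin by symmetry in $x$ together with the inherited $y$-median inequality $a+b \geq u+cn$; and its opt is at $(0,1)$ by a subgradient argument on the $u$-cluster located at opt, which reduces to $(x_1+x_o) \geq \sqrt{(a+b)^2-u^2}/u$, a direct consequence of the identity above since $\beta > 0$. Cross-multiplying $r(Q, o(Q), c) \geq r(P, o(P), c)$ to $M_Q \cdot O_P \geq M_P \cdot O_Q$, and substituting both optimality conditions together with the identity to eliminate $x_o, y_2, D_1, D_2$, the difference $M_Q O_P - M_P O_Q$ factors as a positive multiple of $1 - \alpha^2\beta^2$; hence the inequality holds iff $\alpha\beta \leq 1$, i.e.\ $a^2 \leq b^2 + u^2$. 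In this regime, $Q$ witnesses the first alternative.

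In the complementary regime $a^2 > b^2 + u^2$, I would instead perturb $P$ by shifting the $u$ points from $(0,y_2)$ to $(0, y_2+\delta)$ for small $\delta > 0$, calling the result $Q'$. For sufficiently small $\delta$ the CMP output of $Q'$ remains $(0,0)$, because the counts in each $x$- and $y$-half-plane are preserved and the prediction $o(Q')$ shifts only slightly. The envelope theorem then gives that at $\delta = 0$ the mechanism cost grows at rate $u$ and the optimal cost at rate $u(y_2-1)/D_2$; a calculation parallel to the first case, again using both optimality conditions and the identity $(x_1+x_o)^2 = (\alpha^2-1)/(1-\beta^2)$, shows the sign of $\tfrac{d}{d\delta} r(Q', o(Q'), c)$ at $\delta=0$ equals the sign of $(x_1+x_o) - \alpha$, which is positive precisely when $\alpha\beta > 1$. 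Hence $r(Q', o(Q'), c) > r(P, o(P), c)$ for some small $\delta > 0$, yielding the second alternative.

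The main obstacle is the algebraic simplification common to both cases: the clean threshold $\alpha\beta \leq 1$ surfaces only after both optimality conditions are used simultaneously, since it is the combined identity $(x_1+x_o)^2 = (\alpha^2-1)/(1-\beta^2)$ that lets the cross-multiplied inequality (respectively, the perturbation derivative) factor as a positive multiple of $1-\alpha^2\beta^2$ (respectively, of $(x_1+x_o) - \alpha$, which has the same sign). A secondary subtlety is verifying that the direct construction's $Q$ has its opt genuinely at $(0,1)$ and not strictly below; this is where the subgradient argument and the second optimality condition of $P$ are used in tandem.
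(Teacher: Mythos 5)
Your route is genuinely different from the paper's (which moves the top cluster onto $o(P)$, keeps the $x$-clusters fixed, and runs a calculus argument on a function $g(c,\lambda)$ while pivoting on the lower-bound instance of Lemma~\ref{lem:lowerbound}), and your Case~1 checks out: carrying the two first-order conditions and the identity $X^2=(\alpha^2-1)/(1-\beta^2)$ (with $X=x_1+x_o$) through the cross-multiplication, the difference $M_QO_P-M_PO_Q$ does reduce to $D_2u^2\sqrt{\alpha^2-1}\bigl(\alpha\sqrt{1-\beta^2}-\sqrt{\alpha^2-1}\bigr)$, which is nonnegative exactly when $\alpha\beta\leq 1$, and the subgradient condition $D_1\geq\alpha$ for $o(Q)=(0,1)$ follows from the same identity. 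The gap is in Case~2. Writing $M=(a+b)X+(a-b)x_o+uy_2$ and $O=(a+b)D_1+uD_2$ and using both optimality conditions, the quantity whose sign controls $\tfrac{d}{d\delta}r$ at $\delta=0$ is
\[
uOD_1-M(a+b)\;=\;u^2X\bigl[\alpha(X-\alpha)-x_o\,\beta(\alpha-1)\bigr],
\]
not a positive multiple of $X-\alpha$. The subtracted term $x_o\beta(\alpha-1)$ is strictly positive throughout your Case~2 (where $\beta>0$ and $\alpha>1$, and $x_o>0$ by property~(2) of Definition~\ref{def:coa}), while $X-\alpha\to 0^+$ as $\alpha\beta\to 1^+$; so for $\alpha\beta$ slightly above $1$ your upward perturbation strictly \emph{decreases} the ratio. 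You appear to have dropped the contribution of $(a-b)x_o+u(y_2-1)$ to $M$.

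The repair is fortunately cheap, and it also explains why the error is invisible in practice: you never used property~(3) of Definition~\ref{def:coa} (``no move towards opt''). That property forces both medians of $P\cup P'$ to be tight, which pins the cluster sizes to $a=u=\tfrac{1-c}{2}n$ and $b=cn$ (this is exactly the step where the paper asserts $|L|=cn$ and $|U|=|R|=\tfrac{1-c}{2}n$). Then $\alpha\beta=(a^2-b^2)/u^2=1-b^2/a^2\leq 1$ automatically, Case~2 is vacuous, and your Case~1 alone proves the lemma. Alternatively, one could salvage Case~2 by perturbing in whichever direction the correctly computed derivative indicates and treating the degenerate case $\alpha(X-\alpha)=x_o\beta(\alpha-1)$ separately. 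As written, however, the argument does not go through in the regime it claims to cover, so either add the observation that $a=u$ or fix the derivative computation.
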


We now shift our focus back to the robustness guarantee. The next lemma  states that, for the robustness guarantee, if we have an instance $P$ in the \coa \ family where points are located only on the $x$-axis, $+y$ half axis or at the optimal location, then there is another instance $Q$ in the \famP \ family with weakly worse approximation ratio. Note that such an instance $P$ is the result of Lemma~\ref{lem:firstytox}. 

\begin{restatable}{rLem}{lemytogm}
\label{lem:ytogm}
For any confidence $c \in [0,1)$ and $n$ points $P \in \coaR(c)$ such that  $p \in A_x \cup A_{+y}\cup \{o(P)\}$ for all $p \in P$, there exists either $Q \in \famPR(c)$ such that $\approxratio(Q, (0,0), c) \geq \approxratio(P, \fopt(P), c)$ or $Q'$ and prediction $\fpred$ such that $\approxratio(Q', \fpred, c) > \approxratio(P, (0,0), c)$.
\end{restatable}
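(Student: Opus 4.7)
The plan is to transform $P$ into an instance in $\famPR(c)$ through a sequence of local modifications, each of which either weakly increases the approximation ratio of $\cmp$ or produces an auxiliary instance $Q'$ with strictly greater ratio (matching the lemma's second conclusion). The argument is the robustness analog of the combined Lemmas~\ref{lem:doublerotation} and~\ref{lem:geotoaxis}, but with one extra wrinkle: the hypothesis allows points of $P$ to coincide with $o(P)$ in the interior of the top-right quadrant, whereas the consistency statement of Lemma~\ref{lem:geotoaxis} already starts from points on $A_x \cup A_{+y}$ only. This off-axis cluster must be absorbed onto the $y$-axis before the remaining structure can be put in $\famPR(c)$ form.

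First, I would move the cluster at $o(P) = (x_o, y_o)$ onto the $+y$ axis. Let $k$ be the number of points in $P$ at $o(P)$; I replace these by $k$ copies of $(0, \tilde{y})$ for an appropriately chosen $\tilde{y} > 0$. Using property 4d of $\coa$ (the $x$-clusters are equidistant from opt) together with the fact that in the robustness setting the $cn$ phantom points sit at $(0,0)$, I argue that (i) the coordinatewise median of $P \cup P'$ stays at the origin, since shifting points from $x = x_o > 0$ to $x = 0$ cannot move the $x$-median away from $0$, and $\tilde{y}$ can be tuned to leave the $y$-median unchanged, and (ii) the new geometric median lies on the $+y$ axis, because after the shift the off-axis first-order contributions in the $x$-direction cancel. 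Since the moved points are now strictly closer to the mechanism output $(0,0)$, the mechanism's utilitarian cost strictly decreases; the optimal cost also changes, and picking $\tilde{y}$ carefully (or invoking the strict-inequality branch) guarantees that the ratio weakly or strictly worsens.

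Second, I would merge the remaining $+y$ half-axis points into the new opt cluster. With opt at $(0, y_o')$ and the remaining off-axis points at $(\pm x_i, 0)$, any $+y$ point at $(0, y) \neq (0, y_o')$ can be slid toward $(0, y_o')$: by convexity of the Euclidean norm, applied as in the proof of Lemma~\ref{lem:ooa}, the optimal cost weakly decreases while the mechanism's cost either stays fixed or changes in a direction that only increases the ratio. Third, I would symmetrize the $x$-axis clusters and rescale. After the previous step, the points form three clusters at $(0, y_o'), (-x_1, 0), (x_2, 0)$ with $x_1 = x_2$ forced by property 4d once $x_{\fopt} = 0$; if the $\pm x$ cluster sizes differ, moving a single point from the larger to the smaller cluster further weakly worsens the ratio, and rescaling so that $y_o' = 1$ preserves the ratio and lands in $\famPR(c)$.

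The main obstacle is the first step: simultaneously keeping the mechanism output at the origin, pushing the new optimum onto the $y$-axis, and weakly worsening the ratio depends on a delicate balance among the geometric median (which determines the optimum), the coordinatewise median (which determines the mechanism output), and the choice of $\tilde{y}$. If no single $\tilde{y}$ makes all three conditions hold simultaneously, the lemma's alternative conclusion, namely a strictly worse auxiliary instance $Q'$ under some prediction $\fpred$ that need not lie in $\famPR(c)$, provides the necessary slack to conclude.
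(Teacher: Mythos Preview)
Your plan goes in the opposite direction from the paper's, and that direction creates a genuine gap. The paper moves the $+y$ cluster \emph{toward} $o(P)$ (then horizontally shifts the whole instance by $(-x_o,0)$), whereas you move the $o(P)$ cluster onto the $+y$ axis. The paper's choice is not arbitrary: moving points along a ray through the optimum leaves the optimum fixed by Lemma~\ref{lem:optunchanged}, so only a ratio calculation $\Delta_f/\Delta_o \leq \sqrt{2}$ is needed, and since the assumed ratio is already at least $\tfrac{\sqrt{2c^2+2}}{1-c}\geq\sqrt{2}$, the ratio weakly worsens. The final horizontal shift then simultaneously puts the new mechanism output and the optimum on the $y$-axis and, because property~4d gives $x_2-x_o = x_o+x_1$, makes the $x$-clusters symmetric at $\pm(x_1+x_o)$, landing in $\famPR(c)$.

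Your Step~1, by contrast, does not keep the optimum on the $y$-axis. After you move the $k$ copies of $(x_o,y_o)$ to $(0,\tilde y)$, the only off-axis points left are the clusters at $(-x_1,0)$ and $(x_2,0)$ with $x_2 = x_1 + 2x_o > x_1$ and, in the robustness case, sizes $|L|=(k-c)n$ and $|R|=\bigl(\tfrac{1+c}{2}-k\bigr)n$ that generically differ. At any candidate optimum $(0,y^\star)$ with $y^\star>0$, the $x$-component of the first-order condition is
\[
|R|\,\frac{x_2}{\sqrt{x_2^2+y^{\star 2}}}\;-\;|L|\,\frac{x_1}{\sqrt{x_1^2+y^{\star 2}}},
\]
which does not vanish in general (for example, when $|L|=|R|$ it is strictly positive for every $y^\star>0$ since $x\mapsto x/\sqrt{x^2+a}$ is increasing). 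So ``the off-axis first-order contributions in the $x$-direction cancel'' is false, the new geometric median is pulled to the right of the $y$-axis, and your downstream Steps~2--3 (which assume $x_o'=0$) do not apply. Your Step~3 also misapplies property~4d: that property says $x_o+x_1=x_2-x_o$ for the \emph{original} optimum $x_o>0$, which gives $x_1\neq x_2$; it does not force $x_1=x_2$ after you have tried to relocate the optimum. The fallback to ``invoke the strict-inequality branch'' is not an argument here, since you have not exhibited any concrete $Q'$ with larger ratio.
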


Finally, the following two lemmas combine the above lemmas and show how to convert an instance $P$ in the \coa \ family to an instance $Q$ in the \famP \ family while weakly increasing the approximation ratio for the consistency and robustness guarantees, respectively.

\begin{restatable}{rLem}{lemcoaconsistency}
\label{lem:coaconsistency}
For any $c \in [0,1)$  and points $P \in \coaC(c)$, there exists points $Q$ such that either $\approxratio(Q, \fopt(Q), c)>  \approxratio(P, \fopt(P), c)$ or  $Q \in \famPC(c)$ and $ \approxratio(Q, \fopt(Q), c) \geq   \approxratio(P, \fopt(P), c)$. 
\end{restatable}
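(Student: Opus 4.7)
The plan is to prove Lemma~\ref{lem:coaconsistency} by a three-step chaining argument, feeding the output of each of Lemma~\ref{lem:firstytox}, Lemma~\ref{lem:doublerotation}, and Lemma~\ref{lem:geotoaxis} into the next one. Each of these lemmas strips away one remaining structural degree of freedom that separates $\coaC(c)$ from $\famPC(c)$, while guaranteeing that the approximation ratio does not decrease.

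Concretely, starting from an arbitrary $P \in \coaC(c)$, I first invoke Lemma~\ref{lem:firstytox}. Its conclusion is either (i) some $Q$ with $r(Q, o(Q), c) > r(P, o(P), c)$, in which case the first disjunct of the claim is already established and the proof terminates, or (ii) an instance $Q_1 \in \coaC(c)$ with $r(Q_1, o(Q_1), c) \geq r(P, o(P), c)$ and no points on the $-y$ half axis. In subcase (ii), $Q_1$ satisfies the hypothesis of Lemma~\ref{lem:doublerotation}, so applying that lemma to $Q_1$ again yields either a strictly worse instance (and we are done) or an instance $Q_2 \in \coaC(c)$ with $r(Q_2, o(Q_2), c) \geq r(Q_1, o(Q_1), c) \geq r(P, o(P), c)$ and with all points on $A_x \cup A_{+y}$ (in particular, no points coincide with the optimum). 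Finally, $Q_2$ satisfies the hypothesis of Lemma~\ref{lem:geotoaxis}, whose conclusion directly produces either a strictly worse instance (matching the first disjunct) or an instance $Q \in \famPC(c)$ with $r(Q, o(Q), c) \geq r(Q_2, o(Q_2), c) \geq r(P, o(P), c)$ (matching the second disjunct).

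Essentially all of the technical work is carried by the three cited lemmas; the present proof is an orchestration step that checks two things. First, the output format of each lemma must match the hypothesis of the next: Lemma~\ref{lem:firstytox} produces points on $A_x \cup A_{+y} \cup \{o(Q_1)\}$, which is exactly what Lemma~\ref{lem:doublerotation} consumes, and the latter produces points on $A_x \cup A_{+y}$, which is exactly what Lemma~\ref{lem:geotoaxis} consumes. Second, transitivity of $\geq$ and of $>$ (combined with the fact that $> $ composed with $\geq$ is still $>$) ensures that the two-option conclusion propagates cleanly through the composition. The only potential subtlety will be recognizing that whenever any intermediate lemma returns a strictly worse instance, that instance already witnesses the first disjunct of the claim and no further reduction into $\famPC(c)$ is needed; this is immediate from the disjunctive form of the target statement, so the argument genuinely reduces to a bookkeeping verification rather than any new geometric reasoning.
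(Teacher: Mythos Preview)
Your proposal is correct and matches the paper's own proof essentially line for line: the paper also chains Lemma~\ref{lem:firstytox}, then Lemma~\ref{lem:doublerotation}, then Lemma~\ref{lem:geotoaxis}, at each step either obtaining a strictly worse instance (and stopping) or feeding the structured output into the next lemma. Your observations about the hypotheses lining up and about transitivity of the inequalities are exactly the bookkeeping the paper relies on.
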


\begin{restatable}{rLem}{lemcoarobustness}
\label{lem:coarobustness}
For any $c \in [0,1)$  and points $P \in \coaR(c)$, there exists points $Q$ and prediction $\fpred$ such that either $ \approxratio(Q, \fpred, c)>  \approxratio(P, (0,0), c)$ or  $Q \in \famPR(c)$ and $ \approxratio(Q, (0,0), c) \geq   \approxratio(P, (0,0), c)$. 
\end{restatable}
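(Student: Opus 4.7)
The plan is to prove Lemma~\ref{lem:coarobustness} by chaining together the transformation lemmas already established in Section~\ref{sec:coa}, mirroring the structure used for Lemma~\ref{lem:coaconsistency} but routed through Lemma~\ref{lem:ytogm} (the robustness analogue of Lemma~\ref{lem:geotoaxis}). Concretely, I would start from an arbitrary $P \in \coaR(c)$ and apply Lemma~\ref{lem:firstytox} in its robustness form. This yields points $Q^{(1)}$ with one of two outcomes: either $\approxratio(Q^{(1)}, (0,0), c) > \approxratio(P, (0,0), c)$, in which case setting $Q = Q^{(1)}$ and $\fpred = (0,0)$ already witnesses the first disjunct of the conclusion, or $Q^{(1)} \in \coaR(c)$ with $\approxratio(Q^{(1)}, (0,0), c) \geq \approxratio(P, (0,0), c)$ and every point of $Q^{(1)}$ lies in $A_x \cup A_{+y} \cup \{o(Q^{(1)})\}$.

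In the latter case, $Q^{(1)}$ satisfies exactly the hypothesis of Lemma~\ref{lem:ytogm}. Applying that lemma produces either some $Q'$ together with a prediction $\fpred$ with $\approxratio(Q', \fpred, c) > \approxratio(Q^{(1)}, (0,0), c)$, or an instance $Q \in \famPR(c)$ with $\approxratio(Q, (0,0), c) \geq \approxratio(Q^{(1)}, (0,0), c)$. Combining these with the inequality $\approxratio(Q^{(1)}, (0,0), c) \geq \approxratio(P, (0,0), c)$ from the first step, we obtain in the first sub-case a pair $(Q', \fpred)$ with $\approxratio(Q', \fpred, c) > \approxratio(P, (0,0), c)$, and in the second sub-case an instance $Q \in \famPR(c)$ with $\approxratio(Q, (0,0), c) \geq \approxratio(P, (0,0), c)$; either conclusion matches the statement of the lemma.

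The argument is thus a two-step chaining in which the only thing to verify carefully is that the disjunctions propagate correctly: a strict-improvement branch at any step, combined with a weak-inequality branch at the other, still produces a strict improvement over $P$, while the conjunction of the two weak-inequality branches lands exactly in $\famPR(c)$ with the required weak approximation guarantee. Since no new construction is needed beyond invoking the prior lemmas, I do not anticipate a major obstacle; the only pitfall is ensuring that the output of Lemma~\ref{lem:firstytox} in the non-strict branch genuinely lies in the domain of Lemma~\ref{lem:ytogm} (namely, still in $\coaR(c)$ with support in $A_x \cup A_{+y} \cup \{o(Q^{(1)})\}$), which is exactly what Lemma~\ref{lem:firstytox} guarantees. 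This makes the proof essentially a short bookkeeping argument tying together the two transformation lemmas.
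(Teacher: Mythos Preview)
Your proposal is correct and follows essentially the same approach as the paper: apply the robustness branch of Lemma~\ref{lem:firstytox} to land in $\coaR(c)$ with support contained in $A_x \cup A_{+y} \cup \{o(Q^{(1)})\}$, then invoke Lemma~\ref{lem:ytogm}, chaining the disjunctions exactly as you describe.
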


\subsubsection{The worst-case instance  is in \famP} We are ready to prove our main lemma for this section.
\label{sec:mainworstcase}

\begin{proof}[Proof of Lemma~\ref{lem:mainworstcase}]
 Let $\alpha' = \max_{P \in \ooaC(c) \cup \coaC(c) } \approxratio(P, \fopt(P), c)$, $\alpha = \max_{P \in \famPC(c) } \approxratio(P, \fopt(P), c)$, $\beta' = \max_{P \in \ooaR(c) \cup \coaR(c)} \approxratio(P, (0,0), c)$, and $\beta = \max_{P \in \famPR(c)} \approxratio(P, (0,0), c)$.  By Lemma~\ref{lem:coaooa}, \cmp \ with confidence $c$ is $\alpha'$-consistent and $\beta'$-robust. Let $P_{a} \in \ooaC(c) \cup \coaC(c) $ and $P_b \in \ooaR(c) \cup \coaR(c)$ be two instances such that $\approxratio(P_a, \fopt(P_a), c) = \alpha'$ and $\approxratio(P_b, (0,0), c) = \beta'$. 
 
 By Lemma~\ref{lem:coaconsistency} and Lemma~\ref{lem:ooa}, there exists $n$ points $Q$ such that either $ \approxratio(Q, \fopt(Q), c)>  \alpha'$ or  $Q \in \famPC(c)$ and $ \approxratio(Q, \fopt(Q), c) \geq   \alpha'$. If $ \approxratio(Q, \fopt(Q), c)>  \alpha'$, this is a contradiction with the  $\alpha'$-consistency of the mechanism. Otherwise, $Q \in \famPC(c)$ and $\alpha \geq  \approxratio(Q, \fopt(Q), c) \geq   \alpha'$ and \cmp \ is $\alpha$-consistent. Similarly, by Lemma~\ref{lem:coarobustness} and Lemma~\ref{lem:ooa}, there exists $n$ points $Q$ and prediction $\fpred$ such that either $ \approxratio(Q, \fpred, c)>  \beta'$ or  $Q \in \famPR(c)$ and $ \approxratio(Q, (0,0), c) \geq   \beta'$. If $ \approxratio(Q, \fpred, c)>  \beta'$, this is a contradiction with the  $\beta'$-robustness of the mechanism. Otherwise, $Q \in \famPR(c)$ and $\beta \geq  \approxratio(Q, (0,0), c) \geq   \beta'$ and \cmp \ is $\beta$-consistent.
\end{proof}

%%%%%%%%%%%%%%%%%%%%%%%%%%%%%

\section{Conclusion and Future Directions}
Our main thesis in this paper is that the learning-augmented design framework, which has motivated a surge of recent work on ``algorithms with predictions'', can have a transformative impact on the design and analysis of mechanisms in multiagent systems. Such mechanisms face crucial information limitations that hinder the designer from reaching desired outcomes: the most obvious among them is that the designer does not know the participating agents' private information, and these agents may choose to strategically misreport it. Therefore, machine-learned predictions have the potential to address these information limitations and help mechanisms achieve improved performance when the predictions are accurate. To support our thesis, we focused on the canonical problem of facility location and proposed new mechanisms that leverage predictions to achieve a trade-off between robustness and consistency. Depending on how confident the designer is regarding the prediction, our mechanisms provide her with a parameterized menu of options that yield Pareto optimal robustness and consistency guarantees.

There is a loose connection between the learning-augmented mechanism design framework and the line of work on Bayesian mechanism design, which assumes that the agents' private values are drawn from a distribution. This is analogous to the average case analysis for algorithms, which assumes that the input is drawn from a distribution, and the crucial difference with the learning-augmented framework is that it provides no robustness guarantees: in Bayesian mechanism design, the performance of a mechanism is evaluated \textit{in expectation} over this randomness and there are no worst-case performance guarantees in general. This is in contrast to our setting, where we seek performance guarantees even if the predictions are arbitrarily inaccurate and also provide approximation guarantees as a function of the prediction error. Another difference comes from the fact that a lot of the  work on Bayesian mechanism design relaxes the notion of incentives and rather than aiming for strategyproofness, which requires that reporting truthfully is a dominant strategy, it instead aims for Bayesian incentive compatibility, which requires that truthful reporting is an optimal strategy in expectation over the randomness, and assuming everyone else also reports truthfully. Finally, learning these distributions requires a large amount of data about a specific setting (e.g., data about past agents' values for the exact same item that is currently being sold in an auction), whereas machine learning can utilize heterogeneous data (e.g., data about past agents' values for similar items that were previously sold) to obtain predictions, like the ones used in the learning-augmented framework.

The impact of the learning-augmented framework on the design of mechanisms is largely unexplored, so there are multiple important open problems along this research direction. For example, one can revisit any mechanism design problem (both with and without monetary payments) for which we know that strategyproofness leads to impossibility results, aiming to better understand how predictions could help us overcome these obstacles, without compromising the incentive guarantees. We therefore anticipate that this framework will give rise to an exciting new literature that studies classic mechanism design problems from a new perspective.

\appendix
\section{Missing Analysis from Section~\ref{sec:technicallemma}}
\label{sec:apptechnicallemma}

Before we present the missing analysis from Section~\ref{sec:technicallemma}, we introduce some helpful lemmas. The following lemma from \cite{GH21} states that moving a point either away or towards the optimal location (without going past it) does not change the optimal location.

\begin{lemma}[\cite{GH21}] 
\label{lem:optunchanged}
For any $n$ points $P$ and $i \in [n]$, if $p_i \neq \fo(P)$ and  $p'_i \in \{\fo(P)  + t(p_i-\fo(P)) | t \in R_{\geq 0}\}$, then $\fo(P_{-i}, p'_i) = \fo(P)$.
\end{lemma}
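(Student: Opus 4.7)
The plan is to exploit the convexity of the utilitarian social cost $C^u(f,P)=\tfrac{1}{n}\sum_{j}\|f-p_j\|$ together with its first-order (subdifferential) optimality condition at the geometric median $\fo(P)$. Since $C^u(\cdot,P)$ is convex, $\fo(P)$ is a minimizer if and only if $0$ lies in its subdifferential at $\fo(P)$. Concretely, for each $j$ I would pick a ``subgradient witness'' $v_j\in\partial_f\|f-p_j\|\big|_{f=\fo(P)}$, where $v_j=(\fo(P)-p_j)/\|\fo(P)-p_j\|$ whenever $p_j\neq\fo(P)$ and $v_j$ is an arbitrary vector in the closed unit ball otherwise; optimality of $\fo(P)$ gives $\sum_j v_j=0$.

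The core step is then to show that this same witness still certifies optimality of $\fo(P)$ for the modified instance $(P_{-i},p'_i)$. Writing $p'_i=\fo(P)+t(p_i-\fo(P))$ with $t\geq 0$, the key ray computation is
\[
\frac{\fo(P)-p'_i}{\|\fo(P)-p'_i\|}=\frac{-t(p_i-\fo(P))}{t\,\|p_i-\fo(P)\|}=\frac{\fo(P)-p_i}{\|\fo(P)-p_i\|}\quad\text{when }t>0,
\]
so the unique subgradient direction at $\fo(P)$ associated with the new point $p'_i$ equals $v_i$. When $t=0$ we have $p'_i=\fo(P)$, so $\partial_f\|f-p'_i\|\big|_{f=\fo(P)}$ is the entire closed unit ball, which again trivially contains $v_i$. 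In either case, keeping $v_j$ for $j\neq i$ and setting $v'_i:=v_i$ yields $\sum_{j\neq i}v_j+v'_i=0$, which certifies that $\fo(P)$ minimizes $C^u(\cdot,(P_{-i},p'_i))$.

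To upgrade ``$\fo(P)$ is a minimizer'' to the stated equality $\fo(P_{-i},p'_i)=\fo(P)$, I would invoke uniqueness of the geometric median, which holds whenever the agents are not all collinear (the relevant regime for the paper) and can otherwise be enforced by the standard tie-breaking convention implicit in the definition of $\fo$. The main subtlety, and the one step where care is genuinely needed, is the non-smooth case where $p_j=\fo(P)$ for some indices $j$: the subgradient-rather-than-gradient formulation above is exactly what makes that case go through without changes. Everything else reduces to the one-line ray calculation displayed above, so I do not anticipate any further obstacle.
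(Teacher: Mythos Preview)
The paper does not prove this lemma; it is quoted from \cite{GH21} without proof, so there is no in-paper argument to compare against. Your subdifferential approach is the standard and correct way to establish the result: the key observation that $(\fo(P)-p'_i)/\|\fo(P)-p'_i\|=(\fo(P)-p_i)/\|\fo(P)-p_i\|$ for $t>0$, together with the unit-ball subdifferential at $t=0$, shows that the very same subgradient certificate $\sum_j v_j=0$ carries over to the modified instance, so $\fo(P)$ remains a global minimizer of the convex function $C^u(\cdot,(P_{-i},p'_i))$.

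The only point that deserves a slightly firmer treatment is uniqueness. You appeal to ``the standard tie-breaking convention implicit in the definition of $\fo$,'' but the paper never states one. What actually rescues you is that the geometric median is unique unless all points of the instance are collinear, and even in the collinear case the set of minimizers is a segment on that common line; moving $p_i$ along the ray from $\fo(P)$ keeps the instance on the same line and (since $p_i$ does not cross $\fo(P)$) does not shrink the minimizer set away from $\fo(P)$. In the places where the paper invokes this lemma (Lemmas~\ref{lem:towards}, \ref{lem:geotoaxis}, \ref{lem:ytogm}) the points are never all collinear anyway, so uniqueness is automatic. With that clarification your argument is complete.
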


The next lemma uses Lemma~\ref{lem:optunchanged} to show that moving a point towards the optimal location strictly worsens the approximation ratio if this movement does not cause the mechanism's output to change and if this output is not optimal. This lemma is used to move points at arbitrary locations to one of the axes.

\begin{restatable}{rLem}{lemtowards}
\label{lem:towards}
For any $n$ points $P$, prediction $\fpred$ and confidence $c \in [0,1)$, $i \in [n]$, $p'_i \in (p_i, \fopt(P)]$, if $\falg(P, \fpred, c) = \falg((P_{-i}, p'_i), \fpred, c)$ and $\approxratio(P, \fpred, c) > 1$, then $\approxratio((P_{-i}, p'_i), \fpred, c) > \approxratio(P, \fpred, c)$.
\end{restatable}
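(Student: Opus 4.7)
The plan is to quantify how the mechanism's cost and the optimal cost each change when $p_i$ is shifted to $p'_i$, and then combine this with $r>1$ to turn the comparison of changes into a strictly larger new ratio. Since $p'_i$ lies on the segment from $p_i$ to $o(P)$, Lemma~\ref{lem:optunchanged} gives $o(P_{-i}, p'_i) = o(P) =: o$, and by hypothesis the mechanism output is also unchanged, so I would write $f := \falg(P, \fpred, c) = \falg((P_{-i}, p'_i), \fpred, c)$ and set $\delta := d(p_i, p'_i) > 0$.

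The key observation is that the shift reduces $d(p_i, o)$ by exactly $\delta$ (since $p'_i$ lies on the straight segment strictly closer to $o$), while by the triangle inequality it can reduce $d(p_i, f)$ by at most $\delta$. Hence the utilitarian optimal cost drops by exactly $\delta/n$, whereas the mechanism's cost drops by at most $\delta/n$. Setting $A := C^u(f, P)$ and $B := C^u(o, P)$, the hypothesis $r(P, \fpred, c) = A/B > 1$ gives $A > B$, and the previous step yields
\[
r((P_{-i}, p'_i), \fpred, c) \;\geq\; \frac{A - \delta/n}{B - \delta/n} \;>\; \frac{A}{B},
\]
where the last inequality is a routine cross-multiplication using $A > B$.

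The only real subtlety is the degenerate case $B - \delta/n = 0$: this forces $p'_i = o$ and every other point of $P$ to coincide with $o$, after which the hypothesis $r > 1$ ensures $f \neq o$, so $C^u(f, (P_{-i}, p'_i)) > 0 = C^u(o, (P_{-i}, p'_i))$ and the new ratio is infinite, so the conclusion still holds. I expect this edge case to be essentially the only obstacle; the rest is a direct calculation that amounts to viewing the two costs as scalar functions of how far $p_i$ slides along the line toward $o$, with slope exactly $-1/n$ for the optimum and slope of magnitude at most $1/n$ for $f$.
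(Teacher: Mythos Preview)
Your proposal is correct and follows essentially the same approach as the paper: invoke Lemma~\ref{lem:optunchanged} to fix $o$, use the hypothesis to fix $f$, note the optimal cost drops by exactly $d(p_i,p'_i)/n$ while the mechanism's cost drops by at most that amount (triangle inequality), and conclude via $A>B$. Your write-up is slightly more explicit (the cross-multiplication and the degenerate $B-\delta/n=0$ case) than the paper's, but the argument is the same.
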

\begin{proof}
Let $Q = (P_{-i}, p'_i)$. From $P$ to $Q$, we only move one point towards $\fopt(P)$ (but not going past it) without changing the output of the mechanism. By definition, $\falg(P, \fpred, c) = \falg(Q, \fpred, c)$, and $\fopt(P) = \fopt(Q)$ by Lemma~\ref{lem:optunchanged}.

Since the $\fo, \fa$ locations are not changed, the amount of decrease in the social cost with respect to the optimal location is $d(p_i, p'_i) > 0$. If the social cost of the mechanism's output does not change or increase, then we have $r(Q, \fpred,c ) > r(P, \fpred,c )$. If the social cost of the mechanism's output decreases, then the amount of decrease is $|d(\fa(P, \fpred, c), p'_i) - d(\fa(P, \fpred, c), p_i)|$. By triangle inequality, $d(p_i, p_i') \geq |d(\fa(P, \fpred, c), p'_i) - d(\fa(P, \fpred, c), p_i)| $. This means that the decrease in social cost with respect to the optimal location is larger than or equal to the decrease with respect to the mechanism's output location. If $r(P, \fpred, c) > 1$, then we have $r(Q, \fpred, c) > r(P, \fpred, c)$.
\end{proof}

The next lemma uses the convexity of the distance function to show that if two points move closer to each other, then the sum of their distance to a third point decreases. 

\begin{restatable}{rLem}{lemconvexity}
\label{lem:convexity} 
Consider three distinct points $p_1, p_2,$ and $p_3$, then for any $\epsilon \in (0, 1)$, $d(p_1, p_3) + d(p_2, p_3) \geq d(p_1 +  \epsilon(p_2 - p_1), p_3) + d(p_2 +  \epsilon(p_1 - p_2), p_3)$. Moreover, if $p_1, p_2,$ and $p_3$ are not collinear, $d(p_1, p_3) + d(p_2, p_3) > d(p_1 +  \epsilon(p_2 - p_1), p_3) + d(p_2 +  \epsilon(p_1 - p_2), p_3)$.
\end{restatable}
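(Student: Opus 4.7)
The plan is to reduce the lemma to the standard convexity of the Euclidean distance function $g(x) := d(x, p_3) = \|x - p_3\|$. First I would rewrite the two shifted points as convex combinations:
\[
p_1 + \epsilon(p_2 - p_1) = (1-\epsilon)p_1 + \epsilon p_2, \qquad p_2 + \epsilon(p_1 - p_2) = \epsilon p_1 + (1-\epsilon) p_2.
\]
Both are strict convex combinations of $p_1$ and $p_2$ since $\epsilon \in (0,1)$.

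Next I would apply convexity of $g$ (which follows from the triangle inequality applied to the norm) to each of the two shifted points:
\[
g((1-\epsilon)p_1 + \epsilon p_2) \leq (1-\epsilon)\, g(p_1) + \epsilon\, g(p_2),
\]
\[
g(\epsilon p_1 + (1-\epsilon) p_2) \leq \epsilon\, g(p_1) + (1-\epsilon)\, g(p_2).
\]
Summing these two inequalities, the coefficients of $g(p_1)$ and $g(p_2)$ on the right-hand side each add up to $1$, yielding $g(p_1') + g(p_2') \leq g(p_1) + g(p_2)$, which is exactly the weak inequality claimed.

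For the strict inequality when $p_1, p_2, p_3$ are not collinear, I would observe that $g$ is \emph{strictly} convex on any line $L$ that does not pass through $p_3$: restricted to such a line, $g$ is of the form $t \mapsto \sqrt{(t-a)^2 + h^2}$ with $h > 0$ (the distance from $p_3$ to $L$), which is strictly convex. Since $p_1, p_2, p_3$ are not collinear, the line through $p_1, p_2$ does not contain $p_3$, so both of the above convexity inequalities are strict (each shifted point is a strict convex combination of the two endpoints), and the sum is strict as well. The main obstacle is essentially just being careful about the strict convexity condition, which boils down to the standard fact that $\|\cdot\|$ is strictly convex away from the line through $p_3$; no delicate calculation is needed.
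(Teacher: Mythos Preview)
Your proposal is correct and takes essentially the same approach as the paper: both rewrite the shifted points as convex combinations, apply convexity of $x \mapsto d(x,p_3)$ to each, and sum. Your justification for the strict case (strict convexity of $t \mapsto \sqrt{(t-a)^2 + h^2}$ when $h>0$) is slightly more detailed than the paper's, which simply asserts strictness when the three points are not collinear.
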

\begin{proof}
Note that $ d(p_1 +  \epsilon(p_2 - p_1), p_3) = d((1-\epsilon)p_1 +  \epsilon p_2, p_3) \leq (1 - \epsilon)d(p_1, p_3) + \epsilon d(p_2, p_3)$ where the inequality is by the convexity of the distance function and is strict if $p_1, p_2,$ and $p_3$ are not collinear. Similarly, $d(p_2 +  \epsilon(p_1 - p_2), p_3) \leq (1 - \epsilon)d(p_2, p_3) + \epsilon d(p_1, p_3)$ and we conclude that $d(p_1 +  \epsilon(p_2 - p_1), p_3) + d(p_2 +  \epsilon(p_1 - p_2), p_3) \geq d(p_1, p_3) + d(p_2, p_3)$ (with the inequality being strict if $p_1, p_2,$ and $p_3$ are not collinear).
\end{proof}

Now we present the missing analysis of the section.

\lemclusteringpair*
\begin{proof}
Assume $p_i, p_j \in A_{+x}$ are two non-overlapping points on the $+x$-axis, i.e., $p_i = (x_i, 0)$ and $p_j = (x_j, 0)$ with $x_i > x_j > 0$. Let $Q$ be the instance obtained by moving $p_i$ and $p_j$ towards each other by a distance of  $\epsilon (x_i -x_j)$ where $\epsilon$ is sufficiently small so that the optimal location remains strictly in the top-right quadrant, i.e.,  $x_\fo(Q), y_\fo(Q) > 0$. 
Since $p_i$ and $p_j$ are on the same half-axis, they remain on this same half-axis when we move them towards each other.  Since $p_i$ and $p_j$ remain in the same half-axis and the optimal location remains in the same quadrant, we have that the output of the mechanism does not change, i.e., $\fa(P, \fpred(P), c) = \fa(Q, \fpred(Q), c)$ both when the predictions are $\fpred(P) = \fo(P)$ and $\fpred(Q) = \fo(Q)$ and when they are $\fpred(P) = \fpred(Q) = (0, 0)$.

Since  $p_j$ has distance to the origin which increases by $\epsilon$ and $p_i$ has distance to the origin that decreases by $\epsilon$ and since the output of the mechanism does not change, we have that $C^u(\fa(Q, \fpred(Q), c), Q) = C^u(\fa(P, \fpred(P), c), P)$ both when the predictions are $\fpred(P) = \fo(P)$ and $\fpred(Q) = \fo(Q)$ and when they are $\fpred(P) = \fpred(Q) = (0, 0)$.

Since $\fo(P)$ is not on one of the axes, $p_i$, $p_j$, and $\fo(P)$  are not collinear. By Lemma~\ref{lem:convexity}, we get that $d(p_i, \fo(P)) + d(p_j, \fo(P)) > d(p_i +  \epsilon(p_j - p_i), \fo(P)) + d(p_j +  \epsilon(p_i - p_j), \fo(P)).$ This implies that 
$C^u(\fo(P), P) > C^u(\fo(P), Q) \geq C^u(\fo(Q), Q) $. Since $C^u(\fa(Q, \fpred(Q), c), Q) = C^u(\fa(P, \fpred(P), c), P)$ and 
$C^u(\fo(Q), Q) < C^u(\fo(P), P)$, $r(Q, \fpred(Q), c) > r(P, \fpred(P), c)$. The cases where  $p_i$ and $p_j$ are both on one of the three other half axis follow identically by symmetry.
\end{proof}

\lemequalsizes*
\begin{proof}
Assume $P$ is an instance that satisfies the lemma assumptions and is also such that $|\{p \in P: p \in A^<_{-x}\}| \geq |\{p \in P: p \in A^>_{+x} \cup \{o(P)\}\}|$.  Note that the points are either on the axes or on the optimal location, which is in the top-right quadrant. We consider the instance $Q = \{ q_1, q_2, \ldots, q_n\}$ such that if $p_i \in A_y$, then we have $q_i = p_i$ and if $p_i \notin A_y $, we have $q_i  = p_i - (\epsilon, 0)$ for a small enough $\epsilon$ such that the optimal locations remains in top-right quadrant ($y_o(P), x_o(P) > 0$) and such that $\epsilon \leq \min_{i:p_i \notin A_y} |x_i|$.

Since $\fopt(Q)$ stays in the top-right quadrant as $\fopt(P)$ and the points $q_i$ remain on the same half-axes as $p_i$, we have $\falg(Q,  \fpred(Q), c) = \falg(P, \fpred(P), c) = (0,0)$ both when the predictions are $\fpred(P) = \fo(P), \fpred(Q) = \fo(Q)$ and when they are $\fpred(P) = \fpred(Q) = (0, 0)$.

We now consider the location $q^* = \fo(P) - (\epsilon,0)$.  First, note that $\sum _{p_i \notin A_y, p_i \in P} d(p_i, \fopt(P)) = \sum _{q_i \notin A_y, q_i \in Q} d(q_i,q^*)$  because every point that is not on the $y$ axis has been moved to the left by $\epsilon$, and $q^*$ is also obtained from moving $\fo(P)$ to the left by $\epsilon$. Additionally, we have $\sum _{p_i \in A_y, p_i \in P} d(p_i, \fopt(P)) > \sum _{q_i \in A_y, q_i \in Q} d(q_i,q^*)$ because the points on the $y$ axis are not moved while $q^*$ moves closer to the $y$ axis. Therefore, we obtain that $C^u(q^*, Q) < C^u(\fopt(P), P)$, and we have $C^u(\fopt(Q), Q) \leq C^u(q^*, Q) < C^u(\fopt(P), P)$.

Now we look at the social cost with respect to the mechanism's output location, which is the same for $P$ and $Q$. 
Note that there is the assumption that $|\{p \in P: p \in A^<_{-x}\}| \geq |\{p \in P: p \in A^>_{+x} \cup \{o(P)\}\}|$. We let $n^<_{-x} = |\{p \in P: p \in A^<_{-x}\}|$, $n^>_{+x} = |\{p \in P: p \in A^>_{+x} \}|$ and $n_{o} = |\{p \in P: p = o(P)\}|$. Then we have $n^<_{-x} \geq n^>_{+x} + n_o$.
For the points on the $y$ axis, those points do not move at all when we create $Q$ from $P$, so their social costs do not change.
On the left hand side of $y$ axis, we have increased the social cost of the mechanism by $n^<_{-x} \epsilon$. 
On the right hand side of $y$ axis, the social costs of the points are decreased by at most $(n^>_{+x} + n_o)\epsilon$ because the movement of each point to the left by an $\epsilon$ distance can improve the mechanism's cost by at most $\epsilon$. 
Therefore, because we have $n^<_{-x} \geq n^>_{+x} + n_o$,  the social cost with respect to the mechanism's output location does not decrease, so we have $C^u(\fa(Q,  \fpred(Q), c), Q) \geq C^u(\fa(P,  \fpred(P), c), P)$. Combined with $ C^u(\fopt(Q), Q) <  C^u(\fopt(P), P)$, we get that $r(Q, \fpred(Q), c) > r(P, \fpred(P), c)$ both when the predictions are $\fpred(P) = \fo(P), \fpred(Q) = \fo(Q)$ and when they are $\fpred(P) = \fpred(Q) = (0, 0)$. By symmetry, the case where $|\{p \in P: p \in A^<_{-y}\}| \geq |\{p \in P: p \in A^>_{+y} \cup \{o(P)\}\}|$ follows from the same argument.
\end{proof}

\lemcoaooa*
\begin{proof}
Let $P$ and $\fpred$ be an arbitrary instance of a set of $n$ points and a prediction and let $c \in [0,1)$.  We assume without loss of generality that $\fa(P, \fpred, c) = (0,0)$ and $y_o(P) \geq x_o(P) \geq 0$. First, note that since $\fa(P, \fpred, c) = (0,0)$, we also have that $\fa(P, (0,0), c) = (0,0)$. Thus, $\approxratio(P, \fpred, c) = \approxratio(P, (0,0), c)$. If there exists a point $p \in P$ such that $p \notin A_x \cup A_y \cup \{o(P)\}$, then $p$ can be moved  towards $\fo(P)$ without changing the outcome of the mechanism. Thus, by Lemma~\ref{lem:towards}, either $\approxratio(P, \fpred, c) = 1$ or there exists $Q$ such that $\approxratio(Q, \fpred(Q), c) > \approxratio(P, \fpred, c)$  both when the predictions are $\fpred(Q) = \fo(Q), \fpred = \fo(P)$ and when they are $\fpred(Q) = \fpred(P) = (0, 0)$.

We now assume that $p \in A_x \cup A_y \cup \{o(P)\}$ for all $p \in P$.  If $x_o(P) = 0$, then note that $p \in   A_x \cup A_y \cup \{ o(P)\} = A_x \cup A_y$ for all $p \in P^*$. Thus $P \in \ooaC(n, c)$ when $\fpred = o(P)$ and $P \in \ooaR(n, c)$ when $\fpred = (0,0)$. We now assume that $x_o(P) > 0$, so $y_o(P) \geq x_o(P) > 0$. If there is a point  $p$ that can be moved  towards $\fo(P)$ without changing the outcome of the mechanism, then, again, by Lemma~\ref{lem:towards}, either $\approxratio(P, \fpred, c) = 1$ or there exists $Q$ such that $\approxratio(Q, \fpred(Q), c) > \approxratio(P, \fpred, c)$. Next, we consider the five subproperties of property (4) for the \coa \ family.

Assume  that there is no $x_1, x_2, y_1, y_2 \geq 0$ such that $p \in \{(-x_1, 0), (x_2, 0), (0, -y_1), (0, y_2), o(P)\}$ for all $p \in P$. Since $p \in A_x \cup A_y \cup \{o(P)\}$ for all $p \in P$, there are two points $p_i, p_j$ on the same half-axis and by Lemma~\ref{lem:clusteringpair}, there is an instance $Q$ such that $\approxratio(Q, \fpred(Q), c) > \approxratio(P, \fpred, c)$  both when the predictions are $\fpred(Q) = \fo(Q), \fpred = \fo(P)$ and when they are $\fpred(Q) = \fpred(P) = (0, 0)$. Next, assume that $p \in \{(-x_1, 0), (x_2, 0), (0, -y_1), (0, y_2), o(P)\}$ for all $p \in P$. If $|\{p \in P : p \in A^<_{-x}\} | \geq  |\{p \in P : p \in A^>_{+x} \cup \{ \fopt(P) \} \}|$ or $ |\{p \in P : p \in A^<_{-y}\}| \geq  |\{p \in P : p \in A^>_{+y} \cup \{ \fopt(P) \} \}|$, then by Lemma~\ref{lem:equalsizes} there is an instance $Q$ such that $\approxratio(Q, \fpred(Q), c) > \approxratio(P, \fpred, c)$  both when the predictions are $\fpred(Q) = \fo(Q), \fpred = \fo(P)$ and when they are $\fpred(Q) = \fpred(P) = (0, 0)$.

For the  fourth subcondition, assume that there is a pair of points $p_i = (-x_1, 0) \in P$ and $p_j = (x_2, 0) \in P$ such that $x_o + x_1 \neq x_2 - x_o$. Without loss of generality, we assume $x_o + x_1 < x_2 - x_o $.
Then we construct an instance $Q$ by moving this pair of points in $P$ to the left by $\epsilon$. 
That is, we let $q_i=(-x_1 - \epsilon, 0)$ and $q_j = (x_2 - \epsilon, 0)$ for  $\epsilon$ small enough so that the optimal location remains in the top right quadrant, and let $q_k = p_k$ for any $k \neq i,j$. This improves the optimal cost.   Meanwhile, the mechanism's output and social cost  remain unchanged. Therefore, we have found an instance $Q$ such that $\approxratio(Q, \fpred(Q), c) > \approxratio(P, \fpred, c)$  both when the predictions are $\fpred(Q) = \fo(Q), \fpred = \fo(P)$ and when they are $\fpred(Q) = \fpred(P) = (0, 0)$. The fifth subcondition follows identically by symmetry.
    
We conclude that for any instance $P$ and prediction $\fpred = \fo(P)$ such that $P \notin \ooaC(n, c) \cup \coaC(n, c)$, we have found  an instance  $Q$ such that $\approxratio(Q, \fo(Q), c) > \approxratio(P, \fo(P), c)$, i.e., $Q$ has a worse approximation ratio. Thus the worst-case instance $P$ for the consistency of \cmp \ is such that $P \in \ooaC(n, c) \cup \coaC(n, c)$, which implies that $\alpha = \max_{P \in \ooaC(n, c) \cup \coaC(n, c) } \approxratio(P, \fopt(P), c)$. Similarly, for any instance $P$ and prediction $\fpred$ such that $P \notin \ooaR(n, c) \cup \coaR(n, c)$, we have found an instance  $Q$ such that $\approxratio(Q, (0,0) c) > \approxratio(P, (0,0), c) =  \approxratio(P, \fpred, c)$, thus $\beta = \max_{P \in \ooaR(n, c) \cup \coaR(n, c) } \approxratio(P, (0,0), c)$.
\end{proof}

% lemooa removed from appendix

\lemfirstytox*
\begin{proof}
Let $P \in \coaC(n, c)$. First of all, if $p  \in A_x \cup A_{+y} \cup \{o(P)\}$ for all $p \in P$, then let $Q=P$ and we are done. Now we consider the case where there is some $p \in P$ such that $p \notin A_x \cup A_{+y} \cup \{o(P)\}$, which means that there exists $p \in P$ such that $p \in A^<_{-y}$. 
Because $P \in \coaC(n, c)$, there exist $x_1, x_2, y_1, y_2 \geq 0$ such that for all $p\in P$, we have $p \in \{(-x_1, 0), (x_2, 0), (0, -y_1), (0, y_2), o(P)\}$. 
Additionally, we have $ | \{ p \in P: p \in A^<_{-x} \} | < |\{p \in P: p \in A^>_{+x}\cup \{o(P)\}|$ and that $|\{p \in P : p \in A^<_{-y}\}| <   |\{p \in P : p \in A^>_{+y} \cup \{o(P)\}\}|$ from the definition of $\coaC(n, c)$.
We now create another instance $P^*$ from $P$ by moving one point $(0, -y_1)$ to $(-y_1, 0)$ in $P$, and keeping all other points in $P$ the same. Note that the optimal location of $P^*$ can move elsewhere and it may or may not satisfy $y_o(P^*) \geq x_o(P^*) > 0$. But we can still show that $\falg(P^*, o(P^*), c) = \falg(P, o(P), c) = (0,0)$.

Because $\falg(P, o(P), c) = (0,0)$, we know that $ |\{p \in P: p \in A^>_{+x}\cup \{o(P)\} \}| < \thres -cn$. Therefore, $  | \{ p \in P: p \in A^<_{-x} \} | < |\{p \in P: p \in A^>_{+x}\cup \{o(P)\} \}| < \thres -cn$. Even if the the $cn$ points on the predicted location are now to the left of the $y$-axis in $P^*$, we would still have $ |  \{ p^* \in P^*: p^* \in A^<_{-x} \} | + cn < \thres$. Thus, the $x$-coordinate of the mechanism's output location would still be zero on $P^*$. Using a similar argument and the condition of $\coaC(n,c)$ that $\{ p \in P: p \in A^<_{-y} \} | < |\{p \in P: p \in A^>_{+y}\cup \{o(P)\}|$, we can show that the $y$-coordinate of the mechanism's output location is still zero in $P^*$. Thus we conclude that $\falg(P^*, o(P^*), c) = \falg(P, o(P), c) = (0,0)$, and $C^u(\falg(P, o(P), c), P) = C^u(\falg(P, o(P), c), P^*) = C^u(\falg(P^*, o(P^*), c) , P^*)$. 

Now consider the optimal location $\fopt(P)$ of the instance $P$, and note that $y_o(P) \geq x_o(P) > 0$ because $P \in \coaC(n, c)$. We have $d((0, -y_1), \fopt(P)) \geq d((-y_1, 0), \fopt(P))$ because $y_o(P) \geq x_o(P) > 0$. Therefore, we have $C^u(\fopt(P^*), P^*) \leq C^u(\fopt(P), P^*) \leq C^u(\fopt(P), P)$. 
We discuss two cases based on whether the optimal location moves when we create $P^*$ from $P$.

We first discuss the case where $\fopt(P^*) \neq \fopt(P)$. In this case, we would simply have 
$C^u(\fopt(P^*), P^*) < C^u(\fopt(P), P^*) \leq C^u(\fopt(P), P)$ and that $r(P^*, o(P^*), c) > r(P, o(P), c)$. Therefore, once the optimal location moves, we can find an instance with strictly worse approximation ratio.

We now discuss the remaining case where $\fopt(P^*) = \fopt(P)$. Then, from the above we already have that $y_o(P^*) \geq x_o(P^*) > 0$ and that $r(P^*, o(P^*), c) \geq r(P, o(P), c)$. If we further have $y_1 \neq x_1$, then there are points $(-x_1, 0), (-y_1, 0) \in P^*$ and we can apply Lemma~\ref{lem:clusteringpair} on the instance $P^*$ to find an instance $Q$ such that $r(Q, o(Q), c) > r(P^*, o(P^*), c) \geq r(P, o(P), c)$. Otherwise, we have $y_1 = x_1$. In this case, because we are only moving a point from a cluster to another cluster, we have $p^* \in \{(-x_1, 0), (x_2, 0), (0, -y_1), (0, y_2), o(P^*)\}$ for each $p^* \in P^*$.
Now the instance $P^*$ has satisfied the first two properties of the CA family that $\falg(P^*, o(P^*), c) = (0,0)$ and that $y_o(P^*) \geq x_o(P^*) > 0$. 
If we do not satisfy property (3) of the CA family, then by Lemma~\ref{lem:towards} we can construct an instance $Q$ by moving some point $p \in P^*$ towards $\fopt(P)$ without changing the mechanism's output location and achieve $r(Q, o(Q), c) > r(P^*, o(P^*), c)  \geq r(P, o(P), c) $. We now verify the five subproperties of property (4) of the CA family. Since we just move one point from the cluster on the $-y$ half axis to the cluster on the $-x$ half axis, all the five subconditions clearly hold true except for the second one, which we will verify now. 
Because $P \in \coaC(n,c)$, we have that $ | \{ p \in P: p \in A^<_{-x} \} | < |\{p \in P: p \in A^>_{+x}\cup \{o(P)\}|$. Therefore, after moving one point from the $-y$ half axis to the $-x$ half axis, we would clearly have $ | \{ p^* \in P^*: p^* \in A^<_{-x} \} | \leq |\{p^* \in P^*: p^* \in A^>_{+x}\cup \{o(P^*)\}|$. If we end up with an equality, then we can apply Lemma~\ref{lem:equalsizes} and find an instance $Q$ with $r(Q, o(Q), c) > r(P^*, o(P^*), c)  \geq r(P, o(P), c)$. Thus, the second subcondition is also verified. We then conclude that in this case when $\fo(P^*) = \fo(P)$, either $P^* \in \coaC(n,c)$ with $r(P^*, o(P^*), c)  \geq r(P, o(P), c)$, or we can find $Q$ such that $r(Q, o(Q), c) > r(P, o(P), c)$.
Therefore, if the optimal location doesn't move, moving a point from the $-y$ half axis cluster to the $-x$ half axis cluster will create a new instance again in the CA family with weakly worse approximation ratio, or we can simply find an instance $Q$ with strictly worse approximation ratio. While the optimal location doesn't move, we can iteratively move points from the $-y$ axis cluster to the $-x$ axis cluster while weakly increasing the approximation ratio, until we have no points to move on the $-y$ axis, i.e. we end up with points $Q \in \coaC(n, c)$,  $r(Q, \fopt(Q), c) \geq r(P, \fopt(P), c)$, and $q  \in A_x \cup A_{+y} \cup \{o(Q)\}$ for all $q \in Q$.
Therefore we have proved the lemma when $P \in \coaC(n,c)$. If $P \in \coaR(n,c)$, the proof follows almost identically. 
\end{proof}

\lemsizeleft*
\begin{proof}
Let $c \in [0, 1)$ and $P \in \coaC(n,c)$ such that  $p \in A_x \cup A_{+y}\cup \{o(P)\}$ for all $p \in P$.
Let $k = |\{p \in P : p \in A_{-x}\}|$. Since $P \in \coaC(n,c)$, moving any $p \in P$ towards $o(P)$ would move $\fa(P, \fo(P), c)$. Thus, $|\{p \in P : p \in A_{-x} \cup A_{+x}\}| = |\{p \in P : p \in A_{-x} \cup A_{+y}\}| = \lceil(1+c)n/2 \rceil$ and we get that $|\{p \in P : p \in  A_{+x}\}| = |\{p \in P :  \cup A_{+y}\}| = \lceil(1+c)n/2\rceil - k$ and $|\{p \in P: p = o(P)\}| = n - 2 (\lceil(1+c)n/2\rceil - k) - k = n - 2\lceil(1+c)n/2\rceil + k$. Finally, note that $n - 2\lceil(1+c)n/2\rceil + k \leq n - 2(n/2) + k = k = |\{p \in P : p \in A_{-x}\}|$ and we get the desired result.
\end{proof}

\lemdoublerotation*
\begin{proof}
The main idea is that we can move one point from $\fopt(P)$ to the $+y$ axis, while also moving one point from the $-x$ axis to the $+x$ axis. The new instance will either have a strictly worse approximation ratio, or the ratio is weakly worse but the instance remains in the \coa \ family, so we can apply this paired movement iteratively until there are no points on the optimal location (which means all points are on the axes). 
Also, note that by Lemma~\ref{lem:sizeleft}, we have $|\{p \in P : p \in A_{-x}\}| \geq |\{p \in P : p = o(P)\}|$, so there are enough points on $-x$ axis for us to perform the paired movement and remove all points on the optimal location.

We now formalize the above argument. Let $P = \{ p_1, \ldots, p_n \} \in \coaC(n,c)$ such that  $p \in A_x \cup A_{+y}\cup \{o(P)\}$ for all $p \in P$. 
By Lemma~\ref{lem:sizeleft} we have $|\{ p \in P: p \in A_{-x} \}| \geq |\{p \in P : p = o(P)\}|$. To simplify notation, we let $x_o = x_o(P)$ and $y_o = y_o(P)$. We now define another instance $Q = \{ q_1, \ldots, q_n\}$ to be the instance such that for some $x \geq 0$ and $i_1, i_2 \in N$, we have 
\begin{itemize}
    \item $p_{i_1} = (-x, 0)$ and $ q_{i_1} = (x + 2x_o, 0)$,
    \item $p_{i_2} = \fopt(P) = (x_o, y_o)$ and $q_{i_2} = (0, \sqrt{x_o^2 + y_o^2})$, and 
    \item for $i \neq i_1, i_2$, $p_{i} = q_{i}$.
\end{itemize}

We will now argue that either $Q \in \coaC$ with $r(Q, \fopt(Q), c) \geq r(P, \fopt(P), c)$ and $q \in A_x \cup A_{+y}$ for all $q \in Q$, or we can find an instance with strictly worse approximation ratio than that of $P$.

We first claim that $\falg(Q, o(Q), c) = (0,0)$. Since $P \in \coaC(n,c)$, we have $\falg(P, o(P), c) = (0,0)$, $y_o(P) \geq x_o(P) > 0$, and that there exist $x_1, x_2, y_1, y_2 \geq 0$ such that for all $p \in P$, we have $p \in \{(-x_1, 0), (x_2, 0), (0, -y_1), (0, y_2), o(P)\}$. Additionally, we have $ |\{p \in P : p \in A^<_{-x}\}|  <  |\{p \in P : p \in A^>_{+x} \cup \{ \fopt(P) \} \}|$ and 
$ |\{p \in P : p \in A^<_{-y}\}| <   |\{p \in P : p \in A^>_{+y} \cup \{ \fopt(P) \} \}|$. 
This means that $ |\{p = (x_p, y_p)\in P : x_p < 0\}|  <   |\{p = (x_p, y_p) \in P : x_p > 0 \}| < \thres - cn$. Otherwise, after we count the $cn$ points at the predicted location $o(P)$, the output location of the mechanism $\falg(P, o(P), c)$ will have a positive $x$-coordinate. Note that the instances $P$ and $Q$ only differ at $i_1, i_2$. If $\fopt(Q)$ stays in the top-right quadrant, then it's clear that $\falg(Q, o(Q), c) = (0,0)$. 
However, even if $\fopt(Q)$ moves to the left of the $y$-axis, we have $ |\{q = (x_q, y_q)\in Q : x_q < 0\}| + cn < \thres$ and clearly also that $ |\{q = (x_q, y_q)\in Q : x_q > 0\}|  < \thres$. Thus, the algorithm's output $\falg(Q)$ must have its $x$-coordinate equal to zero. By a similar argument using the fact that $ |\{p \in P : p \in A^<_{-y}\}| <  |\{p \in P : p \in A^>_{+y} \cup \{ \fopt(P) \} \}| < \thres - cn$, we can prove that $\falg(Q, o(Q), c)$ has a $y$-coordinate equal to zero as well. Therefore we conclude that $\falg(Q, o(Q), c)= \falg(P, o(P), c) = (0,0)$, even if the location of $\fopt(Q)$ can be different from that of $\fopt(P)$. 

Next, we prove that $r(Q, o(Q), c) \geq r(P, o(P), c)$. Let $A = \sum _{i \neq i_1} d(p_i, \falg(P, o(P), c))$ and $B = \sum _{i=1}^n d(p_i, \fopt(P)) =  \sum _{i \neq i_2 } d(p_i, \fopt(P))$. The approximation ratio of \cmp \ with a correct prediction $(\fpred = \fopt(P))$ is at least as good as the approximation of the coordinatewise median mechanism (without predictions), so at most $\sqrt{2}$. Thus,
    $A + d(p_{i_1}, \falg(P, o(P), c)) \leq \sqrt{2}B$.
Therefore, 
\begin{align}\label{ineq:dr1}
    [A + d(p_{i_1}, \falg(P, o(P), c))]d(q_{i_2}, \fopt(P)) \leq \sqrt{2}B  d(q_{i_2}, \fopt(P)).
\end{align}

But we have $y_o \geq x_o$, so $d(q_{i_2}, \fopt(P)) \leq \sqrt{2} x_o$. Therefore, from inequality (\ref{ineq:dr1}) we have 
\begin{align}\label{ineq:dr2}
    [A + d(p_{i_1}, \falg(P, o(P), c))]d(q_{i_2}, \fopt(P)) \leq 2B x_o = B(d(q_{i_1}, \falg(P, o(P), c)) - d(p_{i_1}, \falg(P, o(P), c))).
\end{align}

Using inequality (\ref{ineq:dr2}), we have
\begin{align}\label{ineq:dr3}
    &[A + d(p_{i_1}, \falg(P, o(P), c))][B + d(q_{i_2}, \fopt(P)) ] \nonumber\\
  = \ &  AB + Bd(p_{i_1}, \falg(P, o(P), c)) + [A + d(p_{i_1}, \falg(P, o(P), c)) ] d(q_{i_2}, \fopt(P)) \nonumber\\
  \leq \ & AB + Bd(p_{i_1}, \falg(P, o(P), c)) + B(d(q_{i_1}, \falg(P, o(P), c)) - d(p_{i_1}, \falg(P, o(P), c))) \nonumber \\
  = \ & AB + Bd(q_{i_1}, \falg(P, o(P), c)) \nonumber \\
  = \ & (A + d(q_{i_1}, \falg(P, o(P), c))) B.
\end{align}

Note that $A = \sum _{i \neq i_1} d(p_i, \falg(P, o(P), c)) = \sum _{i \neq i_1} d(q_i, \falg(Q, o(Q), c))$ because $\falg(P, o(P), c) = \falg(Q, o(Q), c)$ and $d(p_{i_2}, \falg(Q, o(Q), c)) = d(q_{i_2}, \falg(Q, o(Q), c))$. Additionally, $B = \sum _{i \neq i_2 }{d(p_i, \fopt(P))} = \sum _{i \neq i_1, i_2}{d(q_i, \fopt(P)) + d(p_{i_1}, \fopt(P))} = \sum _{i \neq i_1, i_2} d(q_i, \fopt(P)) + d(q_{i_1}, \fopt(P)) =  \sum _{i \neq i_2} d(q_i, \fopt(P))$.

Thus, using inequality (\ref{ineq:dr3}), we have 
\begin{align}\label{ineq:drfinal}
    r(P, o(P),c) & = \frac{A + d(p_{i_1}, \falg(P, o(P),c))}{B} \nonumber\\
    & \leq \frac{A + d( q_{i_1},\falg(P, o(P),c))}{B + d(q_{i_2}, \fopt(P))}  \hspace{0.5cm} \text{by (\ref{ineq:dr3})} \nonumber \\
    & = \frac{ \sum _{i \neq i_1} d(q_i, \falg(Q, o(Q), c)) + d( q_{i_1},\falg(Q, o(Q), c))}{\sum _{i \neq i_2} d(q_i, \fopt(P)) + d(q_{i_2}, \fopt(P))} \nonumber \\
    & = \frac{C^u(\falg(Q), Q)}{C^u( \fopt(P), Q)} \nonumber \\
    & \leq \frac{C^u( \falg(Q), Q  )}{C^u( \fopt(Q), Q )}  \\
    & = r(Q, o(Q),c) .
\end{align}

The inequality (\ref{ineq:drfinal}) is strict unless $\fopt(P) = \fopt(Q)$. Therefore, if $\fopt(P) \neq \fopt(Q)$ we immediately have $r(P, o(P),c) < r(Q, o(Q),c)$, and we are done. 
Otherwise, we have $\fopt(P) = \fopt(Q)$, so the instance $Q$ now already satisfies the first two properties of the \coa \ family that $\falg(Q, o(Q), c) = (0,0)$ and $\fopt(Q) = \fopt(P) = (x_o, y_o)$ with $y_o \geq x_o > 0$. If property (3) of the \coa \ family does not hold true, then we have a point $q \in Q$ such that moving it towards $\fopt(Q)$ would not change the output location of the mechanism. Then by Lemma~\ref{lem:towards} we can move $q$ towards $\fopt(Q)$ to create an instance $Q'$ with $r(Q', o(Q'), c) > r(Q, o(Q), c) \geq r(P, o(P), c) $, and we are done again. 
We now verify the five subproperties of property (4) for the \coa \ family. If the points are not clustered anymore on the axes in $Q$ after the paired movements, we can apply Lemma~\ref{lem:clusteringpair} to find an instance $Q'$ such that $r(Q', o(Q'), c) > r(Q, o(Q), c) \geq r(P, o(P), c) $. 
Therefore, we can assume that the paired movements move points from an existing cluster to another cluster, so the first, the fourth and the fifth subconditions clearly hold true. 
The second and the third subconditions can also be verified easily because we only reduce the number of points with negative $x$-coordinate, and we do not change the number of points with negative $y$-coordinate. 
Therefore, we conclude that by performing the paired movements, we can find an instance $Q$ either with $r(Q, o(Q), c) > r(P, o(Q), c)$ or we have $Q \in \coaC(n,c)$ with $r(Q, o(Q), c) \geq r(P, o(Q), c)$. 
We can then iteratively remove points from the optimal location and will finally reach an instance $Q \in \coaC(n,c)$ such that there are no points on the optimal location, i.e. we find an instance $Q \in \coaC(n,c)$, $r(Q, o(Q), c) \geq r(P, o(P), c)$ and $q \in A_x \cup A_{+y}$ for all $q \in Q$. 
Note that we have enough points on the $-x$ half axis to eliminate all points on the optimal location because we have argued before that $| \{ p \in P: p \in  A_{-x} \}| \geq |\{p \in P : p = o(P)\}|$, and this concludes the proof of the lemma.
\end{proof}

\lemgeotoaxis*
\begin{proof}
Let $P \in \coaC(n, c)$ be such that $p \in A_x \cup A_{+y}$ for all $p\in P$, so the points are all on $(-x, 0)$, $(0,x'')$, or $(x',0)$ for some $x, x',x'' > 0$. Without loss of generality, by rescaling, assume that $x'' = 1$.
Let $L, R, U \subseteq P$ be the points on  $(-x, 0)$, $(0,1)$, and $(x',0)$ respectively. Since $\fa(P, \fopt(P), c) = (0,0)$, we have  $|L| = cn$, $|U| = |R| = \frac{1-c}{2}n$. Let $Q$ be the instance that is the same as $P$ for all $i \notin U$, has $q_i = o(P)$ for all $i \in U$.

First note that by Lemma~\ref{lem:optunchanged}  we have $o(Q) = o(P)$. By Lemma~\ref{lem:lowerbound} we know that for any $c \in [0,1)$ there exists an instance $Q'$ such that $\approxratio(Q', \fopt(Q'), c) \geq \frac{\sqrt{2c^2 + 2}}{1 + c}$. If $\approxratio(P, \fopt(P), c) < \frac{\sqrt{2c^2 + 2}}{1 + c}$, then we have $\approxratio(Q', \fopt(Q'), c) > \approxratio(P, \fopt(P), c)$. Now assume $\approxratio(P, \fopt(P), c) \geq \frac{\sqrt{2c^2 + 2}}{1 + c}$. let $\Delta_f$ denote the cost decrease of the algorithm, i.e., $\Delta_f = C^u(\fa(P, \fopt(P), c), P)) - C^u(\fa(Q, \fopt(Q), c), Q)$, similarly, we let $\Delta_o = C^u(\fo(P), P)) - C^u(\fo(Q), Q).$ Showing that $\approxratio(Q, \fopt(Q), c) \geq \approxratio(P, \fopt(P), c)$ is equivalent to showing that $\frac{\sqrt{2c^2 + 2}}{1 + c} - \frac{\Delta_f}{\Delta_o} \geq 0$. To simplify the presentation of the lemma, we let  $(1-y_o)$ denote the points movement on $y$-axis and $\lambda(1-y_o)$ denote the points movement on $x$-axis. Then
\begin{align*}
    \Delta_f & = \frac{1-c}{2}n \cdot \lambda(1-y_o) - cn\lambda(1-y_o) + \frac{1-c}{2}n (1-y_o)\\
    \Delta_o &= \frac{1-c}{2}n\sqrt{(\lambda(1-y_o))^2 + (1-y_o)^2} = \frac{1-c}{2}n\sqrt{1+\lambda^2}(1-y_o)
    \end{align*}
    Let
$    g(c,\lambda) = \frac{\sqrt{2c^2 + 2}}{1 + c}  -\frac{\Delta_f}{\Delta_o} = \frac{\sqrt{2c^2+2}}{1+c} - \frac{(1-3c)\lambda+(1-c)}{(1-c)\sqrt{1+\lambda^2}}$.
Taking the first and second derivative w.r.t $\lambda$ we get
\[\frac{dg}{d\lambda} = - \frac{c\lambda-\lambda+1-3c}{\left(1-c\right)\left(\lambda^2+1\right)\sqrt{\lambda^2+1}} \quad \text{and} \quad \frac{d^2 g}{d\lambda^2} = -\frac{-2cx^2+2x^2+9cx-3x+c-1}{\left(-c+1\right)\left(x^2+1\right)^{\frac{5}{2}}}.\]
Solving $\frac{dg}{dx}= 0$ we get that $\lambda = \frac{1-3c}{1-c}$. 
If $c > \frac{1}{3}$, we have $\lambda < 0$ and since second derivative at $\lambda = \frac{1-3c}{1-c}$ is positive we get that for any $c > \frac{1}{3}$, $g(c, \lambda)$ is minimized at $\lambda = 0$. Which means that the movement on x-axis is $0$. In this case, we have $x_o(P) = 0$ and thus $P \in \ooaC(n, c)$. By Lemma~\ref{lem:ooa}, we then get that there exists $Q'$ such that either $ \approxratio(Q', \fopt(Q'), c) > \approxratio(P, \fopt(P), c)$ or $Q' \in \famPC(n, c) $ and  $ \approxratio(Q', \fopt(Q'), c) \geq \approxratio(P, \fopt(P), c)$.

Now if $c < \frac{1}{3}$,  we have $\lambda > 0$, and the second derivative of $g(c, \lambda)$ is positive, therefore given a specific $c $, $\lambda = \frac{1-3c}{1-c}$ is the minimizer. We then rewrite the function:
\[g(c) = \frac{\sqrt{2c^2 + 2}}{1 + c} - \frac{\sqrt{10c^2 - 8c +2}}{1-c}.\]
Again we take the derivative and set it to $0$,
\[\frac{dg}{dc} = \frac{2c-2}{\left(1+c\right)^2\sqrt{2+2c^2}}-\frac{6c-2}{\left(1-c\right)^2\sqrt{10c^2-8c+2}} = 0 \quad \Rightarrow c = 0, c = 0.301263\]
Again checking the second derivative we get that $c = 0$ is a minimizer and $c = 0.301263$ is a maximizer. Plug in $c = 0$ we get that
\[g(0) = \frac{\sqrt{2}}{1} - \frac{\sqrt{2}}{1} = 0 \quad \Rightarrow \quad \frac{\Delta_f}{\Delta_o} \leq \frac{\sqrt{2c^2 + 2}}{1 + c}.\]
and $\approxratio(Q, \fopt(Q), c) \geq \approxratio(P, \fopt(P), c)$.  By shifting all the points by $(-x_o(P), 0)$ to have $x_\fa(Q) = x_o(Q) = 0$, we obtain that  $Q \in \famPC(n, c)$.
\end{proof}

\lemytogm*

\begin{proof}
Let $P \in \coaR(n, c)$ be such that $p \in A_x \cup A_{+y}\cup \{o(P)\}$ for all $p\in P$, so the points are all on $(-x, 0)$, $(0,x'')$, $(x',0)$, or $o(P)$ for some $x, x',x'' > 0$. Without loss of generality, by rescaling, assume that $x'' = 1$.
Let $L, R, U,O \subseteq P$ be the points on  $(-x, 0)$, $(0,1)$, $(x',0)$, and $o(P)$ respectively. Let $k \in [0,1]$ be such that $|O| = kn$. Since $\fa(P, (0,0), c) = (0,0)$, we have  $|U|=|R| = (\frac{1+c}{2}-k)n$ and $|L| = (k-c)n$. Let $Q$ be the instance that is the same as $P$ for all $i \notin U$, and has $q_i = o(P)$ for all $i \in U$.

First note that by Lemma~\ref{lem:optunchanged} we have $o(Q) = o(P)$. By Lemma~\ref{lem:lowerbound} we know that for any $c \in [0,1)$ there exists an instance $Q'$ and prediction $\fpred$ such that $\approxratio(Q', \fpred, c) \geq \frac{\sqrt{2c^2 + 2}}{1 - c}$. If $\approxratio(P, (0,0), c) < \frac{\sqrt{2c^2 + 2}}{1 - c}$, then we have $\approxratio(Q', \fpred, c) > \approxratio(P, (0,0), c)$. Now assume $\approxratio(P, (0,0, c) \geq \frac{\sqrt{2c^2 + 2}}{1 - c}$. let $\Delta_f$ denote the cost decrease of the algorithm, i.e., $\Delta_f = C^u(\fa(P, (0,0),c), P) - C^u(\fa(Q, (0,0),c), Q)$, similarly, we let $\Delta_o = C^u(o(P), P) - C^u(o(Q), Q)$. The lemma statement is equivalent to $\frac{\sqrt{2c^2 + 2}}{1 - c} \geq \frac{\Delta_f}{\Delta_o}$. To simplify the presentation of the lemma, we let  $d$ denote the points movement on y-axis. Then
\begin{align*}
    \Delta_f &= (\frac{1+c}{2}-k)n \cdot d + (\frac{1+c}{2}-k)n \cdot x_o - (k-c)n \cdot x_o\\
    \Delta_o & = (\frac{1+c}{2}-k)n \sqrt{x_o^2 + d^2}
\end{align*}

Note that since Manhattan distance is a $\sqrt{2}$ approximation of the Euclidean distance, we have $r + d \leq \sqrt{2} \cdot \sqrt{x_o^2 + d^2}$. We get:

\begin{align*}
    \frac{\Delta_f}{\Delta_o} & = \frac{(\frac{1+c}{2}-k)n \cdot (d +r) - (k-c)n \cdot x_o}{(\frac{1+c}{2}-k)n \sqrt{x_o^2 + d^2}}\\
    &\leq  \frac{(\frac{1+c}{2}-k)n \cdot \sqrt{2} (\sqrt{x_o^2 + d^2}) - (k-c)n \cdot x_o}{(\frac{1+c}{2}-k)n \sqrt{x_o^2 + d^2}}\\
    &\leq \sqrt{2} - \frac{(k-c)n \cdot x_o}{(\frac{1+c}{2}-k)n \sqrt{x_o^2 + d^2}}\\
    & \leq \sqrt{2}
\end{align*}
Since $\approxratio(P, (0,0, c) \geq \frac{\sqrt{2c^2 + 2}}{1 - c} \geq \sqrt{2}$, we get that  $\approxratio(Q, \fopt(Q), c) \geq \approxratio(P, \fopt(P), c)$.
By shifting all the points by $(-x_o(P), 0)$ to have $x_\fa(Q) = x_o(Q) = 0$, we obtain that $Q \in \famPR(n, c)$.
\end{proof}

\lemcoaconsistency*
\begin{proof}
Let $c \in [0,1)$ and $n \in \N$. Consider an instance $P \in \coaC(n, c)$. By Lemma~\ref{lem:firstytox}, there exists $Q_1$ such that either $ \approxratio(Q_1, \fopt(Q_1), c)>  \approxratio(P, \fopt(P), c)$  or $Q_1 \in \coaC(n, c)$ and  $r(Q_1, \fopt(Q_1), c) \geq r(P, \fopt(P), c)$, and $q  \in A_x \cup A_{+y} \cup \{o(Q_1)\}$ for all $q \in Q_1$. By Lemma~\ref{lem:doublerotation}, there exists $Q_2$ such that either $ \approxratio(Q_2, \fopt(Q_2), c)>  \approxratio(Q_1, \fopt(Q_1), c)$  or $Q_2 \in \coaC(n, c) $ and $r(Q_2, \fopt(Q_2), c) \geq r(Q_1, \fopt(Q_1), c)$ and $q \in A_x \cup A_{+y}$ for all $q \in Q_2$. By Lemma~\ref{lem:geotoaxis}, there exists $Q_3$ such that either $ \approxratio(Q_3, \fopt(Q_3), c)>  \approxratio(Q_2, \fopt(Q_2), c)$  or $Q_3 \in  \famPC(n, c) $ and $r(Q_3, \fopt(Q_3), c) \geq r(Q_2, \fopt(Q_2), c)$. Thus, we have that  there either exists $n$ points $Q$ such that $ \approxratio(Q, \fopt(Q), c)>  \approxratio(P, \fopt(P), c)$ or that $\max_{Q \in \famPC(n, c) } \approxratio(Q, \fopt(Q), c) \geq \approxratio(P, \fopt(P), c)$.
\end{proof}

\lemcoarobustness*
\begin{proof}
Let $c \in [0,1)$ and $n \in \N$. Consider an instance $P \in \coaR(n, c)$. By Lemma~\ref{lem:firstytox}, there exists $Q_1$ such that either $ \approxratio(Q_1, (0,0), c)>  \approxratio(P, (0,0), c)$  or $Q_1 \in \coaR(n, c)$ and  $r(Q_1, (0,0), c) \geq r(P, \fopt(P), c)$, and $q  \in A_x \cup A_{+y} \cup \{o(Q_1)\}$ for all $q \in Q_1$. By Lemma~\ref{lem:ytogm}, there exists $Q_2$ and prediction $\fpred$ such that either $ \approxratio(Q_2, \fpred, c)>  \approxratio(Q_1, (0,0), c)$  or $Q_2 \in  \famPR(n, c) $ and $r(Q_2, (0,0), c) \geq r(Q_1, (0,0), c)$. Thus, we have  that  there either exists $n$ points $Q$ and prediction $\fpred$ such that $ \approxratio(Q, \fpred, c)>  \approxratio(P, (0,0), c)$ or that $\max_{Q \in \famPR(n, c) } \approxratio(Q, (0,0), c) \geq \approxratio(P, (0,0), c)$.
\end{proof}

\bibliographystyle{plainnat}
\bibliography{biblio}

\end{document}